\documentclass[a4paper]{article}

\title{The Accuracy Smoothness Dilemma in Prediction: a Novel Multivariate M-SSA Forecast Approach}
\author{Marc Wildi}

\usepackage{a4wide}
\usepackage{amsmath}
\usepackage{amsfonts} 
\usepackage{amsthm}
\usepackage{amssymb}
\usepackage[utf8]{inputenc}
\usepackage{hyperref}
\usepackage{float}
\newtheorem{Proposition}{Proposition}
\newtheorem{Corollary}{Corollary}
\newtheorem{Theorem}{Theorem}

\usepackage{Sweave}
\begin{document}

\maketitle

\begin{abstract}
\noindent


Forecasting presents a complex estimation challenge, as it involves balancing multiple, often conflicting, priorities and objectives. Conventional forecast optimization methods typically emphasize a single metric—such as minimizing the mean squared error (MSE)—which may neglect other crucial aspects of predictive performance. To address this limitation, the recently developed Smooth Sign Accuracy (SSA) framework extends the traditional MSE approach by simultaneously accounting for sign accuracy, MSE, and the frequency of sign changes in the predictor. This addresses a fundamental trade-off—the so-called accuracy-smoothness (AS) dilemma—in prediction. We extend this approach to the multivariate M-SSA, leveraging the original criterion to incorporate cross-sectional information across multiple time series. As a result, the M-SSA criterion enables the integration of various design objectives related to AS forecasting performance, effectively generalizing conventional MSE-based metrics.  To demonstrate its practical applicability and versatility, we explore the application of the M-SSA in three primary domains: forecasting, real-time signal extraction (nowcasting), and smoothing. These case studies illustrate the framework’s capacity to adapt to different contexts while effectively managing inherent trade-offs in predictive modeling.


\end{abstract}

~\\
~\\
Keywords: Forecasting, signal extraction, smoothing, zero-crossings, sign accuracy, mean-squared error, business-cycles.
\newpage


\section{Introduction}

Forecasting presents a complex estimation challenge, as it necessitates the consideration of various, often conflicting, priorities and objectives. In this, we focus on accuracy and smoothness (AS). Accuracy pertains to the estimation of the future level of a time series while smoothness serves to regulate `noisy' changes.  
Ideally, these two dimensions could be jointly optimized, resulting in a predictor that closely approximates the true, albeit unobserved, future observation, thereby minimizing the incidence of spurious changes. 
However, the relationship between accuracy and smoothness presents a predictive dilemma; when appropriately formalized, an enhancement in one dimension inevitably leads to a compromise in the other. 
In this study, we propose a novel multivariate framework that enables users to manage and balance both accuracy and smoothness effectively. Furthermore, we extend existing methodologies to incorporate the proposed tradeoff, allowing for a customization of traditional `benchmark' predictors in terms of AS performances.\\

Smoothness of a predictor can be characterized by various concepts, such as its curvature, which is a geometric measure based on the squared second-order differences of the predictor. In contrast, the SSA proposed in Wildi (2024,2026) introduces a novel concept of smoothness that focuses on the expected duration between sign changes or zero crossings of a (stationary, zero-mean) predictor. The analysis of zero-crossings in time series data was first introduced by Rice (1944), who established a correlation between the autocorrelation function (ACF) of a zero-mean stationary Gaussian process and the expected number of zero crossings within a specified time interval. In the context of non-stationary time series, sign changes in the growth rate signify transitions between phases of expansion and contraction. These alternating periods of growth and contraction can be linked to the business cycle, provided that the fluctuations exhibit sufficient magnitude and duration. 
To be classified as representative of a business cycle, consecutive zero-crossings of the growth rate must be separated by intervals potentially spanning several years. This requirement for duration necessitates a smooth trajectory of the corresponding economic indicator. The relationship between the smoothness of a time series and the mean duration between consecutive sign changes—termed the holding time (HT) by Wildi (2024)—is formalized in the work of Rice (1944). The new M-SSA integrates accuracy within its objective function and smoothness through its holding time constraint, resulting in the M-SSA predictor being derived as the outcome of a constrained optimization process.
\\

The traditional mean-squared error (MSE) forecast paradigm predominantly emphasizes accuracy, often at the expense of smoothness. In certain applications, this focus on accuracy can lead to excessive noise leakage, thereby increasing the likelihood of false alarms, as demonstrated in Wildi (2024). To address this issue, McElroy and Wildi (2019) propose a forecasting approach based on a forecast trilemma; however, their solution does not take zero-crossings into account. Wildi (2024,2026) introduces the SSA optimization criterion, which explicitly regulates the HT of a predictor through a smoothing constraint derived from Rice's foundational work. We posit that regulating the rate of sign changes (in the growth rate), while simultaneously maintaining optimal tracking accuracy as measured by minimal MSE performance, provides a viable alternative to traditional filtering and smoothing methods, aligning with the decision-making and control logic pertinent during phases of alternating growth.Accordingly, we extend this approach  to multivariate prediction problems, referred to as M-SSA, with a focus on a stationary framework to clarify the core concepts. An extension of the univariate SSA to non-stationary integrated processes has been explored in Wildi (2026), where non-stationary maximal monotone and minimal curvature predictors are derived. These findings can be further generalized within the M-SSA framework. \\

Our applications aim to highlight some of the distinctive features of M-SSA in forecasting, real-time signal extraction, and smoothing. All examples can be replicated using an open-source M-SSA package, which includes an R package along with comprehensive instructions, practical use cases, and theoretical results, available at https://github.com/wiaidp/R-package-SSA-Predictor.git. A notable feature of our approach is the ability to customize benchmark predictors in terms of accuracy and smoothness (AS) performance. The M-SSA package demonstrates applications of this customization to various filtering techniques, including the Hodrick-Prescott (HP) filter (Hodrick and Prescott, 1997), Hamilton's regression filter (Hamilton, 2018), the Baxter-King (BK) bandpass filter (Baxter and King, 1999) and a `refined' Beveridge-Nelson design (Kamber, Morley and Wong (2024)).\\

Section \eqref{zc} briefly reviews the (univariate) SSA approach and introduces to the M-SSA criterion; Section \eqref{theorem_SSA} proposes solutions to the optimization problem; Section \eqref{examples} illustrates applications in forecasting, signal extraction and smoothing; finally, Section \eqref{conclusion} summarizes our main findings.

\section{M-SSA Criterion} \label{zc}

We provide a brief review of the univariate case and the various adaptations of the Smooth Sign Accuracy (SSA) optimization criterion introduced by Wildi (2024). Following this review, we derive the multivariate generalization M-SSA.

\subsection{Univariate SSA}

We consider a zero-mean stationary time series denoted as $x_t$ (processes with a non-vanishing mean are assumed to be centered) and a target variable $z_{t+\delta}$, where $ \delta \in \mathbb{Z}$, which depends on future values $x_{t-k}$ for $k < 0$. We then derive an optimal predictor $y_t$ for $z_{t+\delta}$ based on past values $x_{t-k}$ for $k \geq 0$ (a causal filter). This predictor is characterized by the property that the HT of $y_t$—defined as the mean duration between consecutive zero-crossings—can be specified by the user. For clarity and simplicity, we initially assume $x_t = \epsilon_t$ to be an independent and identically distributed white noise (WN) sequence. An extension of this approach to accommodate autocorrelated stationary processes will then be discussed.\\

Let 
\begin{equation}
z_t=\sum_{k=-\infty}^{\infty}\gamma_k x_{t-k},
\end{equation}
where $x_j=\epsilon_j, j \in \mathbb{Z}$, is WN. For the sake of clarity we may assume that $\epsilon_t$ is standardized. The sequence $\boldsymbol{\gamma} = (\gamma_k)$, where $k \in \mathbb{Z}$, is a real, square summable sequence, ensuring that $z_t$ constitutes a stationary process with zero mean and variance given by $\sum_{k=-\infty}^{\infty} \gamma_k^2$. We seek to construct a predictor defined as $y_t = \sum_{k=0}^{L-1} b_k \epsilon_{t-k}$ for the target $z_{t+\delta}$, where $\delta \in \mathbb{Z}$ and $b_k$ represents the coefficients of a one-sided causal filter of length $L$. This predictive framework is commonly categorized as fore-casting, now-casting, or back-casting, depending on whether $\delta > 0$, $\delta = 0$, or $ \delta < 0$, respectively.
For illustration,  consider the specific case where $\gamma_0 = 1$, $\gamma_1 = 0.5$, and $\gamma_k = 0$ for $k \notin {0, 1}$. In this scenario, $z_t = \epsilon_t + 0.5 \epsilon_{t-1}$ represents a moving-average process of order one. The task of one-step ahead forecasting for $z_t$ can be achieved by setting $\delta = 1$. The classic mean squared error (MSE) estimate for $z_{t+1}$ yields $y_t = 0.5 \epsilon_t$, resulting in the coefficients $b_0 = 0.5$ and $b_k = 0$ for $1 \leq k \leq L-1$\footnote{Furthermore, our target specification can also facilitate signal extraction problems, wherein the weights $\gamma_k$ are coefficients of a two-sided (potentially bi-infinite) filter.}.\\

The MSE predictor captures the level of future observations (or targets) optimally; however, it may produce forecasts that are 'noisy'. This phenomenon is evident in the previous moving average (MA(1)) example, where $y_t = 0.5\epsilon_t$ is white noise while the target $z_{t+1}$ is more persistent.  Similarly, in real-time signal extraction, the classic one-sided (nowcast) concurrent filter  tends to introduce markedly more `noise' relative to the acausal, two-sided target, as demonstrated in our subsequent examples. To mitigate this issue, Wildi (2024) proposes the following optimization problem:
\begin{eqnarray}\label{crit1}
\left.\begin{array}{cc}
&\max_{\mathbf{b}}\mathbf{b}'\boldsymbol{\gamma}_{\delta}\\
&\mathbf{b}'\mathbf{Mb}=l\rho_1\\
&\mathbf{b}'\mathbf{b}=l
\end{array}\right\}.
\end{eqnarray}
Criterion \eqref{crit1} is referred to as the SSA criterion, and its solution is denoted as SSA($\rho_1,\delta$). 
The constraints $\mathbf{b}'\mathbf{Mb}=l\rho_1$ and $\mathbf{b}'\mathbf{b}=l$ are termed the HT constraint and the length constraint, respectively. Smoothness of the predictor is obtained by specification of the hyperparameter $\rho_1$, which controls the first- order autocorrelation (ACF(1)) of the predictor\footnote{A formal link between ACF(1) and HT is established below.}. Here, $\mathbf{b}=(b_0,...,b_{L-1})'$ and $\boldsymbol{\gamma}_{\delta}=(\gamma_{\delta},...,\gamma_{\delta+L-1})'$ are column vectors of dimension $L$. The variable $l$ represents a scaling parameter, the specific choice of which will be discussed subsequently. This parameter is omitted from the criterion notation SSA($\rho_1,\delta$) due to its relatively minor importance in this context, see below for further reference. The autocovariance generating matrix $\mathbf{M}$ is defined as:
\[
\mathbf{M}=\left(\begin{array}{ccccccccc}0&0.5&0&0&0&...&0&0&0\\
0.5&0&0.5&0&0&...&0&0&0\\
...&&&&&&&&\\
0&0&0&0&0&...&0.5&0&0.5\\
0&0&0&0&0&...&0&0.5&0
\end{array}\right),
\]
which has dimension $L\times L$. The matrix $\mathbf{M}$ is structured such that $\mathbf{b}'\mathbf{Mb}=\sum_{k=1}^{L-1}b_{k-1}b_k$, representing the first-order autocovariance of $y_t$ under the specified assumptions. This formulation aims to optimize the predictor while controlling for the noise present in the forecasts. Under the assumption of WN, the classic MSE predictor is defined as $y_{t,MSE}:=\boldsymbol{\gamma}_{\delta}'\mathbf{x}_t$, where $\mathbf{x}_t:=(x_t,...,x_{t-(L-1)})'$. The weights $\boldsymbol{\gamma}_{\delta}$ can be derived as  a solution to the SSA criterion by setting $l:=\boldsymbol{\gamma}_{\delta}'\boldsymbol{\gamma}_{\delta}$ and $\rho_1:=\boldsymbol{\gamma}_{\delta}'\mathbf{M}\boldsymbol{\gamma}_{\delta}/\boldsymbol{\gamma}_{\delta}'\boldsymbol{\gamma}_{\delta}=:\rho_{MSE}$. However, the objective is to ensure that the SSA predictor $y_t := \mathbf{b}' \boldsymbol{\epsilon}_t$ exhibits reduced noise compared to $ y_{t,MSE}$ for which we can use the hyperparameter $\rho_1$ in the SSA criterion \eqref{crit1}. \\

\subsection{SSA Interpretations}\label{equi_s}

To streamline the terminology, we will refer to both $y_{t,MSE}$ and $\boldsymbol{\gamma}_{\delta}$ as the MSE predictor. Similarly, we will merge $y_t$ and $\mathbf{b}$ under the (M-)SSA designation, clarifying our intent when necessary. Under the assumption of white noise, $\boldsymbol{\gamma}_{\delta}$ represents the appropriate target $\boldsymbol{\gamma}$ in Criterion \eqref{crit1}, given that $\gamma_k$ is irrelevant for $k < \delta$ or $k > \delta + L - 1$. We will also assume $\boldsymbol{\gamma}_{\delta} \neq \mathbf{0}$. In this framework, the solution $\mathbf{b}_0$ to the SSA criterion can be interpreted as a constrained predictor for $z_{t+\delta}$. Alternatively, $\mathbf{b}_0$ can be viewed as a `smoother' for $y_{t,MSE}$, as discussed by Wildi (2024,2026).  Thus, the SSA criterion simultaneously addresses prediction and smoothing. Moreover, under the imposed length constraint, the objective function $\mathbf{b}' \boldsymbol{\gamma}_{\delta}$ is proportional to:
\[
\rho(y,z,\delta):=\mathbf{b}'\boldsymbol{\gamma}_{\delta}/\sqrt{l\boldsymbol{\gamma}'\boldsymbol{\gamma}}
\]
which represents the target correlation of $y_t$ with $z_{t+\delta}$, or to 
\[
\mathbf{b}'\boldsymbol{\gamma}_{\delta}/\sqrt{l\boldsymbol{\gamma}_{\delta}'\boldsymbol{\gamma}_{\delta}}
\]
the correlation of $y_t$ with $y_{t,MSE}$. Maximizing either of these functions consequently maximizes the other, allowing  Criterion \eqref{crit1} to be reformulated as:
\begin{eqnarray}\label{crit1e}
\left.\begin{array}{cc}
&\max_{\mathbf{b}}\rho(y,z,\delta)\\
&\rho(y)=\rho_1\\
&\mathbf{b}'\mathbf{b}=l
\end{array}\right. ~\textrm{or} \left.\begin{array}{cc}
&\max_{\mathbf{b}}\rho(y,y_{MSE},\delta)\\
&\rho(y)=\rho_1\\
&\mathbf{b}'\mathbf{b}=l
\end{array}\right\}.
\end{eqnarray}
Here, $\rho(y):=\mathbf{b}'\mathbf{Mb}/l=\mathbf{b}'\mathbf{Mb}/\mathbf{b}'\mathbf{b}$ denotes the ACF(1) of $y_t$. An increase in $\rho_1$ results in a stronger first-order ACF for the predictor, thus yielding a `smoother' trajectory for $y_t$, characterized by fewer zero-crossings. \\

Specifically, let $ht(y|\mathbf{b},i)$ be defined as $E[t_i-t_{i-1}]$, where $t_i$, $i\geq 1$ represents \emph{consecutive} zero-crossings of the process $y_t$\footnote{These crossings satisfy $t_{i-1}<t_i$ with the condition $t_1\geq L$ and $y_{t_{i}-1}y_{t_{i}}<0$ for all $i$. Additionally, it holds that $y_{t-1}y_{t}>0$ if $t_{i-1}<t<t_i$.}. Under the assumptions of stationarity, $ht(y|\mathbf{b},i)=ht(y|\mathbf{b})$ is invariant with respect to $i$. Furthermore, assuming Gaussianity of $x_t$, we have: 
\begin{equation}\label{ht}
ht(y|\mathbf{b})=\displaystyle{\frac{\pi}{\arccos(\rho(y))}},
\end{equation}
establishing a connection between the HT and the first-order ACF, see Wildi (2024). It is noteworthy that this relationship remains relatively robust even when deviations from Gaussian assumptions occur, as discussed in Wildi (2024). Given that the trigonometric function arccos is monotonic in the interval $]-1,1[$, we can reformulate the SSA criterion as follows: 
\begin{eqnarray}\label{crit1ee}
\left.\begin{array}{cc}
&\max_{\mathbf{b}}\rho(y,z,\delta)\\
&ht(y|\mathbf{b})=ht_1\\
&\mathbf{b}'\mathbf{b}=l
\end{array}\right\}.
\end{eqnarray}
Since correlations, signs, and zero-crossings are invariant to the scaling of $y_t$, the parameter $l$ in the length constraint can be treated as a nuisance parameter, which primarily serves to ensure uniqueness. We typically assume $l = 1$ unless otherwise stated. If necessary, `static' adjustments for level and scale can be applied post hoc, once a solution $ \mathbf{b} = \mathbf{b}(l)$ has been determined for an arbitrary $l$. This adjustment can be performed, for instance, by regressing the predictor on the target variable.\\

However, our primary focus lies in the `dynamic' aspects of the prediction problem, particularly concerning the target correlation and sign accuracy (SA). Specifically, let $SA(y_t) = P(z_{t+\delta}y_t>0)$ denote the probability of the target and predictor sharing the same sign. Under the Gaussian assumption, it follows that:
\[
SA(y_t)=0.5+\frac{\arcsin(\rho(y,z,\delta))}{\pi}
\]
thereby establishing a strictly monotonic (bijective) relationship between SA and target correlation. Consequently, the SSA criterion can be reformulated as:
\begin{eqnarray*}
&&\textrm{max}_{\mathbf{b}}SA(y_t)\\
&&ht(y|\mathbf{b})=ht_1\\
&&\mathbf{b}'\mathbf{b}=1,
\end{eqnarray*}
In conclusion, the SSA framework effectively reconciles MSE, SA and HT, i.e. smoothing, in a flexible and interpretable manner.

\subsection{Multivariate M-SSA}

Let $\mathbf{x}_t = (x_{1t}, \ldots, x_{nt})'$ represent a multivariate stationary process of dimension $n$, characterized by a (purely non-deterministic) Wold decomposition of the form
\begin{equation}\label{x_m_st}
\mathbf{x}_t = \sum_{k=0}^\infty \boldsymbol{\Xi}_k \boldsymbol{\epsilon}_{t-k},
\end{equation}
where $\boldsymbol{\Xi}_0 = \mathbf{I}_{n \times n}$ denotes the identity matrix, and $\boldsymbol{\epsilon}_t = (\epsilon_{1t}, \ldots, \epsilon_{nt})'$ is a sequence of multivariate WN with a variance-covariance matrix denoted as $\boldsymbol{\Sigma}$. The entries of $\boldsymbol{\Sigma}$ are represented by $\sigma_{ij}$, and we assume that its eigenvalues $\tilde{\sigma}_j$ for $j = 1, \ldots, n$ are strictly positive, indicating full rank\footnote{If $\boldsymbol{\Sigma}$ is originally rank deficient, we assume that redundant noise components have been eliminated, resulting in a lower-dimensional system that retains full rank.}. The eigenvectors of $\boldsymbol{\Sigma}$ are denoted by $\mathbf{v}_{\sigma k}$ for $k = 1, \ldots, n$. 
Next, we define a multivariate stationary process $\mathbf{z}_t$ of dimension $n$ as follows:
\[
\mathbf{z}_t=\sum_{|k|<\infty}\boldsymbol{\Gamma}_k\mathbf{x}_{t-k},
\] 
where $\boldsymbol{\Gamma}_k$ is a sequence of square-summable $n\times n$ dimensional filter matrices, with entries $(\gamma_{ijk})$ for $i, j = 1, \ldots, n$. The condition for square summability is expressed mathematically as:
\[
\sum_{|k| < \infty} \text{tr}(\boldsymbol{\Gamma}_k' \boldsymbol{\Gamma}_k) < \infty,
\]
where $\text{tr}(\cdot)$ denotes the trace operator. We focus on the estimation of $\mathbf{z}_{t+\delta}$, utilizing the $n-$dimensional predictor defined by
\[
\mathbf{y}_t = \sum_{k=0}^{L-1} \mathbf{B}_k \mathbf{x}_{t-k},
\]
with $n\times n$-dimensional filter matrices $\mathbf{B}_k$, $k=0,...,L-1$. To facilitate the theoretical analysis, we simplify our model by assuming $\mathbf{x}_t = \boldsymbol{\epsilon}_t$, with further extensions to autocorrelated processes addressed in the subsequent section. Let $b_{ijk}$, $1\leq i,j\leq n$, denote the entries of the matrix $\mathbf{B}_k$ for $k = 0, \ldots, L-1$. We define several vector notations as follows:
\begin{eqnarray*}
\boldsymbol{\epsilon}_{it}=(\epsilon_{it},\epsilon_{it-1},...,\epsilon_{it-(L-1)})'~&,&~\boldsymbol{\epsilon}_{\cdot t}=(\boldsymbol{\epsilon}_{1t}',...,\boldsymbol{\epsilon}_{nt}')'\\
\boldsymbol{\gamma}_{ij\delta}=(\gamma_{ij\delta},\gamma_{ij\delta+1},...,\gamma_{ij\delta+L-1})'~&,&\boldsymbol{\gamma}_{i\cdot\delta}=(\boldsymbol{\gamma}_{i1\delta}',\boldsymbol{\gamma}_{i2\delta}',...,\boldsymbol{\gamma}_{in\delta}')'\\
~\mathbf{b}_{ij}=(b_{ij0},b_{ij1},...,b_{ijL-1})'~&,&~\mathbf{b}_{i}=(\mathbf{b}_{i1}',\mathbf{b}_{i2}',...,\mathbf{b}_{in}')'\\
\boldsymbol{\gamma}_{\delta}=(\boldsymbol{\gamma}_{1\cdot\delta},\boldsymbol{\gamma}_{2\cdot\delta},...,
\boldsymbol{\gamma}_{n\cdot\delta})~&,&~\mathbf{b}=(\mathbf{b}_{1},\mathbf{b}_{2},...\mathbf{b}_{n}).
\end{eqnarray*} 
Here, $\boldsymbol{\gamma}_{\delta}$ and $\mathbf{b}$ are matrices of dimension $(L\cdot n)\times n$. The $i$-th column, $i=1,...,n$, contains the filter weights associated with the $i$-th target (or $i$-th predictor), and the filter weights corresponding to the $j=1,...,n$ series are stacked into this column vector, of total length $n\cdot L$.  The $i$-th column vector of $\boldsymbol{\gamma}_{\delta}$ (or $\mathbf{b}$) can be applied to the $L\cdot n$-dimensional $\boldsymbol{\epsilon}_{\cdot t}$, which stacks all $n$ series into a single long data vector. Specifically, we define 
\[
y_{ijt} := \mathbf{b}_{ij}'\boldsymbol{\epsilon}_{jt},
\]
which allows us to express the $i$-th predictor as:
\[
y_{it} = \mathbf{b}_i'\boldsymbol{\epsilon}_{\cdot t} = \sum_{j=1}^n y_{ijt}.
\] 
Similarly, the multivariate predictor vector can be written as:
\[
\mathbf{y}_t = \mathbf{b}'\boldsymbol{\epsilon}_{\cdot t}
\]
and the same applies to the target variables. Through orthogonal projection, the MSE predictor is derived as $\hat{\mathbf{y}}_{t,MSE} = \boldsymbol{\gamma}_{\delta}'\boldsymbol{\epsilon}_{\cdot t}$, with components $\hat{y}_{it,MSE}$, $i=1,...,n$.
Consider the Kronecker products defined as $\tilde{\mathbf{I}}:=\boldsymbol{\Sigma}\otimes\mathbf{I}_{L\times L}$ and $\tilde{\mathbf{M}}:=\boldsymbol{\Sigma}\otimes\mathbf{M}$, of the $n\times n$ dimensional matrix $\boldsymbol{\Sigma}$, the $L\times L$ dimensional identity  $\mathbf{I}_{L\times L}$ and the autocovariance generating matrix $\mathbf{M}$. We can express the following expectations: 
\begin{eqnarray}
E[z_{it+\delta}y_{it}]&=&E[\hat{y}_{it,MSE}y_{it}]=\boldsymbol{\gamma}_{i\cdot\delta}'\tilde{\mathbf{I}}\mathbf{b}_{i}\label{moment1}\\
E[\hat{z}_{it\delta}^2]&=&\boldsymbol{\gamma}_{i\cdot\delta}'\tilde{\mathbf{I}}\boldsymbol{\gamma}_{i\cdot\delta}\nonumber\\
E[y_{it}^2]&=&\mathbf{b}_{i}'\tilde{\mathbf{I}}\mathbf{b}_{i}\nonumber\\
E[y_{it-1}y_{it}]&=&\mathbf{b}_{i}'\tilde{\mathbf{M}}\mathbf{b}_{i}.\nonumber
\end{eqnarray}
The variance of $z_{it}$ is given by 
\[
E[z_{it}^2]=\sum_{|k|<\infty}\boldsymbol{\gamma}_{i k}'\boldsymbol{\Sigma}\boldsymbol{\gamma}_{i k},
\]
where $\boldsymbol{\gamma}_{i k}:=(\gamma_{i1k},...,\gamma_{ink})'$. Consequently, we propose the following multivariate M-SSA criterion for $i=1,...,n$: 
\begin{eqnarray}\label{mcrit1}
\textrm{max}_{\mathbf{b}_i}\boldsymbol{\gamma}_{i\cdot\delta}'\tilde{\mathbf{I}}\mathbf{b}_{i}\\
\mathbf{b}_{i}'\tilde{\mathbf{M}}\mathbf{b}_{i}=\rho_i\nonumber\\
\mathbf{b}_{i}'\tilde{\mathbf{I}}\mathbf{b}_{i}=1,\nonumber
\end{eqnarray}
for $i=1,...,n$. 
Under the imposed length constraint, the objective function is proportional to the correlation between the predictor and the $i$-th target (or  the $i$-th MSE predictor). Additionally, the HT constraint regulates the first-order ACF or, equivalently, the HT of the predictor. The relationships established between sign accuracy and target correlation, as well as between first-order ACF and HT, as discussed in the previous section, remain valid.



\subsection{Dependence}\label{ext_stat}

We now relax the WN hypothesis and assume a stationary process given by Equation \eqref{x_m_st}. Then, the target and predictor can be expressed formally as:
\begin{eqnarray*}
\mathbf{z}_t&=&\sum_{|k|<\infty}(\boldsymbol{\Gamma}\cdot\boldsymbol{\Xi})_k \boldsymbol{\epsilon}_{t-k}\\
\mathbf{y}_t&=&\sum_{j\geq 0} (\mathbf{B}\cdot\boldsymbol{\Xi})_j\boldsymbol{\epsilon}_{t-j}.
\end{eqnarray*} 
Here, $(\boldsymbol{\Gamma}\cdot\boldsymbol{\Xi})_k =\sum_{m\leq k}\boldsymbol{\Gamma}_m\boldsymbol{\Xi}_{k-m}$ and $(\mathbf{B}\cdot\boldsymbol{\Xi})_j=\sum_{m=0}^{\min(L-1,j)}\mathbf{B}_m \boldsymbol{\Xi}_{j-m} $ represent the convolutions of the sequences $\boldsymbol{\Gamma}_k$ and $\mathbf{B}_j$ with the Wold-decomposition $\boldsymbol{\Xi}_m$ of $\mathbf{x}_t$. For  given $(\mathbf{B}\cdot\boldsymbol{\Xi})_j$, $j=0,...,L-1$, the original coefficients $\mathbf{B}_{k}$ can be derived through deconvolution:
\begin{eqnarray}\label{con_inv}
\mathbf{B}_{k}&=&\left\{\begin{array}{cc}(\mathbf{B}\cdot\boldsymbol{\Xi})_{0}\boldsymbol{\Xi}_0^{-1}&,k=0\\
\left(  (\mathbf{B}\cdot\boldsymbol{\Xi})_{k}-\sum_{m=0}^{k-1}\mathbf{B}_m \boldsymbol{\Xi}_{j-m}\right)\boldsymbol{\Xi}_0^{-1}&,k=1,...,L-1
\end{array}\right.
\end{eqnarray}
In our parametrization, $\boldsymbol{\Xi}_0^{-1} = \mathbf{I}$ simplifies this expression. Denote the elements of the matrices as follows:
\[
(\boldsymbol{\Gamma}\cdot\boldsymbol{\Xi})_{lmk}
 \quad \text{and} \quad (\mathbf{B}\cdot\boldsymbol{\Xi})_{lmj}.
 \] 
We now assume $\delta\geq 0$\footnote{Similar but slightly more complex expressions would arise for $\delta < 0$ (backcasting), which are omitted here.} and $L$ to be sufficiently large to render $(\boldsymbol{\Gamma} \cdot \boldsymbol{\Xi})_{\delta+k} \approx 0$ and $(\mathbf{B} \cdot \boldsymbol{\Xi})_j \approx 0$ for $k,j \geq L$, so that  their contributions to the variances of $\hat{\mathbf{y}}_{t,MSE}$ and $\mathbf{y}_t$ are negligible. We define:
\begin{eqnarray*}
(\boldsymbol{\gamma}\cdot\boldsymbol{\xi})_{ij\delta}&=&\left((\boldsymbol{\Gamma}\cdot\boldsymbol{\Xi})_{ij\delta},(\boldsymbol{\Gamma}\cdot\boldsymbol{\Xi})_{ ij\delta+1}...,(\boldsymbol{\Gamma}\cdot\boldsymbol{\Xi})_{ij\delta+L-1}\right)'\\
(\boldsymbol{\gamma}\cdot\boldsymbol{\xi})_{i\delta}&=&\Big((\boldsymbol{\gamma}\cdot\boldsymbol{\xi})_{i1\delta}',(\boldsymbol{\gamma}\cdot\boldsymbol{\xi})_{i2\delta}',...,(\boldsymbol{\gamma}\cdot\boldsymbol{\xi})_{in\delta}'\Big)'\\
(\mathbf{b}\cdot\boldsymbol{\xi})_{ij}&=&\left((\mathbf{B}\cdot\boldsymbol{\Xi})_{ ij0},(\mathbf{B}\cdot\boldsymbol{\Xi})_{ij1},...,(\mathbf{B}\cdot\boldsymbol{\Xi})_{ijL-1}\right)'\\
(\mathbf{b}\cdot\boldsymbol{\xi})_{i}&=&\Big((\mathbf{b}\cdot\boldsymbol{\xi})_{i1}',(\mathbf{b}\cdot\boldsymbol{\xi})_{i2}',...,(\mathbf{b}\cdot\boldsymbol{\xi})_{in}'\Big)'
\end{eqnarray*}
It is important to note that if $\boldsymbol{\Xi}_k=\mathbf{0},~k>0$, resulting in $\mathbf{x}_t=\boldsymbol{\epsilon}_t$, then we have $(\boldsymbol{\gamma}\cdot\boldsymbol{\xi})_{i\delta}=\boldsymbol{\gamma}_{i\cdot\delta}$ and $(\mathbf{b}\cdot\boldsymbol{\xi})_{i}=\mathbf{b}_i$, as defined in the preceding section. The generalized M-SSA criterion can then be expressed as: 
\begin{eqnarray}\label{gen_stat_x}
\max_{(\mathbf{b}\cdot\boldsymbol{\xi})_i} (\boldsymbol{\gamma}\cdot\boldsymbol{\xi})_{i\delta}'\mathbf{\tilde{I}} (\mathbf{b}\cdot\boldsymbol{\xi})_i\\
(\mathbf{b}\cdot\boldsymbol{\xi})_i'\mathbf{\tilde{M}}(\mathbf{b}\cdot\boldsymbol{\xi})_i=\rho_i\nonumber\\
(\mathbf{b}\cdot\boldsymbol{\xi})_i'\mathbf{\tilde{I}}(\mathbf{b}\cdot\boldsymbol{\xi})_i=1\nonumber
\end{eqnarray}
for $i=1,...,n$. This formulation can be solved for $(\mathbf{b}\cdot\boldsymbol{\xi})_i$, as detailed in Theorem \eqref{lambda_mult} in the next section. The M-SSA coefficients $\mathbf{b}_i$ can subsequently be determined through deconvolution, as indicated in Equation \eqref{con_inv}. 
Wildi (2026) posits an extension of SSA to non-stationary integrated processes, wherein the HT constraint pertains to stationary differences of the predictor. While a corresponding extension to M-SSA is feasible based on findings by McElroy and Wildi (2020), the present discussion is confined to stationary processes, omitting a lengthier treatment of rank-defficient designs (cointegration) in the non-stationary case.\\

\textbf{Remark}: The target filter $\boldsymbol{\Gamma}_k$, for $|k|<\infty$,  and the MA-inversion $\boldsymbol{\Xi}_j$, for $j\geq 0$, are combined into the convolution $(\boldsymbol{\gamma}\cdot\boldsymbol{\xi})_{i\delta}$, for $i=1,...,n$, in the M-SSA criterion \eqref{gen_stat_x}. Due to the commutative property of convolution, this combination is invariant to the ordering of the components. Consequently, M-SSA treats the target filter and the Wold decomposition as essentially equivalent `filters', with the distinction being largely semantic. Our notation emphasizes this difference to distinguish between forecasting, smoothing, and signal extraction, which pertain to specific target formulations $\boldsymbol{\Gamma}_k$ for a given process $\boldsymbol{\Xi}_j$. For further clarification, see Section \eqref{examples}.

\subsection{Embedding, Smoothing and Efficient Frontier}\label{op_mod}

In the M-SSA criterion \eqref{gen_stat_x}, the terms $(\boldsymbol{\gamma}\cdot\boldsymbol{\xi})_{i\delta}$, $(\mathbf{b}\cdot\boldsymbol{\xi})_i$, $(\mathbf{b}\cdot\boldsymbol{\xi})_i$, $\mathbf{\tilde{M}}$ and $\mathbf{\tilde{I}}$ depend on the data-generating process (DGP), characterized by $\boldsymbol{\Sigma}$ and $\boldsymbol{\Xi}_k$, and on the forecast target $\mathbf{z}_t$, specified by $\boldsymbol{\Gamma}_k$. Accordingly, for $i=1,...n$, the optimal M-SSA coefficient vector $\mathbf{b}_i$ is a function of the DGP, the target specification, and the constraint parameter $\rho_i$. When $\rho_i=\rho_{i,MSE}$, where $\rho_{i,MSE}$ denotes the first-order ACF of the MSE predictor $\hat{y}_{it,MSE}$, the solution reduces to $\mathbf{b}_i=\boldsymbol{\gamma}_{i\cdot\delta}$,  thereby recovering the MSE predictor and rendering the HT constraint redundant (detailed in the subsequent section). Hence, the MSE predictor arises as a special case within the M-SSA framework (embedding).\\

The target  $z_{it}$  may be causal or acausal. With a causal target, M-SSA is interpreted as a \emph{smoother}; with an acausal target, it is interpreted as a \emph{predictor}. Standard multivariate one-step or $h$-step-ahead forecasts are obtained by setting 
\[
\boldsymbol{\Gamma}_k=\left\{\begin{array}{cc}\mathbf{I}_{n\times n}~,k=-h\\\mathbf{0}&,\textrm{otherwise}\end{array}\right.
\]
(see section \eqref{var1}). A nowcast ($h = 0$) or backcast ($h < 0$) is produced accordingly: the nowcast $y_{it}$ is smoother or noisier than $x_{it}$ depending on whether $\rho_i$ exceeds or falls short of the first-order ACF $\rho(x_i)$ of $x_{it}$ (see section \eqref{smoo_app}). More general prediction and smoothing tasks can be accommodated by allowing an arbitrary (generally square summable) sequence $\boldsymbol{\Gamma}_k$ in the target specification $\mathbf{z}_t$ (see section \eqref{se_app} for real-time trend extraction). In applications, the target specification can encode various research priorities by selecting the signals of interest.  Finally, any acausal prediction problem can be expressed as a causal smoothing problem by replacing the target $z_{it}$ in the objective with the  causal MSE predictor $\hat{y}_{it,MSE}$, as in \eqref{crit1e} and \eqref{moment1}. This reformulation is referred to as the equivalent causal M-SSA representation. \\

Under its primal specification \eqref{mcrit1}, or its extension \eqref{gen_stat_x}, M-SSA selects the predictor by maximizing correlation with the target, subject to a novel HT smoothing constraint. Conversely, the dual formulation detailed in the subsequent section characterizes the M-SSA predictor as the smoothest solution, with the largest HT, subject to a prescribed target correlation. Thus, for variable admissible $\rho_i$ in the HT constraint\footnote{It is assumed that the constraints under both primal and dual representations are feasible; technical details are provided in the subsequent section.}, M-SSA traces the efficient frontier linking accuracy (target correlation) and smoothness (zero-crossing rate) under the specified criterion.

\section{Solution}\label{theorem_SSA}



For clarity and simplicity, we assume that the process $\mathbf{x}_t = \boldsymbol{\epsilon}_t$ is WN, acknowledging that autocorrelated processes may be accommodated through the extensions proposed in the preceding section. Consequently, M-SSA is adopted throughout in baseline (MSE) extension mode.\\

We consider the orthonormal Fourier eigenvectors defined as $\mathbf{v}_j:=\left(\sin(k\omega_j)/\sqrt{\sum_{k=1}^Lsin(k\omega_j)^2}\right)_{k=1,...,L}$ of the matrix $\mathbf{M}$, associated with the eigenvalues  $\lambda_{j}=\cos(\omega_j)$ calculated at discrete Fourier frequencies $\omega_j=j\pi /(L+1)$, $j=1,...,L$, as detailed by Anderson (1975).  
We denote the solutions to the equation $\frac{\partial \rho(y_{i})}{\partial \mathbf{b}_i} = \mathbf{0}$ as stationary points of the first-order ACF $\rho(y_{i})$.

\begin{Proposition}\label{stationary_eigenvec}
Under the assumptions of white noise and a full-rank covariance matrix $\boldsymbol{\Sigma}$ as established in the previous section, the vector $\mathbf{b}_i$ of the M-SSA Criterion \eqref{mcrit1} constitutes a stationary point of the first-order autocorrelation $\rho(y_{i})$ if and only if $\mathbf{b}_{ij} \propto \mathbf{v}_k$ for $j = 1,...,n$, where $\mathbf{v}_k$ is an eigenvector of $\mathbf{M}$. In this scenario, the first-order autocorrelation is given by $\rho(y_{i}) = \lambda_k$, where $\lambda_k$ corresponds to the associated eigenvalue. 
The extremal values of $\rho(y_{i})$ are obtained as  $\rho_{min}(L)=\min_k\lambda_k=\cos(\pi L/(L+1))=-\cos(\pi /(L+1))$ and $\rho_{max}(L)=\max_k\lambda_k=\cos(\pi /(L+1))$. 
\end{Proposition}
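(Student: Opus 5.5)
The plan is to treat $\rho(y_i)$ as a generalized Rayleigh quotient and reduce its stationarity condition to an ordinary eigenvalue problem for $\mathbf{I}_{n\times n}\otimes\mathbf{M}$. From the moment expressions \eqref{moment1}, under white noise
\[
\rho(y_i)=\frac{\mathbf{b}_i'\tilde{\mathbf{M}}\mathbf{b}_i}{\mathbf{b}_i'\tilde{\mathbf{I}}\mathbf{b}_i}.
\]
Since $\boldsymbol{\Sigma}$ is positive definite, $\tilde{\mathbf{I}}=\boldsymbol{\Sigma}\otimes\mathbf{I}_{L\times L}$ is positive definite. Hence the denominator is strictly positive for $\mathbf{b}_i\neq\mathbf{0}$, and $\rho(y_i)$ is smooth away from the origin.

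\textbf{Step 1: reduce stationarity to an eigen-equation.} Differentiating the quotient gives
\[
\frac{\partial \rho(y_i)}{\partial \mathbf{b}_i}=\frac{2}{\mathbf{b}_i'\tilde{\mathbf{I}}\mathbf{b}_i}\Big(\tilde{\mathbf{M}}\mathbf{b}_i-\rho(y_i)\tilde{\mathbf{I}}\mathbf{b}_i\Big).
\]
This vanishes if and only if $\tilde{\mathbf{M}}\mathbf{b}_i=\rho\,\tilde{\mathbf{I}}\mathbf{b}_i$. The Kronecker mixed-product rule gives
\[
\tilde{\mathbf{I}}^{-1}\tilde{\mathbf{M}}=(\boldsymbol{\Sigma}^{-1}\otimes\mathbf{I})(\boldsymbol{\Sigma}\otimes\mathbf{M})=\mathbf{I}_{n\times n}\otimes\mathbf{M}.
\]
So the condition is equivalent to $(\mathbf{I}_{n\times n}\otimes\mathbf{M})\mathbf{b}_i=\rho\,\mathbf{b}_i$. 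Blockwise, this reads $\mathbf{M}\mathbf{b}_{ij}=\rho\,\mathbf{b}_{ij}$ for every $j=1,\dots,n$. In particular, $\boldsymbol{\Sigma}$ drops out of the stationarity condition entirely.

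\textbf{Step 2: identify the stationary points.} Every nonzero block $\mathbf{b}_{ij}$ must be an eigenvector of $\mathbf{M}$ with the common eigenvalue $\rho$. By Anderson (1975), the eigenvalues $\lambda_k=\cos(k\pi/(L+1))$, $k=1,\dots,L$, are pairwise distinct, because $\cos$ is strictly decreasing on $]0,\pi[$. Each eigenspace of $\mathbf{M}$ is therefore one-dimensional, spanned by $\mathbf{v}_k$. Hence $\mathbf{b}_{ij}=c_j\mathbf{v}_k$ for a single $k$ shared by all $j$, with scalars $c_j$ not all zero; "proportional" must be read as allowing $c_j=0$.

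Conversely, any such $\mathbf{b}_i$ satisfies $(\mathbf{I}\otimes\mathbf{M})\mathbf{b}_i=\lambda_k\mathbf{b}_i$. Substituting back shows $\tilde{\mathbf{M}}\mathbf{b}_i=\lambda_k\tilde{\mathbf{I}}\mathbf{b}_i$, which gives $\rho(y_i)=\lambda_k$. This establishes both the ``if and only if'' and the value of the autocorrelation.

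\textbf{Step 3: the extremes.} The step needing the most care is ensuring that different frequencies $k$ cannot be mixed across the blocks $j$. This is exactly where distinctness of the $\lambda_k$, and not merely the Kronecker structure, is used; without it one would only get the larger eigenspace of $\mathbf{I}\otimes\mathbf{M}$ with multiplicity $n$ per eigenvalue.

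For the extremal values, $\rho(y_i)$ is continuous on the compact ellipsoid $\{\mathbf{b}_i'\tilde{\mathbf{I}}\mathbf{b}_i=1\}$ and is scale invariant. Its maximum and minimum over $\mathbf{b}_i\neq\mathbf{0}$ are therefore attained and are stationary points. By Step 2 they lie in $\{\lambda_1,\dots,\lambda_L\}$, and every $\lambda_k$ is attained. Hence
\[
\rho_{max}(L)=\max_k\lambda_k=\cos(\pi/(L+1)),\qquad
\rho_{min}(L)=\min_k\lambda_k=\cos(L\pi/(L+1))=-\cos(\pi/(L+1)),
\]
by the monotonicity of $\cos$.
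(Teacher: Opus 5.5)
Your proof is correct and follows essentially the same route as the paper: reduce stationarity to $(\tilde{\mathbf{M}}-\rho\,\tilde{\mathbf{I}})\mathbf{b}_i=\mathbf{0}$, premultiply by $\tilde{\mathbf{I}}^{-1}=\boldsymbol{\Sigma}^{-1}\otimes\mathbf{I}_{L\times L}$ to obtain the blockwise equations $\mathbf{M}\mathbf{b}_{ij}=\rho\,\mathbf{b}_{ij}$, use the distinctness of the $\lambda_k$ to force a common $\mathbf{v}_k$, and identify the extremes with stationary points. Your version differs only in execution: you differentiate the quotient directly rather than using a Lagrangian on the ellipse, you get $\rho(y_i)=\lambda_k$ algebraically rather than through the paper's white-noise linear-combination argument, and you make explicit the converse direction and the compactness argument for attainment of the extremes, both of which the paper leaves implicit.
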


\textbf{Proof}: For the sake of clarity, we assume that the vector $\mathbf{b}_i$ is constrained to the unit-ellipse defined by $\mathbf{b}_{i}'\tilde{\mathbf{I}}\mathbf{b}_{i}=1$, which implies that $\rho(y_{i})=\mathbf{b}_i'\mathbf{\tilde{M}}\mathbf{b}_i$. 
To find a stationary point of $\rho(y_{i})$, we set the  derivative of the Lagrangian $\mathfrak{L}=\mathbf{b}_i'\mathbf{\tilde{M}}\mathbf{b}_i-\lambda(\mathbf{b}_{i}'\tilde{\mathbf{I}}\mathbf{b}_{i}-1)$ to zero, leading to the equation $\left(\mathbf{\tilde{M}}-\lambda\mathbf{\tilde{I}}\right)\mathbf{b}_i=\mathbf{0}$. 
Given that $\boldsymbol{\Sigma}$ is assumed to be of full rank, we can manipulate the previous expression by multiplying it with $\boldsymbol{\Sigma}^{-1} \otimes \mathbf{I}_{L \times L} = \mathbf{\tilde{I}}^{-1}$, yielding 
\[
\left(\mathbf{I}_{n \times n} \otimes \mathbf{M} - \lambda \mathbf{I}_{nL \times nL}\right)\mathbf{b}_i = \mathbf{0}.
\]
This expression can be rewritten as $n$ distinct sub-space equations $\mathbf{M}\mathbf{b}_{ij} - \lambda \mathbf{b}_{ij} = \mathbf{0}$, for $j = 1, ..., n$. Therefore, $\lambda = \lambda_k$ must be an eigenvalue of $\mathbf{M}$ for some $k \in {1, ..., L}$. Furthermore, since the eigenvalues of $\mathbf{M}$ are distinct, the corresponding eigenvector $\mathbf{v}_k$ is uniquely determined, leading to the conclusion that $\mathbf{b}_{ij}\propto\mathbf{v}_k$, $j=1,...,n$. Additionally, we have 
\[
\rho(y_{ij})=\frac{\mathbf{b}_{ij}'\mathbf{Mb}_{ij}}{\mathbf{b}_{ij}'\mathbf{b}_{ij}}=\lambda\frac{\mathbf{b}_{ij}'\mathbf{b}_{ij}}{\mathbf{b}_{ij}'\mathbf{b}_{ij}}=\lambda_k.
\]
Since $\mathbf{b}_{ij}\propto\mathbf{v}_{k}$, we can express $y_{it}$ as follows
\[
y_{it}=\mathbf{b}_i'\boldsymbol{\epsilon}_{\cdot t}=\sum_{j=1}^n\mathbf{b}_{ij}'\boldsymbol{\epsilon}_{jt}=\mathbf{v}_{k}'\boldsymbol{\tilde{\epsilon}}_{t},
\]
where $\boldsymbol{\tilde{\epsilon}}_{t}=\sum_{j=1}^n a_j\boldsymbol{{\epsilon}}_{jt}$ is a linear combination of the components $\boldsymbol{{\epsilon}}_{jt}$. Given that  $\boldsymbol{\tilde{\epsilon}}_{t}$ is also WN, we conclude that $\rho(y_{i})=\rho(y_{ij})=\lambda_k$, as asserted. Lastly, since the  unit-ellipse is devoid of boundary-points, the extremal values $\rho_{min}(L)$ and $\rho_{max}(L)$ of the first-order ACF of $y_{it}$ must correspond to stationary points. Thus we have $\rho_{min}(L)=\min_k\lambda_k$ and $\rho_{max}(L)=\max_k\lambda_k$, which concludes the proof of the proposition.\hfill \qedsymbol{}\\


We will now derive the spectral decomposition of the MSE predictor $\boldsymbol{\gamma}_{i\cdot\delta}$. Let $\tilde{\lambda}_m$, $m=1,...,nL$, denote the eigenvalues of the matrix $\mathbf{\tilde{M}}$, with corresponding eigenvectors $\mathbf{\tilde{v}}_m$. Specifically, we have $\tilde{\lambda}_{(k-1)n+ j}=\lambda_k\tilde{\sigma}_j$ and $\mathbf{\tilde{v}}_{(k-1)n+ j}=\mathbf{v}_{\sigma j}\otimes\mathbf{v}_k$, where  $\lambda_k$, $\tilde{\sigma}_j$ are the eigenvalues and  $\mathbf{v}_k$, $\mathbf{v}_{\sigma j}$ are the eigenvectors of $\mathbf{M}$ and $\boldsymbol{\Sigma}$, respectively, as established by Horn and Johnson (1991). We will index the eigenvalues and eigenvectors of $\tilde{\mathbf{M}}$ using the notation $\tilde{\lambda}_{kj}$ and $\mathbf{\tilde{v}}_{kj}$, where the two-dimensional index $kj$ refers to the product $\lambda_k\tilde{\sigma}_j$. It is assumed that the eigenvalues $\lambda_k$  are ordered in decreasing magnitude, from largest to smallest. Furthermore, we assume that the eigenvectors $\mathbf{\tilde{v}}_{kj}$ are normalized to form an orthonormal basis of $\mathbb{R}^{nL}$. The eigenvalues of  $\mathbf{\tilde{I}}$ are $\tilde{\sigma}_j$, each with multiplicity $L$, for $j=1,...,n$ (if $\tilde{\sigma}_j$ are not pairwise different then the multiplicities will be accordingly larger), and the eigenvectors $\mathbf{\tilde{v}}_{kj}$ of $\mathbf{\tilde{M}}$ are also eigenvectors of $\mathbf{\tilde{I}}$. By organizing the orthonormal basis  $\mathbf{\tilde{v}}_{kj}$ into the $nL\times nL$-dimensional matrix $\mathbf{\tilde{V}}$, we can express the MSE predictor $\boldsymbol{\gamma}_{i\cdot\delta}$, for $i=1,...,n$, in the form of the spectral decomposition:  
\begin{eqnarray}\label{specdec_mult}
\boldsymbol{\gamma}_{i\cdot\delta}=\mathbf{\tilde{V}}\mathbf{w}_i=\sum_{k=1}^{L}\sum_{j=1}^n w_{ikj}\mathbf{\tilde{v}}_{kj},
\end{eqnarray}
where the column vector $\mathbf{w}_i=(w_{ikj})_{k=1,...,L, j=1,...,n}$ of length $nL$  represents the spectral weights of the decomposition of $\boldsymbol{\gamma}_{i\cdot\delta}$. 
The predictor $\boldsymbol{\gamma}_{i\cdot\delta}$ is said to possess \emph{complete spectral support} if the condition $\sum_{j=1}^n|w_{ikj}|\neq 0$ holds for $k=1,...,L$. This definition generalizes the concept proposed by Wildi (2024) in the univariate context, where $n=1$ and it is required that $w_{ikj}=w_k\neq 0$ for $1\leq k\leq L$.

\begin{Proposition}\label{bound12}
Consider   the M-SSA criterion defined in Equation \eqref{mcrit1}, under the assumptions of WN and a full-rank covariance matrix $\boldsymbol{\Sigma}$, with $L\geq 2$. The following assertions hold:
\begin{itemize}
\item If $|\rho_i|>\rho_{max}(L)$ (exterior point) for the HT constraint of the $i$-th target, the corresponding optimization problem does not yield a solution. 
\item If $\rho_i=\lambda_1=\rho_{max}(L)$ (first boundary case) and  $\sum_{j=1}^n|w_{i1j}|\neq 0$,  then the M-SSA predictor is given by
\[
\mathbf{b}_i=\frac{\sqrt{l}}{\sqrt{\sum_{j=1}^n\tilde{\sigma}_jw_{i1j}^2}}\sum_{j=1}^n w_{i1j}\tilde{\mathbf{v}}_{1j}\neq \mathbf{0},
\]
where $w_{i1j}$ and $\mathbf{\tilde{v}}_{1j}$ represent the spectral weights and eigenvectors, as specified in Equation \eqref{specdec_mult}.  
\item Similarly, if $\rho_i=\lambda_L=-\rho_{max}(L)$ (second boundary case) and  $\sum_{j=1}^n|w_{iLj}|\neq 0$, then the M-SSA solution is expressed as
\[
\mathbf{b}_i=\frac{\sqrt{l}}{\sqrt{\sum_{j=1}^n\tilde{\sigma}_jw_{iLj}^2}}\sum_{j=1}^n w_{iLj}\tilde{\mathbf{v}}_{Lj}\neq \mathbf{0}.
\]
\end{itemize}
\end{Proposition}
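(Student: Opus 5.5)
The plan is to reduce everything to Proposition \ref{stationary_eigenvec} together with the simultaneous diagonalization of $\tilde{\mathbf{I}}$ and $\tilde{\mathbf{M}}$ in the basis $\tilde{\mathbf{v}}_{kj}$.

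\emph{Exterior case.} On the ellipse $\mathbf{b}_i'\tilde{\mathbf{I}}\mathbf{b}_i=l$ the ratio $\rho(y_i)=\mathbf{b}_i'\tilde{\mathbf{M}}\mathbf{b}_i/\mathbf{b}_i'\tilde{\mathbf{I}}\mathbf{b}_i$ is continuous on a compact set. Proposition \ref{stationary_eigenvec} shows that its extrema are $\rho_{min}(L)=-\rho_{max}(L)$ and $\rho_{max}(L)$. Hence $|\rho_i|>\rho_{max}(L)$ makes the HT constraint infeasible, and no solution exists. This can also be seen directly: write $\mathbf{b}_i=\sum c_{kj}\tilde{\mathbf{v}}_{kj}$. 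Then
\[
\mathbf{b}_i'\tilde{\mathbf{M}}\mathbf{b}_i=\sum_{k,j}\lambda_k\tilde{\sigma}_jc_{kj}^2 \quad\textrm{and}\quad \mathbf{b}_i'\tilde{\mathbf{I}}\mathbf{b}_i=\sum_{k,j}\tilde{\sigma}_jc_{kj}^2,
\]
so the ratio is a convex combination of the $\lambda_k$ with weights $\tilde{\sigma}_jc_{kj}^2/\sum\tilde{\sigma}_jc_{kj}^2$. It therefore lies in $[\lambda_L,\lambda_1]$.

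\emph{First boundary case.} The same convex-combination representation shows that $\rho(y_i)=\lambda_1$ holds if and only if every weight with $k\neq 1$ vanishes. The reason is that $\tilde{\sigma}_j>0$ by full rank, and the eigenvalues $\lambda_k$ are distinct with $\lambda_1$ strictly largest. The feasible set is therefore exactly $\{\sum_j c_{1j}\tilde{\mathbf{v}}_{1j}:\sum_j\tilde{\sigma}_jc_{1j}^2=l\}$, which is consistent with Proposition \ref{stationary_eigenvec} since every $\mathbf{b}_{ij}\propto\mathbf{v}_1$. On this set the objective is
\[
\boldsymbol{\gamma}_{i\cdot\delta}'\tilde{\mathbf{I}}\mathbf{b}_i=\sum_{k,j}w_{ikj}\tilde{\sigma}_jc_{kj}=\sum_j\tilde{\sigma}_jw_{i1j}c_{1j},
\]
where we use that $\tilde{\mathbf{v}}_{kj}$ are orthonormal eigenvectors of $\tilde{\mathbf{I}}$ with eigenvalue $\tilde{\sigma}_j$. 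This reduces the problem to maximizing a linear form subject to a weighted sphere.

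Apply Cauchy–Schwarz in the inner product $\langle a,b\rangle=\sum_j\tilde{\sigma}_ja_jb_j$:
\[
\sum_j\tilde{\sigma}_jw_{i1j}c_{1j}\leq\sqrt{\sum_j\tilde{\sigma}_jw_{i1j}^2}\,\sqrt{l}.
\]
Equality holds if and only if $c_{1j}=\mu w_{i1j}$ with $\mu>0$. The hypothesis $\sum_j|w_{i1j}|\neq0$ guarantees that the upper bound is positive and the maximizer is unique. Normalizing gives $\mu=\sqrt{l}/\sqrt{\sum_j\tilde{\sigma}_jw_{i1j}^2}$, which is the stated formula, and the result is nonzero. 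The second boundary case is identical, with $\lambda_L$ the strictly smallest eigenvalue and index $1$ replaced by $L$.

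The main obstacle is the bookkeeping when the $\tilde{\sigma}_j$ are not distinct. In that case the eigenvectors $\mathbf{v}_{\sigma j}$ are not unique and $\tilde{\mathbf{M}}$ has repeated eigenvalues $\lambda_k\tilde{\sigma}_j$ across different $k$. I will handle this by working with the fixed orthonormal basis $\tilde{\mathbf{v}}_{kj}=\mathbf{v}_{\sigma j}\otimes\mathbf{v}_k$ throughout, and by relying only on the distinctness of the $\lambda_k$ rather than on eigenvalues of $\tilde{\mathbf{M}}$. I also need to check that $L\geq2$ ensures $\lambda_1>\lambda_2$, so that the boundary set is exactly the $k=1$ block. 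The scaling $l$ enters only through the length constraint; the paper writes $1$ in \eqref{mcrit1} but $l$ in the formula, and I will carry $l$ throughout.
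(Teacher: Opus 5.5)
Your proposal is correct and follows essentially the paper's route: you expand $\mathbf{b}_i$ in the joint eigenbasis $\tilde{\mathbf{v}}_{kj}$ and show that at $\rho_i=\lambda_1$ (or $\lambda_L$) the two constraints force every coefficient outside the $k=1$ (or $k=L$) block to vanish. You then maximize $\sum_j\tilde{\sigma}_jw_{i1j}c_{1j}$ on the weighted sphere $\sum_j\tilde{\sigma}_jc_{1j}^2=l$, and you read off the exterior case from Proposition \ref{stationary_eigenvec}, exactly as the paper does. The differences are only refinements: your convex-combination reading of the Rayleigh quotient replaces the paper's elimination of $\alpha_{21}^2$ (so it does not need $\lambda_2\neq0$), and Cauchy--Schwarz replaces the Lagrange step while also certifying that the candidate is the unique global maximizer rather than merely a stationary point.
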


\textbf{Proof}:  The proof of the first assertion is derived from Proposition \eqref{stationary_eigenvec}. For the second assertion, we assume $i=1$ and analyze the M-SSA Criterion \eqref{mcrit1} defined as:
\begin{eqnarray}
\textrm{max}_{\mathbf{b}_{1}}~\boldsymbol{\gamma}_{1\cdot\delta}'\tilde{\mathbf{I}}\mathbf{b}_{1}&&\label{crit_ssa_proof}\\  
\mathbf{b}_{1}'\mathbf{\tilde{M}}\mathbf{b}_{1}&=&\rho_1\label{hyperbola_mult}\\
\mathbf{b}_{1}'\mathbf{\tilde{I}}\mathbf{b}_{1}&=&1\label{unit_circle_mult},
\end{eqnarray}
where we assume $l=1$ in the length constraint. Applying spectral decomposition to $\mathbf{b}_1$ yields:  
\begin{eqnarray}\label{specdecdecb_multi}
\mathbf{b}_1:=\sum_{k=1}^{L}\sum_{j=1}^n\alpha_{kj}\mathbf{\tilde{v}}_{kj},
\end{eqnarray}
where the two-dimensional index corresponds to the eigenvalues $\tilde{\lambda}_{kj}=\lambda_k\tilde{\sigma}_j$ of $\mathbf{\tilde{M}}$, with $\lambda_k$ ordered from largest to smallest. The variable $\alpha_{kj}=\alpha_{1kj}$ depends on $i=1$, and we omit the complex three dimensional indexing for clarity. Since  $\tilde{\mathbf{v}}_{kj}$ serves as an (normalized) eigenvector of $\mathbf{\tilde{I}}$, with strictly positive eigenvalue $\tilde{\sigma}_j>0$, the length constraint can be reformulated as follows: 
\begin{eqnarray}\label{length_const}
1=\mathbf{b}_{i}'\tilde{\mathbf{I}}\mathbf{b}_{i}=\sum_{j=1}^n\tilde{\sigma}_j\sum_{k=1}^{L}\alpha_{kj}^2,
\end{eqnarray} 
which we refer to as the elliptic (length) constraint. In a similar manner,  the HT constraint (assuming $l=1$) yields the following expression:
\begin{eqnarray*}
\rho_1=\mathbf{b}_1'\mathbf{\tilde{M}}\mathbf{b}_1=\sum_{j=1}^n\tilde{\sigma}_j\sum_{k=1}^L \lambda_k \alpha_{kj}^2
\end{eqnarray*}
which we designate as the hyperbolic HT constraint. We can solve this equation for any $\alpha_{kj}$, at least if  $\lambda_k\neq 0$. Let $\lambda_2\neq 0$\footnote{Since $\lambda_k=\cos(k\pi/(L+1))$, it follows that $\lambda_2\neq 0$ when $L>3$, which is virtually always the case in practical applications. For $L=3$ a similar proof applies when selecting $\lambda_3\neq 0$ instead of $\lambda_2$.} so that 
\begin{eqnarray}\label{alpha2}
\alpha_{21}^2&=&\frac{\rho_1}{\lambda_2\tilde{\sigma}_1}-\sum_{(k,j)\neq(2,1)}\frac{\lambda_k\tilde{\sigma}_j}{\lambda_2\tilde{\sigma}_1}\alpha_{kj}^2,
\end{eqnarray}
where the notation $\sum_{(k,j)\neq(2,1)}$ indicates summation  over all combinations of $(k,j)$ excluding the single pairing $(k,j)=(2,1)$. Similarly, we can express $\alpha_{11}$ in the context of the elliptic constraint as follows:
\begin{eqnarray*}
\alpha_{11}^2&=&\frac{1}{\tilde{\sigma}_1}-\frac{1}{\tilde{\sigma}_1}\sum_{(k,j)\neq(1,1)}\tilde{\sigma}_j\alpha_{kj}^2,
\end{eqnarray*}
which can be rewritten as
\begin{eqnarray*}
\alpha_{11}^2&=&\frac{1}{\tilde{\sigma}_1}-\frac{1}{\tilde{\sigma}_1}\sum_{(k,j)\notin\{(2,1),(1,1)\}}\tilde{\sigma}_j\alpha_{kj}^2-\left(\frac{\rho_1}{\lambda_2\tilde{\sigma}_1}-\sum_{(k,j)\neq(2,1)}\frac{\lambda_k\tilde{\sigma}_j}{\lambda_2\tilde{\sigma}_1}\alpha_{kj}^2\right).
\end{eqnarray*}
Here, we substitute Equation \eqref{alpha2}. This formulation characterizes the intersection of elliptic and hyperbolic constraints: if this intersection is empty, then the M-SSA problem lacks a solution. Assuming $\rho_1=\lambda_1=\max_k\lambda_k$ and isolating $\tilde{\sigma}_1\alpha_{11}^2$ to the left side of the last equation, we derive:
\begin{eqnarray}
\tilde{\sigma}_1\alpha_{11}^2&=&\frac{\lambda_2-\rho_1}{\lambda_2-\lambda_1}-\sum_{(k,j)\notin\{(2,1),(1,1)\}}\tilde{\sigma}_j\frac{\lambda_2-\lambda_k}{\lambda_2-\lambda_1}\alpha_{kj}^2\label{solve_alpha_11}\\
&=&1-\sum_{(k,j)\notin\{(2,1),(1,1:n)\}}\tilde{\sigma}_j\frac{\lambda_2-\lambda_k}{\lambda_2-\lambda_1}\alpha_{kj}^2-\sum_{j=2}^n \tilde{\sigma}_j\frac{\lambda_2-\lambda_1}{\lambda_2-\lambda_1}\alpha_{1j}^2\nonumber\\
&=&1-\sum_{(k,j)\notin\{(2,1),(1,1:n)\}}\tilde{\sigma}_j\frac{\lambda_2-\lambda_k}{\lambda_2-\lambda_1}\alpha_{kj}^2-\sum_{j=2}^n \tilde{\sigma}_j\alpha_{1j}^2.\nonumber
\end{eqnarray}
where the notation $\sum_{(k,j)\notin\{(2,1),(1,1:n)\}}$ signifies that the sum extends over all combinations of $(k,j)$ excluding $k=1$ and the single pairing $(k=2,j=1)$. We can rewrite this expression as:
\begin{eqnarray}\label{ito_v}
\sum_{j=1}^n \tilde{\sigma}_j\alpha_{1j}^2&=&1-\sum_{(k,j)\notin\{(2,1),(1,1:n)\}}\tilde{\sigma}_j\frac{\lambda_2-\lambda_k}{\lambda_2-\lambda_1}\alpha_{kj}^2.
\end{eqnarray}
Since $\tilde{\sigma}_j>0$ (full rank) and $\lambda_k<\lambda_j$, if $k>j$, we deduce 
\[
\tilde{\sigma}_j\frac{\lambda_2-\lambda_k}{\lambda_2-\lambda_1}\left\{\begin{array}{cc}=0&k=2\\<0&k>2\end{array}\right..
\]
It follows that $\alpha_{kj}=0$ for all pairs $(k,j)$ with $k>2$ since otherwise the condition  $\sum_{j=1}^n \tilde{\sigma}_j\alpha_{1j}^2>1$ implied by Equation \eqref{ito_v} would contradict the elliptic (length) constraint specified in Equation \eqref{length_const}. Consequently, we deduce: 
\begin{eqnarray}\label{reduced_rank_el_eq}
\sum_{j=1}^n \tilde{\sigma}_j\alpha_{1j}^2=1\quad \text{and} \quad \alpha_{kj}=0, k\neq 1,
\end{eqnarray} 
where all zero-constraints are attributable to the elliptic constraint when $\boldsymbol{\Sigma}$ is of full rank. It is important to note that if $\rho_1 > \lambda_1$, then the expression 
\[
\frac{\lambda_2-\rho_1}{\lambda_2-\lambda_1}>1
\]
in Equation \eqref{solve_alpha_11} would conflict with the elliptic constraint, thereby providing an alternative proof of the first claim (exterior point).  In summary, when $\rho_1=\lambda_1$ (the first boundary case), the intersection of the ellipse and hyperbola manifests as a  `reduced-rank' $(n-1)$-dimensional ellipse as delineated by Equation \eqref{reduced_rank_el_eq}. The spectral decomposition expressed in Equation \eqref{specdecdecb_multi} then simplifies to 
\[
\mathbf{b}_1=\sum_{j=1}^n \alpha_{1j}\tilde{\mathbf{v}}_{1j}.
\]
Substituting this expression into the objective function represented in equation \eqref{crit_ssa_proof} results in 
\[
\boldsymbol{\gamma}_{1\cdot\delta}'\tilde{\mathbf{I}}\mathbf{b}_{1}=\sum_{j=1}^n w_{11j}\tilde{\sigma}_j\alpha_{1j},
\]
leading to the corresponding boundary (`reduced-rank') M-SSA problem defined as:
\begin{eqnarray*}
\max_{\alpha_{1j}}\sum_{j=1}^n w_{11j}\tilde{\sigma}_j\alpha_{1j}~~\textrm{subject~to}~~\sum_{j=1}^n \tilde{\sigma}_j\alpha_{1j}^2=1.
\end{eqnarray*}
We can now specify the Lagrangian function 
\[
\mathcal{L}:=\sum_{j=1}^n w_{11j}\tilde{\sigma}_j\alpha_{1j}-\lambda_{\mathcal{L}}\left(\sum_{j=1}^n \tilde{\sigma}_j\alpha_{1j}^2-1\right)
\]
where $\lambda_{\mathcal{L}}$ is the Lagrange multiplier. The corresponding Lagrangian equations yield
\[
\tilde{\sigma}_jw_{11j}-2\lambda_{\mathcal{L}}\tilde{\sigma}_j\alpha_{1j}=0~,~j=1,...,n
\]
which leads to the expression
\[\alpha_{1j}=w_{11j}/(2\lambda_{\mathcal{L}}).
\] 
Thus we can express
\[
\mathbf{b}_1=\frac{1}{2\lambda_{\mathcal{L}}}\sum_{j=1}^n w_{11j}\tilde{\mathbf{v}}_{1j}\neq \mathbf{0}
\]
with the proportionality term 
\[
\frac{1}{2\lambda_{\mathcal{L}}}:=\frac{1}{\sqrt{\sum_{j=1}^n\tilde{\sigma}_jw_{11j}^2}}
\]
ensuring compliance with the elliptic length constraint (assuming $l=1$). A proof of the last assertion (second boundary case) follows a similar reasoning, by symmetry.\hfill \qedsymbol{}\\

\textbf{Remark}:
If $|\rho_i|<\rho_{max}(L)$ (an interior point), then the intersection of the ellipse and hyperbola results in a subspace of dimension $Ln-2$ in contrast to the boundary scenario described by Equation \eqref{reduced_rank_el_eq}, which has dimension $n-1$. We will subsequently derive the corresponding M-SSA predictor for the `interior point' case, which is of primary interest in practical applications. For the purposes of this discussion, we will assume $Ln > 2$ to ensure that the optimization problem remains non-trivial (since otherwise the solution would be determined by the two constraints).

\begin{Theorem}\label{lambda_mult}
Assuming $Ln > 2$, we define a set of regularity conditions for the M-SSA optimization problem \eqref{mcrit1} under the stipulated assumptions of WN, full-rank covariance matrix $\boldsymbol{\Sigma}$ and $\boldsymbol{\gamma}_{\delta}\neq\mathbf{0}$:
\begin{enumerate}
\item $\rho_i\neq \rho_{MSE}$ (non-degenerate case).
\item $|\rho_i|<\rho_{max}(L)$ (interior point).
\item The MSE-estimate $\boldsymbol{\gamma}_{i\cdot\delta}$ exhibits complete spectral support (completeness).
\end{enumerate}
Then the following results hold: 
\begin{enumerate}
\item \label{ass1_mult} Under the above regularity assumptions, the solution $\mathbf{b}_{i}$ to the M-SSA Criterion \eqref{mcrit1} can be expressed in a general parametric form as follows:
\begin{eqnarray}\label{ssa_mult}
\mathbf{b}_{i}=D_i\mathbf{N}_i^{-1}\mathbf{\tilde{I}}\boldsymbol{\gamma}_{i\cdot\delta}=D_i\sum_{k=1}^L\frac{1}{(2\lambda_{k}-\nu_i)}\left(\sum_{j=1}^n w_{ikj}\mathbf{\tilde{v}}_{kj}\right),
\end{eqnarray}
where $D_i\neq 0$, $\nu_i\in \mathbb{R}-\{2\lambda_k|k=1,...,L\}$ and $\mathbf{N}_i:=2\mathbf{\tilde{M}}-\nu_i\mathbf{\tilde{I}}$ is an invertible and symmetric $nL\times nL$ matrix. Additionally,  the scaling $D_i=D_i(\nu_i,l)$ is contingent on $\nu_i$ and the length constraint with its sign determined by the requirement  for a positive objective function, specifically $\boldsymbol{\gamma}_{i\cdot\delta}'\tilde{\mathbf{I}}\mathbf{b}_{i}>0$. 

\item \label{ass3_mult}The first-order ACF $\rho(\nu_i)$ of $y_{it}$ based on $\mathbf{b}_{i}=\mathbf{b}_{i}(\nu_i)$ in Equation \eqref{ssa_mult}, is given by:
\begin{eqnarray}\label{spec_dec_rho_mult}
\rho(\nu_i):=\frac{\mathbf{b}_{i}(\nu_i)'\mathbf{\tilde{M}}\mathbf{b}_{i}(\nu_i)}{\mathbf{b}_{i}(\nu_i)'\mathbf{\tilde{I}}\mathbf{b}_{i}(\nu_i)}=\frac{\sum_{k=1}^{L}\lambda_k\frac{1}{(2\lambda_k-\nu_i)^2}\left(\sum_{j=1}^{n}\tilde{\sigma}_jw_{ikj}^2\right)}{\sum_{k=1}^{L}\frac{1}{(2\lambda_k-\nu_i)^2}\left(\sum_{j=1}^{n}\tilde{\sigma}_jw_{ikj}^2\right)}
\end{eqnarray}
Any $\rho_i$ such that $|\rho_i|<\rho_{max}(L)$ is permissible under the HT constraint, implying the existence of $\nu_i$ such that $\rho(\nu_i)=\rho_i$.

\item \label{ass4_mult} The derivative  $d \rho(\nu_i)/d\nu_i$ is strictly negative for $\nu_i\in\{\nu||\nu|>2\rho_{max}(L)\}$.  Furthermore, it holds that 
\[
\textrm{max}_{\nu_i<-2\rho_{max}(L)}\rho(\nu_i)=\textrm{min}_{\nu_i>2\rho_{max}(L)}\rho(\nu_i)=\rho_{i,MSE},
\] 
where $\rho_{i,MSE}=\lim_{|\nu_i|\to\infty}\rho(\nu_i)$ represents the first-order ACF of the MSE predictor  $y_{it,MSE}$. Finally, for   any $\nu_{1i} > 2\rho_{max}(L)$ and $\nu_{2i}< -2\rho_{max}(L)$, it follows that $\rho_i(\nu_{1i})>\rho_i(\nu_{2i})$.

\item \label{ass5} For $\nu_i\in\{\nu||\nu|>2\rho_{max}(L)\}$ the derivatives of the objective function and the first-order ACF  as a function of $\nu_i$ are interconnected by the following relation:
\begin{eqnarray}\label{ficcc}
-\textrm{sign}(\nu_i)\frac{d\rho(y_i(\nu_i),z_i,\delta))}{d\nu_i}=\frac{\sqrt{\boldsymbol{\gamma}_{i\cdot\delta}'\mathbf{\tilde{I}}'\mathbf{N}_i^{-2}\mathbf{\tilde{I}}\boldsymbol{\gamma}_{i\cdot\delta}}}{\sqrt{\boldsymbol{\gamma}_{i\cdot\delta}'\mathbf{\tilde{I}}'\mathbf{\tilde{I}}\boldsymbol{\gamma}_{i\cdot\delta}}}\frac{d\rho(\nu_i)}{d\nu_i}<0,
\end{eqnarray} 
where $\mathbf{N}_i^{-2}:=\mathbf{N}_i^{-1}~'\mathbf{N}_i^{-1}=\left(\mathbf{N}_i^{-1}\right)^2$, $y_{it}(\nu_i)$ denotes the output of the filter with weights $\mathbf{b}_{i}$ determined by Equation \eqref{ssa_mult}, and $\rho(y_i(\nu_i),z_i,\delta)$
corresponds to the target correlation and objective function.

\end{enumerate}
\end{Theorem}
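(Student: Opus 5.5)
The plan is to reduce everything to the spectral coordinates of Proposition \eqref{bound12}. We write $\mathbf{b}_i=\sum_{k,j}\alpha_{kj}\tilde{\mathbf{v}}_{kj}$ and use $\tilde{\mathbf{I}}\tilde{\mathbf{v}}_{kj}=\tilde\sigma_j\tilde{\mathbf{v}}_{kj}$ and $\tilde{\mathbf{M}}\tilde{\mathbf{v}}_{kj}=\lambda_k\tilde\sigma_j\tilde{\mathbf{v}}_{kj}$. In these coordinates the criterion becomes
\[
\max\ \sum_{k,j}\tilde\sigma_j w_{ikj}\alpha_{kj}\quad\text{subject to}\quad \sum_{k,j}\tilde\sigma_j\lambda_k\alpha_{kj}^2=\rho_i,\qquad \sum_{k,j}\tilde\sigma_j\alpha_{kj}^2=1 .
\]
For assertion \ref{ass1_mult} we form the Lagrangian with multipliers $\tilde\lambda_1,\tilde\lambda_2$ on the two constraints. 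Stationarity gives $\tilde{\mathbf{I}}\boldsymbol{\gamma}_{i\cdot\delta}=2\tilde\lambda_1\tilde{\mathbf{M}}\mathbf{b}_i+2\tilde\lambda_2\tilde{\mathbf{I}}\mathbf{b}_i$.

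We first exclude $\tilde\lambda_1=0$. In that case $\mathbf{b}_i\propto\boldsymbol{\gamma}_{i\cdot\delta}$, so $\rho_i=\rho_{MSE}$, which is excluded by the non-degeneracy assumption. We then divide by $\tilde\lambda_1$ and set $\nu_i:=-2\tilde\lambda_2/\tilde\lambda_1$, which gives $\mathbf{N}_i\mathbf{b}_i=D_i^{-1}\tilde{\mathbf{I}}\boldsymbol{\gamma}_{i\cdot\delta}$.

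Next we show $\nu_i\neq 2\lambda_k$, so that $\mathbf{N}_i$ is invertible. The matrix $\mathbf{N}_i$ acts on $\tilde{\mathbf{v}}_{kj}$ as multiplication by $\tilde\sigma_j(2\lambda_k-\nu_i)$. If $\nu_i=2\lambda_{k_0}$, the $k_0$-block of the equation reads $0=\tilde\sigma_j w_{ik_0j}$ for all $j$. This contradicts complete spectral support. Inverting, and using that $\tilde{\mathbf{I}}$ and $\tilde{\mathbf{M}}$ commute so the $\tilde\sigma_j$ cancel, gives the displayed sum $\sum_k(2\lambda_k-\nu_i)^{-1}\sum_j w_{ikj}\tilde{\mathbf{v}}_{kj}$. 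The sign of $D_i$ is fixed by maximality: $\mathbf{b}$ and $-\mathbf{b}$ satisfy both constraints, so the maximizer has a positive objective. The magnitude of $D_i$ is fixed by the length constraint. Existence of a maximizer follows from compactness of the constraint set, which is nonempty for interior $\rho_i$.

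For assertion \ref{ass3_mult}, substituting $\alpha_{kj}=D_i w_{ikj}/(2\lambda_k-\nu_i)$ into the two quadratic forms gives \eqref{spec_dec_rho_mult} directly. To show every interior $\rho_i$ is attained, write $\rho(\nu)$ as a weighted mean of the $\lambda_k$ with weights $c_k(\nu)\propto s_k/(2\lambda_k-\nu)^2$, where $s_k=\sum_j\tilde\sigma_jw_{ikj}^2>0$ by completeness. As $\nu\downarrow 2\lambda_1$ the weight concentrates on $k=1$, so $\rho\to\lambda_1=\rho_{max}(L)$. As $\nu\uparrow 2\lambda_L$ it concentrates on $k=L$, so $\rho\to-\rho_{max}(L)$. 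Since $\rho(\nu)$ is continuous on the interval $(2\lambda_L,2\lambda_1)$ away from the poles, we restrict to one gap $(2\lambda_{k+1},2\lambda_k)$. At the two poles of that gap $\rho$ tends to $\lambda_{k+1}$ and $\lambda_k$. The intermediate value theorem across consecutive gaps then covers $(-\rho_{max},\rho_{max})$. The values $\lambda_k$ themselves, if interior, are reached through the unbounded branches or a neighbouring gap.

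For assertions \ref{ass4_mult} and \ref{ass5}, we differentiate the weighted mean. This gives
\[
\rho'(\nu)=2\,\mathrm{Cov}_c\!\left(\lambda,\frac{1}{2\lambda-\nu}\right),
\]
a covariance under the weights $c_k$. For $\nu>2\lambda_1$ the map $\lambda\mapsto 1/(2\lambda-\nu)$ is decreasing, because every $2\lambda_k-\nu<0$. For $\nu<2\lambda_L$ it is also decreasing, because every $2\lambda_k-\nu>0$. The covariance of an increasing and a decreasing function of $\lambda$ is nonpositive. It is strictly negative because at least two distinct $\lambda_k$ carry positive weight, which uses completeness and $L\ge2$. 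This is where I expect the main obstacle: keeping this sign argument clean and strict on both branches.

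The limits $|\nu|\to\infty$ give weights $\propto s_k$, so $\rho\to\rho_{i,MSE}$. Monotonic decrease on each branch then gives the stated max/min equalities, and the inequality $\rho(\nu_{1i})>\rho_{MSE}>\rho(\nu_{2i})$ follows.

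For \eqref{ficcc}, the target correlation is proportional to
\[
\frac{\boldsymbol{\gamma}'\tilde{\mathbf{I}}\mathbf{N}^{-1}\tilde{\mathbf{I}}\boldsymbol{\gamma}}{\sqrt{\boldsymbol{\gamma}'\tilde{\mathbf{I}}\mathbf{N}^{-1}\tilde{\mathbf{I}}\mathbf{N}^{-1}\tilde{\mathbf{I}}\boldsymbol{\gamma}}}.
\]
We differentiate it in spectral coordinates using $d\mathbf{N}^{-1}/d\nu=\mathbf{N}^{-1}\tilde{\mathbf{I}}\mathbf{N}^{-1}$. The aim is to show that the result equals a positive factor times a covariance of the same form as $\rho'$. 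We then match constants, following the univariate computation in Wildi (2024) with $w_k^2$ replaced by $s_k$. This bookkeeping is the secondary difficulty.
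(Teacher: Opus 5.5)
Your plan follows the paper's proof step for step: Lagrange stationarity, exclusion of a vanishing HT multiplier, invertibility of $\mathbf N_i$ via completeness, spectral substitution, intermediate values, the sign of $\rho'$ on the unbounded branches, the limit $|\nu_i|\to\infty$, and differentiation of the target correlation. In places it is tighter than the paper:
\begin{itemize}
\item You actually derive that the HT multiplier is nonzero from non-degeneracy; the paper only asserts that the HT constraint is ``active''.
\item Your identity $\rho'(\nu)=2\,\mathrm{Cov}_c\bigl(\lambda,(2\lambda-\nu)^{-1}\bigr)$ is the paper's symmetrized double sum of terms proportional to $-\tilde w_{ik}^2\tilde w_{ij}^2(\lambda_k-\lambda_j)^2(2\lambda_k-\nu_i)^{-3}(2\lambda_j-\nu_i)^{-3}$, written as a covariance.
\item You notice that the values $\lambda_k$ need separate care in the intermediate-value step. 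In fact the two unbounded branches alone already cover $(-\rho_{max}(L),\rho_{max}(L))\setminus\{\rho_{i,MSE}\}$.
\end{itemize}
Two small points. First, with $D_i:=1/\tilde\lambda_1$ your equation should read $\mathbf N_i\mathbf b_i=D_i\tilde{\mathbf I}\boldsymbol\gamma_{i\cdot\delta}$. Second, your Lagrange step, like the paper's, presupposes that $\tilde{\mathbf I}\mathbf b_i$ and $\tilde{\mathbf M}\mathbf b_i$ are linearly independent at the maximizer. Dependence forces $\mathbf b_i\in\mathrm{span}_j\{\tilde{\mathbf v}_{kj}\}$, hence $\rho_i=\lambda_k$ with $1<k<L$. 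At such a point, mixing in $\tilde{\mathbf v}_{k-1,j}$ and $\tilde{\mathbf v}_{k+1,j'}$ at order $\epsilon$ (in proportions that keep the HT constraint, rescaling at order $\epsilon^2$ for length) raises the objective at order $\epsilon$ whenever $w_{i,k-1,j}\neq 0$. So by completeness the maximizer is a regular point.

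The step that will not close as announced is ``match constants'' in \eqref{ficcc}: the printed factor is wrong unless $\tilde{\mathbf I}=\mathbf I$. Scale invariance shows this. Replacing $\boldsymbol\Sigma$ by $c\boldsymbol\Sigma$ leaves $\mathbf N_i^{-1}\tilde{\mathbf I}\boldsymbol\gamma_{i\cdot\delta}$, $\rho(\nu_i)$ and the target correlation unchanged. It does, however, divide $\sqrt{\boldsymbol\gamma_{i\cdot\delta}'\tilde{\mathbf I}'\mathbf N_i^{-2}\tilde{\mathbf I}\boldsymbol\gamma_{i\cdot\delta}}/\sqrt{\boldsymbol\gamma_{i\cdot\delta}'\tilde{\mathbf I}'\tilde{\mathbf I}\boldsymbol\gamma_{i\cdot\delta}}$ by $c$, while $\rho'\neq 0$.

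Your own ingredients are the correct ones: $d\mathbf N_i^{-1}/d\nu_i=\mathbf N_i^{-1}\tilde{\mathbf I}\mathbf N_i^{-1}$, and $\mathbf b_i'\tilde{\mathbf I}\mathbf b_i$ in the denominator. Carried through, they give a different factor. Put $u_k:=(2\lambda_k-\nu_i)^{-1}$, $A:=\sum_k s_ku_k$, $B:=\sum_k s_ku_k^2$, $C:=\sum_k s_k$ and $E:=\sum_k s_ku_k^3$.
\begin{itemize}
\item Since $du_k/d\nu_i=u_k^2$ and $2\lambda_ku_k=1+\nu_iu_k$, you get $\rho(\nu_i)=\nu_i/2+A/(2B)$, hence $\rho'=(B^2-AE)/B^2$.
\item The target correlation is $\mathrm{sign}(D_i)A/\sqrt{BC}$, with derivative $\mathrm{sign}(D_i)(B^2-AE)/(B^{3/2}C^{1/2})$.
\item $\mathrm{sign}(D_i)=-\mathrm{sign}(\nu_i)$, because $A$ has the sign of $-\nu_i$ on both branches.
\end{itemize}
Together these give
\[
-\mathrm{sign}(\nu_i)\frac{d\rho(y_i(\nu_i),z_i,\delta)}{d\nu_i}=\frac{\sqrt{\boldsymbol{\gamma}_{i\cdot\delta}'\tilde{\mathbf{I}}\mathbf{N}_i^{-1}\tilde{\mathbf{I}}\mathbf{N}_i^{-1}\tilde{\mathbf{I}}\boldsymbol{\gamma}_{i\cdot\delta}}}{\sqrt{\boldsymbol{\gamma}_{i\cdot\delta}'\tilde{\mathbf{I}}\boldsymbol{\gamma}_{i\cdot\delta}}}\,\frac{d\rho(\nu_i)}{d\nu_i}<0 .
\]
This factor $\sqrt{B/C}$ is exactly what your substitution $w_k^2\to s_k$ in the univariate formula produces, and it agrees with \eqref{ficcc} only when $\tilde{\mathbf I}=\mathbf I$.

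The paper reaches the printed factor through three steps, each of which silently sets $\tilde{\mathbf I}=\mathbf I$:
\begin{itemize}
\item writing $\mathbf b_i'\tilde{\mathbf I}\mathbf b_i$ as $D_i^2\boldsymbol\gamma_{i\cdot\delta}'\tilde{\mathbf I}\mathbf N_i^{-2}\tilde{\mathbf I}\boldsymbol\gamma_{i\cdot\delta}$;
\item using $d\mathbf N_i^{-1}/d\nu_i=\mathbf N_i^{-2}$, although $d\mathbf N_i/d\nu_i=-\tilde{\mathbf I}$;
\item using $2\tilde{\mathbf M}\mathbf N_i^{-k}=\mathbf N_i^{-k+1}+\nu_i\mathbf N_i^{-k}$.
\end{itemize}
So you should state and prove Assertion~4 with the factor $\sqrt{B/C}$ above. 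The sign conclusion, which is what Theorem~\ref{cor3} relies on, is unaffected because both factors are positive.
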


\textbf{Proof}: Consider the Lagrangian defined as 
\begin{eqnarray*}
\mathcal{L}_i:=\boldsymbol{\gamma}_{i\cdot\delta}'\tilde{\mathbf{I}}\mathbf{b}_{i}-\lambda_{i1}(\mathbf{b}_{i}'\mathbf{\tilde{I}}\mathbf{b}_{i}-1)-\lambda_{i2}(\mathbf{b}_{i}'\mathbf{\tilde{M}}\mathbf{b}_{i}-\rho_i),
\end{eqnarray*}
where $\lambda_{i1},\lambda_{i2}$ denote the Lagrange multipliers and we assume $l=1$ in the length constraint. Since the intersection of the ellipse (representing the length constraint) and hyperbola (representing the HT constraint) is devoid of boundary points, the solution $\mathbf{b}_{i}$ to the M-SSA problem must satisfy the Lagrangian equations given by
\[
\tilde{\mathbf{I}}\boldsymbol{\gamma}_{i\cdot\delta}=\lambda_{i1}(\mathbf{\tilde{I}}+\mathbf{\tilde{I}}')\mathbf{b}_{i}+\lambda_{i2} (\mathbf{\tilde{M}}+\mathbf{\tilde{M}}')\mathbf{b}_{i}=\lambda_{i1} 2\mathbf{\tilde{I}}\mathbf{b}_{i}+\lambda_{i2} 2\mathbf{\tilde{M}}\mathbf{b}_{i}.
\]
The second regularity assumption (non-degenerate case) implies that the HT constraint, as expressed in  Equation \eqref{hyperbola_mult}, is `active', leading to $\lambda_{i2}\neq 0$.  Dividing through by $\lambda_{i2}$ yields 
\begin{eqnarray}
D_i\tilde{\mathbf{I}}\boldsymbol{\gamma}_{i\cdot\delta}&=& \mathbf{N}_i\mathbf{b}_i\label{diff_non_hom_matrix_mult}\\
\mathbf{N}_i&:=&(2\mathbf{\tilde{M}}-\nu_i\mathbf{\tilde{I}}),\label{labelNu_mult}
\end{eqnarray}
where $D_i=1/\lambda_{i2}$  and $\nu_i=-2\frac{\lambda_{i1}}{\lambda_{i2}}$. Since the M-SSA solution  is derived from the $nL-2>0$ dimensional intersection of the ellipse defined in Equation \eqref{unit_circle_mult} and  the HT hyperbola characterized in Equation \eqref{hyperbola_mult}, we can infer that the  objective function is not overruled by the constraint, implying that $|\lambda_{i2}|<\infty$. Consequently, $D_i\neq 0$ in Equation \eqref{diff_non_hom_matrix_mult} holds true. The eigenvalues of $\mathbf{N}_i$ are given by $\tilde{\sigma}_j(2\lambda_{k}-\nu_i)$, $1\leq k\leq L$, $1\leq j\leq n$. It is important to note that if $\mathbf{b}_i$ is a solution to the M-SSA problem, then $\nu_i/2$ cannot be an eigenvalue of $\mathbf{M}$. To illustrate this, assume $\nu_i/2=\lambda_{k_0}$ for some $k_0\in\{1,...,L\}$. Under this assumption, $\mathbf{N}_i$ would   map the eigenvectors $\tilde{\mathbf{v}}_{k_0j}$, for $j=1,...,n$,  to zero in Equation \eqref{diff_non_hom_matrix_mult}. We then look at the left-hand side of the equation:   
\[
D_i\tilde{\mathbf{I}}\boldsymbol{\gamma}_{i\cdot\delta}=D_i\tilde{\mathbf{I}}\sum_{k=1}^L\sum_{j=1}^nw_{ikj}\tilde{\mathbf{v}}_{kj}=D_i\sum_{k=1}^L\sum_{j=1}^n\tilde{\sigma}_jw_{ikj}\tilde{\mathbf{v}}_{kj}.
\]
If $\tilde{\mathbf{v}}_{k_0j}$, $j=1,...,n$, is mapped to zero, then $D_i\tilde{\sigma}_jw_{ik_0j}=0$, $j=1,...,n$. Consequently, $w_{ik_0j}=0$, $j=1,...,n$, (because $D_i\tilde{\sigma}_j\neq 0$), thereby contradicting the last regularity assumption (completeness). Thus, we conclude that $\nu_i/2$ cannot be an eigenvalue of $\mathbf{M}$, $\nu_i\in \mathbb{R}-\{2\lambda_k|k=1,...,L\}$, $\mathbf{N}_i^{-1}$ exists and 
\[
\mathbf{N}_i^{-1}=\mathbf{\tilde{V}}\mathbf{D}_{\nu i}^{-1}\mathbf{\tilde{V}}'.
\] 
The diagonal matrix $\mathbf{D}_{\nu i}^{-1}$  is characterized by well-defined entries of the form  $1/(\tilde{\sigma}_j(2\lambda_{k}-\nu_i))$. We can proceed to solve Equation  \eqref{diff_non_hom_matrix_mult} for $\mathbf{b}_i$, yielding the following expression:
\begin{eqnarray}\label{diff_non_hom_matrixe_mult}
\mathbf{b}_i&=&D_i\mathbf{N}_i^{-1}\mathbf{\tilde{I}}\boldsymbol{\gamma}_{i\cdot\delta}\\
&=&D_i\mathbf{\tilde{V}}\mathbf{D}_{\nu i}^{-1}\mathbf{\tilde{V}}' \mathbf{\tilde{I}}\mathbf{\tilde{V}}\mathbf{w}_i\nonumber\\
&=&D_i\sum_{k=1}^L\sum_{j=1}^n \frac{\tilde{\sigma}_jw_{ikj}}{\tilde{\sigma}_j(2\lambda_{k}-\nu_i)}\mathbf{\tilde{v}}_{kj}\nonumber\\
&=&D_i\sum_{k=1}^L\frac{1}{(2\lambda_{k}-\nu_i)}\left(\sum_{j=1}^n w_{ikj}\mathbf{\tilde{v}}_{kj}\right).\label{specdecb_mult0}
\end{eqnarray}
The proof of the second assertion is derived from Equation \eqref{specdecb_mult0} and relies on the orthonormality of $\tilde{\mathbf{v}}_{kj}$: 
\begin{eqnarray}
\rho(\nu_i)&=&\frac{\mathbf{b}_{i}(\nu_i)'\mathbf{\tilde{M}}\mathbf{b}_{i}(\nu_i)}{\mathbf{b}_{i}(\nu_i)'\mathbf{\tilde{I}}\mathbf{b}_{i}(\nu_i)}\\
&=&\displaystyle{\frac{\left(D_i\sum_{k=1}^L\sum_{j=1}^n  \frac{w_{ikj}}{2\lambda_{k }-\nu_i}\mathbf{\tilde{v}}_{kj}\right)'\mathbf{\tilde{M}}\left(D_i\sum_{k=1}^L\sum_{j=1}^n  \frac{w_{ikj}}{2\lambda_{k }-\nu_i}\mathbf{\tilde{v}}_{kj}\right)}{\left(D_i\sum_{k=1}^L\sum_{j=1}^n  \frac{w_{ikj}}{2\lambda_{k }-\nu_i}\mathbf{\tilde{v}}_{kj}\right)'\mathbf{\tilde{I}}\left(D_i\sum_{k=1}^L\sum_{j=1}^n  \frac{w_{ikj}}{2\lambda_{k }-\nu_i}\mathbf{\tilde{v}}_{kj}\right)}}\nonumber\\
&=&\displaystyle{\frac{\sum_{k=1}^{L}\lambda_k\frac{1}{(2\lambda_k-\nu_i)^2}\left(\sum_{j=1}^{n}\tilde{\sigma}_jw_{ikj}^2\right)}{\sum_{k=1}^{L}\frac{1}{(2\lambda_k-\nu_i)^2}\left(\sum_{j=1}^{n}\tilde{\sigma}_jw_{ikj}^2\right)}}.\label{autocorr_mult} 
\end{eqnarray}
It is important to note that the first-order autocorrelation $\rho(\nu_i)$ of $y_{it}$ is influenced by $\boldsymbol{\Sigma}$, since the eigenvalues $\tilde{\sigma}_j$ do not cancel in Equation \eqref{autocorr_mult}. Consequently, selecting $\nu_i = \nu_{0i}$ such that $\rho_i(\nu_{0i}) = \rho_i$ (in accordance with the HT constraint) indicates that the SSA solution $\mathbf{b}_i(\nu_{0i})$ generally depends on $\boldsymbol{\Sigma}$. 
This is despite the fact that, as specified by Equation \eqref{specdecb_mult0}, the expression for  $\mathbf{b}_i$ appears to be independent of $\boldsymbol{\Sigma}$, owing to the cancellation of the eigenvalues $\tilde{\sigma}_j$ in this expression. As $\nu_i$ approaches $2\lambda_k$, we find that $\rho(\nu_i)$ converges to $\lambda_k$, for $1\leq k\leq L$, given the assumption that $\sum_{j=1}^{n}\tilde{\sigma}_jw_{ikj}^2>0$ for $k=1,...,L$. Thus, $\rho(\nu_i)$ can come arbitrarily close to  the boundary values $\pm\rho_{max}(L)$. Therefore, the continuity of $\rho(\nu_i)$ along with the intermediate value theorem implies that any value $\rho_i$ satisfying $|\rho_i|<\rho_{max}(L)$ is permissible in the HT constraint, as asserted.  \\ 
We now proceed to prove Assertion \eqref{ass4_mult}. For this purpose, we define $\tilde{w}_{ik}^2:=\sum_{j=1}^{n}\tilde{\sigma}_jw_{ikj}^2$ in Equation \eqref{autocorr_mult}, resulting in the expression
\begin{eqnarray}
&&\frac{\partial}{\partial\nu_i}\rho(\nu_i)=\frac{\partial}{\partial\nu_i}\displaystyle{\frac{\sum_{k=1}^{L}\lambda_k\frac{1}{(2\lambda_k-\nu_i)^2}\tilde{w}_{ik}^2}{\sum_{k=1}^{L}\frac{1}{(2\lambda_k-\nu_i)^2}\tilde{w}_{ik}^2}}\nonumber\\
&=&2\displaystyle{\frac{\sum_{k=1}^{L}\lambda_k\frac{1}{(2\lambda_k-\nu_i)^3}\tilde{w}_{ik}^2}{\sum_{k=1}^{L}\frac{1}{(2\lambda_k-\nu_i)^2}\tilde{w}_{ik}^2}}-2\displaystyle{\frac{\sum_{k=1}^{L}\lambda_k\frac{1}{(2\lambda_k-\nu_i)^2}\tilde{w}_{ik}^2\sum_{k=1}^{L}\frac{1}{(2\lambda_k-\nu_i)^3}\tilde{w}_{ik}^2}{\left(\sum_{k=1}^{L}\frac{1}{(2\lambda_k-\nu_i)^2}\tilde{w}_{ik}^2\right)^2}}\nonumber\\
&=&2\displaystyle{\frac{\sum_{k=1}^{L}\lambda_k\frac{1}{(2\lambda_k-\nu_i)^3}\tilde{w}_{ik}^2\sum_{j=1}^{L}\frac{1}{(2\lambda_j-\nu_i)^2}\tilde{w}_{ij}^2-\sum_{k=1}^{L}\lambda_k\frac{1}{(2\lambda_k-\nu_i)^2}\tilde{w}_{ik}^2\sum_{j=1}^{L}\frac{1}{(2\lambda_j-\nu_i)^3}\tilde{w}_{ij}^2}{\left(\sum_{k=1}^{L}\frac{1}{(2\lambda_k-\nu_i)^2}\tilde{w}_{ik}^2\right)^2}}\nonumber\\
&=&2\left(\sum_{k=1}^{L}\lambda_k\frac{1}{(2\lambda_k-\nu_i)^3}\tilde{w}_{ik}^2\sum_{j=1}^{L}\frac{1}{(2\lambda_j-\nu_i)^2}\tilde{w}_{ij}^2-\sum_{k=1}^{L}\lambda_k\frac{1}{(2\lambda_k-\nu_i)^2}\tilde{w}_{ik}^2\sum_{j=1}^{L}\frac{1}{(2\lambda_j-\nu_i)^3}\tilde{w}_{ij}^2\right),\label{all_sum}
\end{eqnarray}
utilizing the fact that $\sum_{k=1}^{L} \frac{1}{(2\lambda_k - \nu_i)^2} \tilde{w}_{ik}^2 = \mathbf{b}_i' \mathbf{\tilde{I}} \mathbf{b}_i = 1$ to derive the last equality. For the case when $k=j$, the terms cancel in Equation \eqref{all_sum}. Therefore, we assume $k\neq j$ and consider the expression:
\begin{eqnarray*}
&&\lambda_k\frac{1}{(2\lambda_k-\nu_i)^3}\tilde{w}_{ik}^2\frac{1}{(2\lambda_j-\nu_i)^2}\tilde{w}_{ij}^2-\lambda_k\frac{1}{(2\lambda_k-\nu_i)^2}\tilde{w}_{ik}^2\frac{1}{(2\lambda_j-\nu_i)^3}\tilde{w}_{ij}^2\\
&=&\lambda_k\tilde{w}_{ik}^2\tilde{w}_{ij}^2\frac{1}{(2\lambda_k-\nu_i)^2}\frac{1}{(2\lambda_j-\nu_i)^2}\left(\frac{1}{2\lambda_k-\nu_i}-\frac{1}{2\lambda_j-\nu_i}\right)\\
&=&\lambda_k\tilde{w}_{ik}^2\tilde{w}_{ij}^2\frac{1}{(2\lambda_k-\nu_i)^3}\frac{1}{(2\lambda_j-\nu_i)^3}\left(2\lambda_j-2\lambda_k\right)\\
&=&\tilde{w}_{ik}^2\tilde{w}_{ij}^2\frac{1}{(2\lambda_k-\nu_i)^3}\frac{1}{(2\lambda_j-\nu_i)^3}\left(2\lambda_k\lambda_j-2\lambda_k^2\right)\\
&=&\tilde{w}_{ik}^2\tilde{w}_{ij}^2\frac{1}{(2\lambda_k-\nu_i)^3}\frac{1}{(2\lambda_j-\nu_i)^3}\left(2\lambda_k\lambda_j-2\lambda_j^2\right),
\end{eqnarray*}
where we interchanged $k$ and $j$ in the last equality, owing to symmetry. Assuming $k\neq j$ and considering the addition of the two symmetric terms corresponding to $(k,j)$ and $(j,k)$, we derive the following expression: 
\begin{eqnarray*}
&&\tilde{w}_{ik}^2\tilde{w}_{ij}^2\frac{1}{(2\lambda_k-\nu_i)^3}\frac{1}{(2\lambda_j-\nu_i)^3}2\left(2\lambda_k\lambda_j-\lambda_k^2-\lambda_j^2\right)\\
&=&-2\tilde{w}_{ik}^2\tilde{w}_{ij}^2\frac{1}{(2\lambda_k-\nu_i)^3}\frac{1}{(2\lambda_j-\nu_i)^3}\left(\lambda_k-\lambda_j\right)^2<0,
\end{eqnarray*}
where the strict inequality is justified by the conditions $|\nu_i|>2\rho_{max}(L)\geq 2|\lambda_k|$, $\lambda_k\neq \lambda_j$, and the assumption $\tilde{w}_{ik}^2=\sum_{j=1}^{n}\tilde{\sigma}_jw_{ikj}^2>0$. Consequently, We infer that \[
\frac{\partial}{\partial\nu_i}\rho(\nu_i)<0.
\]
Furthermore, from Equation \eqref{autocorr_mult} we can deduce the limit 
\[
\lim_{|\nu_i|\to \infty}\rho(\nu_i)= \displaystyle{\frac{\sum_{k=1}^{L}\lambda_k\left(\sum_{j=1}^{n}\tilde{\sigma}_jw_{ikj}^2\right)}{\sum_{k=1}^{L}\left(\sum_{j=1}^{n}\tilde{\sigma}_jw_{ikj}^2\right)}}=\displaystyle{\frac{\boldsymbol{\gamma}_{i\cdot\delta}'\mathbf{\tilde{M}}\boldsymbol{\gamma}_{i\cdot\delta}}{\boldsymbol{\gamma}_{i\cdot\delta}'\mathbf{\tilde{I}}\boldsymbol{\gamma}_{i\cdot\delta}}}=\rho_{i,MSE},
\]
which represents  the first-order ACF of the MSE predictor. In conjunction with the identified strictly negative derivative, we conclude that   $\rho_i(\nu_{1i})>\rho_i(\nu_{2i})$ for $\nu_{1i} > 2\rho_{max}(L)$ and $\nu_{2i}< -2\rho_{max}(L)$, as asserted.\\
For a proof of the last Assertion \eqref{ass5} we analyze the expression
\begin{eqnarray*}
\rho(y_i(\nu_i),z_i,\delta)&=&\frac{\mathbf{b}_i'\mathbf{\tilde{I}}\boldsymbol{\gamma}_{i\cdot\delta}}{\sqrt{\mathbf{b}_i'\mathbf{\tilde{I}}\mathbf{b}_i\boldsymbol{\gamma}_{i\cdot\delta}'\mathbf{\tilde{I}}\boldsymbol{\gamma}_{i\cdot\delta}}}=D_i\frac{\boldsymbol{\gamma}_{i\cdot\delta}'\mathbf{\tilde{I}}'\mathbf{N}_i^{-1}\mathbf{\tilde{I}}\boldsymbol{\gamma}_{i\cdot\delta}}{\sqrt{D_i^2\boldsymbol{\gamma}_{i\cdot\delta}'\mathbf{\tilde{I}}'\mathbf{N}_i^{-1}~'\mathbf{N}_i^{-1}\mathbf{\tilde{I}}\boldsymbol{\gamma}_{i\cdot\delta}\boldsymbol{\gamma}_{i\cdot\delta}'\mathbf{\tilde{I}}\boldsymbol{\gamma}_{i\cdot\delta}}}\\
&=&
D_i\frac{\left(\sum_{k=1}^L\frac{1}{2\lambda_{k}-\nu_i}\sum_{j=1}^n w_{ikj}\mathbf{\tilde{v}}_{kj}'\right)\left(\sum_{k=1}^L\sum_{j=1}^n \tilde{\sigma}_jw_{ikj}\mathbf{\tilde{v}}_{kj}\right)}{\sqrt{D_i^2\boldsymbol{\gamma}_{i\cdot\delta}'\mathbf{\tilde{I}}'\mathbf{N}^{-2}\mathbf{\tilde{I}}\boldsymbol{\gamma}_{i\cdot\delta}\boldsymbol{\gamma}_{i\cdot\delta}'\mathbf{\tilde{I}}\boldsymbol{\gamma}_{i\cdot\delta}}}\\
&=&\textrm{sign}(D_i)\frac{\sum_{k=1}^L\frac{1}{2\lambda_{k}-\nu_i}\sum_{j=1}^n \tilde{\sigma}_jw_{ikj}^2}
{\sqrt{\boldsymbol{\gamma}_{i\cdot\delta}'\mathbf{\tilde{I}}'\mathbf{N}^{-2}\mathbf{\tilde{I}}\boldsymbol{\gamma}_{i\cdot\delta}\boldsymbol{\gamma}_{i\cdot\delta}'\mathbf{\tilde{I}}\boldsymbol{\gamma}_{i\cdot\delta}}},
\end{eqnarray*}
where we used $\mathbf{N}_i^{-1}~'\mathbf{N}_i^{-1}=\mathbf{N}_i^{-1}\mathbf{N}_i^{-1}=:\mathbf{N}^{-2}$, owing to symmetry. 
For the case where $\nu<-2\rho_{max}(L)$, the quotient is strictly positive. Consequently, the positivity of the objective function for the M-SSA solution implies that $\textrm{sign}(D_i)=-\textrm{sign}(\nu_i)=1$. Conversely,  for $\nu>2\rho_{max}(L)$ the quotient becomes strictly negative, leading to $\textrm{sign}(D_i)=-\textrm{sign}(\nu_i)=-1$. Assuming $\nu<-2\rho_{max}(L)$ we deduce:
\begin{eqnarray*}
\frac{d\rho\Big(y_i(\nu_i),z_i,\delta\Big)}{d\nu_i}&=&\frac{d}{d\nu_i}\left(\frac{\boldsymbol{\gamma}_{i\cdot\delta}'\mathbf{\tilde{I}}'\mathbf{N}_i^{-1}\mathbf{\tilde{I}}\boldsymbol{\gamma}_{i\cdot\delta}}{\sqrt{\boldsymbol{\gamma}_{i\cdot\delta}'\mathbf{\tilde{I}}'\mathbf{N}_i^{-2}\mathbf{\tilde{I}}\boldsymbol{\gamma}_{i\cdot\delta}\boldsymbol{\gamma}_{i\cdot\delta}'\mathbf{\tilde{I}}^2\boldsymbol{\gamma}_{i\cdot\delta}}}\right)\\
&=&\frac{1}{\left(\boldsymbol{\gamma}_{i\cdot\delta}'\mathbf{\tilde{I}}'\mathbf{N}_i^{-2}\mathbf{\tilde{I}}\boldsymbol{\gamma}_{i\cdot\delta}\right)^{3/2}\sqrt{\boldsymbol{\gamma}_{i\cdot\delta}'\mathbf{\tilde{I}}^2\boldsymbol{\gamma}_{i\cdot\delta}}} \left\{\left(\boldsymbol{\gamma}_{i\cdot\delta}'\mathbf{\tilde{I}}'\mathbf{N}_i^{-2}\mathbf{\tilde{I}}\boldsymbol{\gamma}_{i\cdot\delta}\right)^2-\boldsymbol{\gamma}_{i\cdot\delta}'\mathbf{\tilde{I}}'\mathbf{N}_i^{-1}\mathbf{\tilde{I}}\boldsymbol{\gamma}_{i\cdot\delta}\boldsymbol{\gamma}_{i\cdot\delta}'\mathbf{\tilde{I}}'\mathbf{N}_i^{-3}\mathbf{\tilde{I}}\boldsymbol{\gamma}_{i\cdot\delta}\right\}\\
&=&-\textrm{sign}(\nu)\frac{\sqrt{\boldsymbol{\gamma}_{i\cdot\delta}'\mathbf{\tilde{I}}'\mathbf{N}_i^{-2}\mathbf{\tilde{I}}\boldsymbol{\gamma}_{i\cdot\delta}}}{\sqrt{\boldsymbol{\gamma}_{i\cdot\delta}'\mathbf{\tilde{I}}^2\boldsymbol{\gamma}_{i\cdot\delta}}}\frac{d\rho(\nu_i)}{d\nu_i}<0,
\end{eqnarray*}
where we inserted $-\textrm{sign}(\nu)=1$. The last expression is obtained by recognizing that
\begin{eqnarray}
\frac{d\rho(\nu_i)}{d\nu_i}&=&\frac{d}{d\nu_i}\left(\frac{\mathbf{b}_i'\mathbf{\tilde{M}b}_i}{\mathbf{b}_i'\mathbf{\tilde{I}}\mathbf{b}_i}\right)=\frac{d}{d\nu_i}\left(\frac{\boldsymbol{\gamma}_{i\cdot\delta}'\mathbf{\tilde{I}}'\mathbf{N}_i^{-1}~'\mathbf{\tilde{M}}\mathbf{N}_i^{-1}\mathbf{\tilde{I}}\boldsymbol{\gamma}_{i\cdot\delta}}{\boldsymbol{\gamma}_{i\cdot\delta}'\mathbf{\tilde{I}}'\mathbf{N}_i^{-1}~'\mathbf{N}_i^{-1}\mathbf{\tilde{I}}\boldsymbol{\gamma}_{i\cdot\delta}}\right)=\frac{d}{d\nu_i}\left(\frac{\boldsymbol{\gamma}_{i\cdot\delta}'\mathbf{\tilde{I}}'\mathbf{\tilde{M}}\mathbf{N}_i^{-2}\mathbf{\tilde{I}}\boldsymbol{\gamma}_{i\cdot\delta}}{\boldsymbol{\gamma}_{i\cdot\delta}'\mathbf{\tilde{I}}'\mathbf{N}_i^{-2}\mathbf{\tilde{I}}\boldsymbol{\gamma}_{i\cdot\delta}}\right)\nonumber\\
&=&\frac{2\boldsymbol{\gamma}_{i\cdot\delta}'\mathbf{\tilde{I}}'\mathbf{\tilde{M}}\mathbf{N}_i^{-3}\mathbf{\tilde{I}}\boldsymbol{\gamma}_{i\cdot\delta}\cdot\boldsymbol{\gamma}_{i\cdot\delta}'\mathbf{\tilde{I}}'\mathbf{N}_i^{-2}\mathbf{\tilde{I}}\boldsymbol{\gamma}_{i\cdot\delta}
-2\boldsymbol{\gamma}_{i\cdot\delta}'\mathbf{\tilde{I}}'\mathbf{\tilde{M}}\mathbf{N}_i^{-2}\mathbf{\tilde{I}}\boldsymbol{\gamma}_{i\cdot\delta}\cdot\boldsymbol{\gamma}_{i\cdot\delta}'\mathbf{\tilde{I}}'\mathbf{N}_i^{-3}\mathbf{\tilde{I}}\boldsymbol{\gamma}_{i\cdot\delta}}{(\boldsymbol{\gamma}_{i\cdot\delta}'\mathbf{\tilde{I}}'\mathbf{N}_i^{-2}\mathbf{\tilde{I}}\boldsymbol{\gamma}_{i\cdot\delta})^2}\label{eq_diff_1}\\
&=&\frac{(\boldsymbol{\gamma}_{i\cdot\delta}'\mathbf{\tilde{I}}'\mathbf{N}_i^{-2}\mathbf{\tilde{I}}\boldsymbol{\gamma}_{i\cdot\delta})^2-\boldsymbol{\gamma}_{i\cdot\delta}'\mathbf{\tilde{I}}'\mathbf{N}_i^{-1}\mathbf{\tilde{I}}\boldsymbol{\gamma}_{i\cdot\delta}\cdot\boldsymbol{\gamma}_{i\cdot\delta}'\mathbf{\tilde{I}}'\mathbf{N}_i^{-3}\mathbf{\tilde{I}}\boldsymbol{\gamma}_{i\cdot\delta}}{(\boldsymbol{\gamma}_{i\cdot\delta}'\mathbf{\tilde{I}}'\mathbf{N}_i^{-2}\mathbf{\tilde{I}}\boldsymbol{\gamma}_{i\cdot\delta})^2},\label{eq_diff_2}
\end{eqnarray}
where $\mathbf{N}_i^{-k}:=(\mathbf{N}_i^{-1})^k$, $(\mathbf{N}_i^{-1})'=\mathbf{N}_i^{-1}$ (by symmetry). The commutativity of the matrix product utilized in deriving the third equality arises from the simultaneous diagonalization of both $\mathbf{\tilde{M}}$ and $\mathbf{N}_i^{-1}$, which share the same eigenvectors. Additionally, standard rules for matrix differentiation were employed in the derivation of Equation \eqref{eq_diff_1}\footnote{$\frac{d(\mathbf{N}_i^{-1})}{d\nu_i}=\mathbf{N}_i^{-2}$ and $\frac{d(\mathbf{N}_i^{-2})}{d\nu_i}=2\mathbf{N}_i^{-3}$. The first equation follows from the general differentiation rule  $\frac{d(\mathbf{N}_i^{-1})}{d\nu_i}=-\mathbf{N}_i^{-1}\frac{d\mathbf{N}_i}{d\nu_i}\mathbf{N}_i^{-1}$, noting that $\frac{d\mathbf{N}_i}{d\nu_i}=-\mathbf{\tilde{I}}$. The second equation is derived by substituting the first equation into  $\frac{d(\mathbf{N}_i^{-2})}{d\nu_i}=\frac{d(\mathbf{N}_i^{-1})}{d\nu_i}\mathbf{N}_i^{-1}+\mathbf{N}_i^{-1}\frac{d(\mathbf{N}_i^{-1})}{d\nu_i}$.}. Finally, the expression $2\mathbf{\tilde{M}}\mathbf{N}_i^{-k}=
\mathbf{N}_i^{-k+1}+\nu\mathbf{N}_i^{-k}$ was incorporated into the numerator of Equation \eqref{eq_diff_1}, leading to the final equation after simplification.  
The aforementioned proof is also applicable in the scenario where $\nu > 2\rho_{\max}(L)$, albeit with a change in sign, such that $\text{sign}(D_i) = -\text{sign}(\nu) = -1$, as was to be demonstrated. \hfill \qedsymbol{}\\

\textbf{Remarks}\\
As $|\nu_i|\to \infty$ and $D_i=-\nu_i$, it follows that $-\mathbf{N}_i/\nu_i\to \mathbf{\tilde{I}}$. Consequently, we have
\[
\mathbf{b}_i\to-\frac{D_i}{\nu_i}\tilde{\mathbf{I}}^{-1} \tilde{\mathbf{I}}\boldsymbol{\gamma}_{i\cdot\delta}=\boldsymbol{\gamma}_{i\cdot\delta}.
\]
This indicates that the M-SSA method asymptotically replicates the MSE predictor in the limit as $|\nu_i| \to \infty$, reflecting an asymptotically degenerate case where the HT constraint may be omitted. By Assertion \eqref{ass4_mult}, we conclude that M-SSA can accommodate HT constraints $\rho_i<\rho_{i,MSE}$ when $\nu_i\in]-\infty,-2\rho_{max}(L)]$. Conversely, for $\nu_i\in[2\rho_{max}(L),\infty[$ M-SSA is capable of accommodating HT constraints $\rho_i>\rho_{i,MSE}$ (smoothing). Moreover, Equation \eqref{ficcc} elucidates a trade-off between Accuracy (target correlation) and Smoothness (first-order ACF) in the context of solving the M-SSA optimization problem. For a more comprehensive discussion, refer to Wildi (2024). Finally, Wildi (2026) also addresses the singular case of incomplete spectral support, which may be generalized to M-SSA; however, this topic is omitted here for the sake of brevity (and lack of practical relevance). \\

\begin{Corollary}\label{block_system} 
Let the assumptions of Theorem \eqref{lambda_mult} be satisfied. 
\begin{enumerate}
\item For each $i=1,...,n$, the solution $\mathbf{b}_i=(\mathbf{b}_{i1}',\mathbf{b}_{i2}',...,\mathbf{b}_{in}')'$ to the M-SSA optimization problem can alternatively be derived from the $n$ subsystems given by   
\begin{eqnarray}\label{ssa_mult_break}
\mathbf{b}_{ij}=D_i\boldsymbol{{\nu}}_i^{-1}\boldsymbol{\gamma}_{ij\delta},
\end{eqnarray}
for $j=1,...,n$, where $D_i\neq 0$ and $\boldsymbol{{\nu}}_i:=2\mathbf{{M}}-\nu_i\mathbf{{I}}$ represents an invertible $L\times L$ matrix. The  subsystems are interconnected through the common parameters $D_i$ and $\nu_i$, while they generally differ with respect to $\boldsymbol{\gamma}_{ij\delta}$, $j=1,...,n$.
\item \label{cor1ass}For each $i=1,...,n$, the M-SSA predictor satisfies the following non-stationary and time reversible difference equation:
\begin{eqnarray}\label{time_domain_mult}
b_{ijk+1}-\nu_i b_{ijk}+b_{ijk-1}&=&D_i\gamma_{ijk+\delta}~,~0\leq k\leq L-1,~j=1,...,n,~i=1,...,n
\end{eqnarray}
with implicit boundary conditions $b_{ij,-1}=b_{ijL}=0$ to ensure stability of the solution. 
\end{enumerate}
\end{Corollary}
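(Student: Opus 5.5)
The plan is to start from the stationarity equation established in the proof of Theorem \eqref{lambda_mult}, namely $D_i\tilde{\mathbf{I}}\boldsymbol{\gamma}_{i\cdot\delta}=\mathbf{N}_i\mathbf{b}_i$ with $\mathbf{N}_i=2\tilde{\mathbf{M}}-\nu_i\tilde{\mathbf{I}}$. The key observation is that $\mathbf{N}_i$ factors as a Kronecker product:
\[
\mathbf{N}_i=2\boldsymbol{\Sigma}\otimes\mathbf{M}-\nu_i\boldsymbol{\Sigma}\otimes\mathbf{I}_{L\times L}=\boldsymbol{\Sigma}\otimes(2\mathbf{M}-\nu_i\mathbf{I})=\boldsymbol{\Sigma}\otimes\boldsymbol{\nu}_i .
\]
Since $\nu_i\notin\{2\lambda_k\}$ by Theorem \eqref{lambda_mult}, $\boldsymbol{\nu}_i$ is invertible with eigenvalues $2\lambda_k-\nu_i$. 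Together with the full rank of $\boldsymbol{\Sigma}$, this gives $\mathbf{N}_i^{-1}=\boldsymbol{\Sigma}^{-1}\otimes\boldsymbol{\nu}_i^{-1}$.

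Hence
\[
\mathbf{b}_i=D_i(\boldsymbol{\Sigma}^{-1}\otimes\boldsymbol{\nu}_i^{-1})(\boldsymbol{\Sigma}\otimes\mathbf{I})\boldsymbol{\gamma}_{i\cdot\delta}=D_i(\mathbf{I}_{n\times n}\otimes\boldsymbol{\nu}_i^{-1})\boldsymbol{\gamma}_{i\cdot\delta}.
\]
Here I use the mixed-product rule $(A\otimes B)(C\otimes D)=AC\otimes BD$. The matrix $\mathbf{I}_{n\times n}\otimes\boldsymbol{\nu}_i^{-1}$ is block diagonal. Reading off the $j$-th block of length $L$, using the stacking $\mathbf{b}_i=(\mathbf{b}_{i1}',\dots,\mathbf{b}_{in}')'$ and likewise for $\boldsymbol{\gamma}_{i\cdot\delta}$, gives $\mathbf{b}_{ij}=D_i\boldsymbol{\nu}_i^{-1}\boldsymbol{\gamma}_{ij\delta}$. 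This is \eqref{ssa_mult_break}. The common $D_i,\nu_i$ are inherited from the theorem, and this explains the cancellation of the $\tilde\sigma_j$ already noted in the proof of Theorem \eqref{lambda_mult}.

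For the second assertion, I will multiply \eqref{ssa_mult_break} by $\boldsymbol{\nu}_i$ to get $\boldsymbol{\nu}_i\mathbf{b}_{ij}=D_i\boldsymbol{\gamma}_{ij\delta}$, and write this out row by row. Since $2\mathbf{M}$ has ones on the first sub- and superdiagonals and zeros elsewhere, row $k$ (indexed $0,\dots,L-1$) of $\boldsymbol{\nu}_i\mathbf{b}_{ij}$ equals $b_{ij,k-1}-\nu_ib_{ijk}+b_{ij,k+1}$. The first row is missing the $b_{ij,-1}$ term and the last row is missing $b_{ijL}$; setting $b_{ij,-1}=b_{ijL}=0$ makes all rows uniform. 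The right-hand side's $k$-th entry is $D_i\gamma_{ij,k+\delta}$ by definition of $\boldsymbol{\gamma}_{ij\delta}$, which yields \eqref{time_domain_mult}.

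Time reversibility follows from the symmetry of the characteristic polynomial $x^2-\nu_ix+1$, whose roots are reciprocal: the homogeneous recursion is invariant under $k\mapsto -k$. Non-stationarity is a statement about the forcing/roots, which for $|\nu_i|>2$ are real with one outside the unit circle. I expect this interpretive part of the statement, the wording on ``stability'' and ``non-stationary'', to be the only delicate step. I would treat it descriptively: the two-point boundary conditions select the unique solution, because $\boldsymbol{\nu}_i$ is invertible, rather than letting an explosive homogeneous component propagate. The algebraic parts are routine.
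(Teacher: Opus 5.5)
Your proof is correct and follows the paper's route. The paper multiplies $D_i\tilde{\mathbf{I}}\boldsymbol{\gamma}_{i\cdot\delta}=\mathbf{N}_i\mathbf{b}_i$ by $\tilde{\mathbf{I}}^{-1}=\boldsymbol{\Sigma}^{-1}\otimes\mathbf{I}_{L\times L}$ to get the block-diagonal system $(\mathbf{I}_{n\times n}\otimes\boldsymbol{\nu}_i)\mathbf{b}_i=D_i\boldsymbol{\gamma}_{i\cdot\delta}$, which is the same Kronecker cancellation of $\boldsymbol{\Sigma}$ you obtain by inverting $\mathbf{N}_i=\boldsymbol{\Sigma}\otimes\boldsymbol{\nu}_i$, and then it reads off the difference equation row by row exactly as you do. One small caveat on your descriptive aside: the relevant range $|\nu_i|>2\rho_{max}(L)$ also contains $2\rho_{max}(L)<|\nu_i|\le 2$, where the roots of $x^2-\nu_i x+1$ lie on the unit circle, so there the homogeneous solutions are non-decaying (bounded oscillations, or linear growth at $|\nu_i|=2$) rather than geometrically explosive; the paper itself does not argue these qualifiers.
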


\textbf{Proof}: The first claim can be substantiated by multiplying both sides of Equation \eqref{diff_non_hom_matrix_mult} by $\mathbf{\tilde{I}}^{-1}=\boldsymbol{\Sigma}^{-1}\otimes \mathbf{I}_{L\times L}$. This yields the following equations:
\begin{eqnarray*}
D_i\boldsymbol{\gamma}_{i\cdot\delta}&=& \mathbf{\tilde{I}}^{-1}\mathbf{N}_i\mathbf{b}_i\\
&=&\Big(2(\mathbf{I}_{n\times n}\otimes\mathbf{{M}})-\nu_i\mathbf{{I}}_{nL\times nL}\Big)\mathbf{b}_i.
\end{eqnarray*}
A proof follows from the block-diagonal structure of $\mathbf{I}_{n\times n}\otimes\mathbf{{M}}$. The validity of the second claim follows directly from Equation \eqref{ssa_mult_break}, achieved by multiplying both sides by $\boldsymbol{{\nu}}_i$. Specifically, the boundary equations $b_{ij1}-\nu_i b_{ij0}=D_i\gamma_{ij\delta}$ and $-\nu_i b_{ij(L-1)}+b_{ij(L-2)}=D_i\gamma_{ij(L-1+\delta)}$ at $k=0$ and $k=L-1$ substantiate the implicit boundary conditions $b_{ij,-1}=0$ and $b_{ijL}=0$.\\

Theorem \eqref{lambda_mult} presents a one-parameter formulation of the SSA solution, and the subsequent corollary elucidates this solution by establishing a relationship between the unknown parameter $\nu_i$ and the HT constraint.

\begin{Corollary}\label{lambda_num_gen_mult}
Let the assumptions of theorem \eqref{lambda_mult} be satisfied. Then, for each $i=1,...,n$, the solution to the SSA-optimization problem described in Equation \eqref{mcrit1} is given by $s_i\mathbf{b}_i(\nu_{0i})$, where $\mathbf{b}_i(\nu_{0i})$ is derived from Equation \eqref{ssa_mult}. This is under the assumption of an arbitrary scaling such that $|D_i|=1$ (with the sign of $D_i$ determined by the requirement for a positive objective function). Here, $\nu_{0i}$ represents a solution to the non-linear HT equation $\rho(\nu_i)=\rho_i$, and the scaling $s_i$ is defined as $s_i=\sqrt{l/\mathbf{b}_i(\nu_{0i})'\tilde{\mathbf{I}}\mathbf{b}_i(\nu_{0i})}$ to meet the length constraint. If the search for an optimal $\nu_{0i}$ can be confined to the domain $\{\nu||\nu|>2\rho_{max}(L)\}$, then $\nu_{0i}$ is uniquely determined by $\rho_i$. 
\end{Corollary}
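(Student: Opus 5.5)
The corollary follows from Theorem \eqref{lambda_mult}; the plan assembles its assertions. By Assertion \eqref{ass1_mult}, any solution of \eqref{mcrit1} under the regularity conditions has the form $\mathbf{b}_i=D_i\mathbf{N}_i^{-1}\tilde{\mathbf{I}}\boldsymbol{\gamma}_{i\cdot\delta}$ for some admissible $\nu_i$ and some $D_i\neq 0$.

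\textbf{Step 1: the HT constraint fixes $\nu_i$.} The first-order ACF $\rho(\nu_i)$ in \eqref{spec_dec_rho_mult} is a ratio of quadratic forms in $\mathbf{b}_i(\nu_i)$. It is therefore invariant to the factor $D_i$, since $D_i^2$ cancels. So the HT constraint $\mathbf{b}_i'\tilde{\mathbf{M}}\mathbf{b}_i=\rho_i\,\mathbf{b}_i'\tilde{\mathbf{I}}\mathbf{b}_i$ constrains only $\nu_i$, through the nonlinear equation $\rho(\nu_i)=\rho_i$. Assertion \eqref{ass3_mult} guarantees that a root $\nu_{0i}$ exists whenever $|\rho_i|<\rho_{max}(L)$.

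\textbf{Step 2: the length constraint and the objective fix $D_i$.} With $\nu_{0i}$ fixed, set $|D_i|=1$ and choose $\textrm{sign}(D_i)$ so that $\boldsymbol{\gamma}_{i\cdot\delta}'\tilde{\mathbf{I}}\mathbf{b}_i>0$. Because $\tilde{\mathbf{I}}$ is positive definite, multiplying by $s_i=\sqrt{l/\mathbf{b}_i(\nu_{0i})'\tilde{\mathbf{I}}\mathbf{b}_i(\nu_{0i})}>0$ enforces the length constraint. This rescaling leaves both the ACF and the sign of the objective unchanged. Hence $s_i\mathbf{b}_i(\nu_{0i})$ is feasible and has exactly the form required by the necessary conditions of the theorem.

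\textbf{Step 3: uniqueness on $\{\nu\,|\,|\nu|>2\rho_{max}(L)\}$.} By Assertion \eqref{ass4_mult}, $\rho(\nu)$ is strictly decreasing on each of the two half-lines $]2\rho_{max}(L),\infty[$ and $]-\infty,-2\rho_{max}(L)[$. Its range on the first half-line lies strictly above $\rho_{i,MSE}$, and its range on the second lies strictly below it. The limit value $\rho_{i,MSE}$ is attained only as $|\nu|\to\infty$, which is excluded by the non-degeneracy condition $\rho_i\neq\rho_{MSE}$. Strict monotonicity gives at most one root on each half-line, and the disjoint ranges separated by $\rho_{i,MSE}$ rule out a root on both. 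Therefore $\nu_{0i}$ is uniquely determined by $\rho_i$: it lies on the positive half-line if $\rho_i>\rho_{i,MSE}$ and on the negative half-line if $\rho_i<\rho_{i,MSE}$.

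\textbf{Main obstacle.} The delicate step is showing that the candidate constructed in Steps 1--2 is actually the \emph{maximizer}, not just a stationary point. In general, several roots of $\rho(\nu)=\rho_i$ may exist, including in the interior region $|\nu|<2\rho_{max}(L)$. I would handle this by invoking the corollary's own hypothesis that the search can be confined to $|\nu|>2\rho_{max}(L)$. On that domain the candidate set reduces to the single point found in Step 3. The feasible set is a compact intersection of the ellipse and the hyperbola, so a maximizer exists. Since every maximizer must be of the form \eqref{ssa_mult}, this unique candidate is the solution.
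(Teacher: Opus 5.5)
Your proposal is correct and follows the paper's own argument. The corollary is read off Theorem~\ref{lambda_mult}: the post-hoc scaling $s_i$ leaves both the ACF and the sign of the objective unchanged, and uniqueness on $\{\nu\,|\,|\nu|>2\rho_{max}(L)\}$ comes from the monotonicity in Assertion~\ref{ass4_mult}. Your two additions make explicit what the paper compresses into ``bijectivity of the first order ACF'': the range separation at $\rho_{i,MSE}$, which rules out a root on both half-lines, and the compactness-plus-necessity argument showing the unique candidate is the maximizer.
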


A proof follows directly from Theorem \eqref{lambda_mult}, with the observation that the scaling factor $s_i = \sqrt{l / \mathbf{b}_i(\nu_{0i})' \tilde{\mathbf{I}} \mathbf{b}_i(\nu_{0i})}$ does not interfere with either the objective function or the HT constraint. This scaling can be applied subsequently, after obtaining a solution under the assumption of an arbitrary scaling such that $|D_i| = 1$. Notably, assertion \eqref{ass4_mult} of the theorem guarantees the uniqueness of solutions within the set $\{\nu | |\nu| > 2 \rho_{\text{max}}(L)\}$, owing to the bijectivity of the first order ACF. \hfill \qedsymbol{}\\

\textbf{Remark}: The objective function $\rho(y_i,{z}_{i\delta},\delta)$ and the first-order ACF $\rho(y_{i})$ are generally influenced by  $\boldsymbol{\Sigma}$. Interestingly, Equation \eqref{ssa_mult_break} indicates that the one-parameter form of the solution $\mathbf{b}_i(\nu_i)$ is independent of $\boldsymbol{\Sigma}$; this is corroborated by the cancellation of $\tilde{\sigma}_j$ in Equation \eqref{specdecb_mult0}. However, the optimal solution $\mathbf{b}_i(\nu_{0i})$, which is derived from the optimal $\nu_{0i}$, remains dependent on $\boldsymbol{\Sigma}$ because $\tilde{\sigma}_j$ (for $j = 1, \ldots, n$) does not cancel in the first-order ACF \eqref{spec_dec_rho_mult}, which is critical for determining $\nu_{i0}$ within the context of the nonlinear HT constraint.\\

The next result addresses the distribution of the M-SSA predictor.  
\begin{Corollary}
Let all regularity assumptions of Theorem \eqref{lambda_mult} be satisified, and let $\hat{\boldsymbol{\gamma}}_{i\cdot\delta}$ represent a finite-sample estimate of the MSE-predictor ${\boldsymbol{\gamma}}_{i\cdot\delta}$, with mean ${\boldsymbol{\mu}}_{\gamma_{i\delta}}$ and variance ${\boldsymbol{\Sigma}}_{\gamma_{i\delta}}$. Then, the mean and variance of the M-SSA predictor $\hat{\mathbf{b}}_i$ are expressed as follows:
\begin{eqnarray*}
{\boldsymbol{\mu}}_{\mathbf{b}_i}&=&D_i\mathbf{N}_i^{-1}\mathbf{\tilde{I}}{\boldsymbol{\mu}}_{\gamma_{i\delta}}\\
{\boldsymbol{\Sigma}}_{\mathbf{b}_i}&=&D_i^2\mathbf{N}_i^{-1}\mathbf{\tilde{I}}{\boldsymbol{\Sigma}}_{\gamma_{i\delta}}\mathbf{\tilde{I}}\mathbf{N}_i^{-1}
\end{eqnarray*}
If $\hat{\boldsymbol{\gamma}}_{i\cdot\delta}$ follows a Gaussian distribution, then $\hat{\mathbf{b}}_i$ will also be Gaussian distributed. 
\end{Corollary}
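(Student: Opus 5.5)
The plan is to treat $\hat{\mathbf{b}}_i$ as a linear image of $\hat{\boldsymbol{\gamma}}_{i\cdot\delta}$ and then apply the standard rules for means and covariances of linear transformations. By Assertion \ref{ass1_mult} of Theorem \ref{lambda_mult}, the M-SSA solution has the form $\mathbf{b}_i=D_i\mathbf{N}_i^{-1}\mathbf{\tilde{I}}\boldsymbol{\gamma}_{i\cdot\delta}$. Here $\mathbf{N}_i=2\mathbf{\tilde{M}}-\nu_i\mathbf{\tilde{I}}$ is symmetric and invertible, since $\nu_i\notin\{2\lambda_k\}$.

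The key modelling step is to fix the parameters. We condition on $(D_i,\nu_i)$, equivalently on $\nu_{0i}$ and the scaling of Corollary \ref{lambda_num_gen_mult}, and treat them as deterministic. Following the Remark after Corollary \ref{lambda_num_gen_mult}, we also take $\boldsymbol{\Sigma}$ as known. Then $\mathbf{A}_i:=D_i\mathbf{N}_i^{-1}\mathbf{\tilde{I}}$ is a fixed $nL\times nL$ matrix. The plug-in estimator is defined as $\hat{\mathbf{b}}_i:=\mathbf{A}_i\hat{\boldsymbol{\gamma}}_{i\cdot\delta}$, which is exactly the one-parameter form of Theorem \ref{lambda_mult} with the estimate substituted for $\boldsymbol{\gamma}_{i\cdot\delta}$.

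With $\mathbf{A}_i$ fixed, the computation is short.
\begin{itemize}
\item By linearity of expectation, $E[\hat{\mathbf{b}}_i]=\mathbf{A}_i\boldsymbol{\mu}_{\gamma_{i\delta}}$, which is the stated mean.
\item The covariance is $\mathrm{Var}(\hat{\mathbf{b}}_i)=\mathbf{A}_i\boldsymbol{\Sigma}_{\gamma_{i\delta}}\mathbf{A}_i'=D_i^2\mathbf{N}_i^{-1}\mathbf{\tilde{I}}\boldsymbol{\Sigma}_{\gamma_{i\delta}}\mathbf{\tilde{I}}'\mathbf{N}_i^{-1}{}'$.
\item This simplifies to the stated expression because $\mathbf{\tilde{I}}=\boldsymbol{\Sigma}\otimes\mathbf{I}_{L\times L}$ is symmetric and $\mathbf{N}_i^{-1}$ is symmetric, as used in the proof of Assertion \ref{ass5}.
\item For the Gaussian claim, an affine image of a Gaussian vector is Gaussian, possibly degenerate. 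Since $\mathbf{A}_i$ is invertible, the image is nondegenerate whenever $\boldsymbol{\Sigma}_{\gamma_{i\delta}}$ is nondegenerate.
\end{itemize}

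The main obstacle is justifying that $D_i$ and $\nu_i$ may be treated as fixed. In practice $\nu_{0i}$ solves $\rho(\nu_i)=\rho_i$, and by Equation \eqref{spec_dec_rho_mult} this equation depends on the spectral weights $w_{ikj}$ of $\hat{\boldsymbol{\gamma}}_{i\cdot\delta}$. Likewise the scaling $s_i$ depends on $\hat{\boldsymbol{\gamma}}_{i\cdot\delta}$. The statement is therefore exact only conditionally on $(D_i,\nu_i)$, and I would state it that way.

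If an unconditional version were wanted, the next step would be a delta-method argument. By Assertion \ref{ass4_mult}, $d\rho/d\nu_i\neq0$ on $|\nu_i|>2\rho_{max}(L)$, so the implicit function theorem makes $\nu_{0i}$ a smooth function of $\hat{\boldsymbol{\gamma}}_{i\cdot\delta}$. The formulas above would then hold to first order, with an additional Jacobian term. I would restrict the proof to the conditional linear version, which is what the displayed formulas express.
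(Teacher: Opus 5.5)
Your proposal is correct and follows the paper's argument. The paper likewise reads $\hat{\mathbf{b}}_i$ off Equation \eqref{ssa_mult} as the linear image $D_i\mathbf{N}_i^{-1}\mathbf{\tilde{I}}\hat{\boldsymbol{\gamma}}_{i\cdot\delta}$ with fixed coefficients, and uses the symmetry of $\mathbf{N}_i^{-1}$ (and of $\mathbf{\tilde{I}}$) to get the mean, the covariance and Gaussianity. Your explicit caveat that this holds only conditionally on $(D_i,\nu_i)$ and a known $\boldsymbol{\Sigma}$ is an assumption the paper leaves implicit, and it is worth stating.
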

The proof follows directly from Equation \eqref{ssa_mult}, acknowledging the symmetry of $\mathbf{N}_i^{-1}$. For a comprehensive derivation of the mean, variance, and (asymptotic) distribution of the MSE estimate, one may consult standard textbooks, such as Brockwell and Davis (1993).\\

The final result in this section introduces a dual reformulation of the M-SSA optimization criterion, which characterizes its solution as the smoothest predictor among all predictors that achieve the same tracking accuracy or target correlation. To facilitate this, we introduce some notation: 
\begin{equation}\label{def_mse_rhomax}
\mathbf{y}_{it,MSE}^{\rho_{max}}:=\boldsymbol{\gamma}_{i\cdot\delta}^{\rho_{max}}~'\mathbf{x}_{\cdot t}
\end{equation}
is the output of the filter with weights $\boldsymbol{\gamma}_{i\cdot\delta}^{\rho_{max}}:=\sum_{j=1}^n w_{i1j}\mathbf{\tilde{v}}_{1j}$, corresponding to the projection of the MSE predictor onto the subspace spanned by $\mathbf{\tilde{v}}_{11},...,\mathbf{\tilde{v}}_{1n}$. Under the regularity assumptions of Theorem \eqref{lambda_mult}, specifically a complete spectral support, $\boldsymbol{\gamma}_{i\cdot\delta}^{\rho_{max}}\neq\mathbf{0}$.   According to the proof of the second assertion in Proposition \eqref{bound12}, $\mathbf{y}_{it,MSE}^{\rho_{max}}$ maximizes the target correlation under the (extremal) HT constraint $\rho(\mathbf{y}_{it,MSE}^{\rho_{max}})=\rho_{max}(L)$ (boundary point). Consequently, if $y_{it}$ is such that $\rho(y_i,z_i,\delta)>\rho({y}_{i,MSE}^{\rho_{max}},z_i,\delta)$ for its target correlation, then $\rho(y_i)<\rho_{max}(L)$ for its first-order ACF (interior point). Moreover, under the assumption of complete spectral support, $\boldsymbol{\gamma}_{i\cdot\delta}^{\rho_{max}}\neq \boldsymbol{\gamma}_{i\cdot\delta}$ (since the latter's spectral weights $w_{ikj}$ cannot all vanish for $k>1$)  and therefore the strict inequality  $\rho({y}_{i,MSE}^{\rho_{max}},z_i,\delta)<\rho(y_{i,MSE},z_i,\delta)$ holds for the target correlation of the MSE predictor, with filter weights $\boldsymbol{\gamma}_{i\cdot\delta}$, owing to uniqueness of the MSE predictor.\\

\begin{Theorem}\label{cor3}
Consider the dual optimization problem expressed as 
\begin{eqnarray}\label{crit2}
\left.\begin{array}{cc}
&\max_{\mathbf{b}_i}\rho(y_{i})\\
&\rho(y_i,z_i,\delta)=\rho_{iyz}\\
&\mathbf{b}_i'\mathbf{\tilde{I}b}_i=l
\end{array}\right\}, i=1,...,n,
\end{eqnarray}
where the roles of the first-order autocorrelation—now incorporated into the objective function—and the target correlation—now specified as a constraint—are interchanged. Let the assumptions outlined in Theorem \eqref{lambda_mult} hold, except that the first regularity condition (non-degeneration) is replaced by $\rho_{iyz}>\rho({y}_{i,MSE}^{\rho_{max}},z_i,\delta)$ (the target correlation of $\mathbf{y}_{it,MSE}^{\rho_{max}}$ defined in Equation \eqref{def_mse_rhomax}) and the second regularity condition (interior point) is replaced by $|\rho_{iyz}|<\rho(y_{i,MSE},z_i,\delta)$ (the target correlation of the $i$-th MSE predictor). Then the following results hold:
\begin{enumerate}
\item The solution to the dual problem adopts the same parametric form as the original M-SSA solution:
\begin{equation}\label{doc}
\mathbf{{b}}_{i}=\tilde{D}_i\mathbf{\tilde{N}}_i^{-1}\mathbf{\tilde{I}}\boldsymbol{\gamma}_{i\cdot\delta},
\end{equation}
where $\mathbf{\tilde{N}}:=(2\tilde{\mathbf{M}}-\tilde{\nu}_i\mathbf{\tilde{I}})$ is a full-rank matrix and $\tilde{D}_i\neq 0,\tilde{\nu}_i$ can be chosen to satisfy the constraints. 
\item Let $y_{it}(\nu_{0i})$ represent the original M-SSA solution for the $i$-th target, assuming $\nu_{0i}>2\rho_{max}(L)$, and set $\rho_{iyz}:=\rho(y_i(\nu_{0i}),z_i,\delta)$  (the value of the $i$-th maximized M-SSA objective function) in the target correlation constraint of the dual criterion \eqref{crit2}. If the search for an optimal  $\tilde{\nu}_i$ in the specified dual problem can be confined to the domain $\{\nu||\nu|>2\rho_{max}(L)\}$, then the solution $y_{it}(\nu_{0i})$ of the original M-SSA problem also constitutes an optimal solution to the dual problem.
\item Conversely, if $\nu_{0i}<-2\rho_{max}(L)$, then the solution $y_{it}(\nu_{0i})$ remains optimal for the dual problem, provided that the objective of Criterion \eqref{crit2} is reformulated from a maximization to a minimization problem.
\end{enumerate}
\end{Theorem}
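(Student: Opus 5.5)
The plan is to mirror the Lagrangian argument of Theorem \eqref{lambda_mult} for the dual problem, and then to use the monotonicity results of Assertions \eqref{ass4_mult} and \eqref{ass5} to identify the two solutions.

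\emph{First assertion.} Under the length constraint, the target correlation is proportional to $\boldsymbol{\gamma}_{i\cdot\delta}'\tilde{\mathbf{I}}\mathbf{b}_i$, so the dual constraint becomes the linear constraint $\boldsymbol{\gamma}_{i\cdot\delta}'\tilde{\mathbf{I}}\mathbf{b}_i=c$ with $c\propto\rho_{iyz}$. The objective, again under the length constraint, is the quadratic form $\mathbf{b}_i'\tilde{\mathbf{M}}\mathbf{b}_i/l$. The Lagrangian $\mathbf{b}_i'\tilde{\mathbf{M}}\mathbf{b}_i-\mu_1(\mathbf{b}_i'\tilde{\mathbf{I}}\mathbf{b}_i-l)-\mu_2(\boldsymbol{\gamma}_{i\cdot\delta}'\tilde{\mathbf{I}}\mathbf{b}_i-c)$ yields $(2\tilde{\mathbf{M}}-2\mu_1\tilde{\mathbf{I}})\mathbf{b}_i=\mu_2\tilde{\mathbf{I}}\boldsymbol{\gamma}_{i\cdot\delta}$. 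This is exactly \eqref{diff_non_hom_matrix_mult}, with $\tilde{\nu}_i=2\mu_1$ and $\tilde D_i=\mu_2$.

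Three side conditions must then be verified.
\begin{enumerate}
\item The feasible set (a sphere in the $\tilde{\mathbf{I}}$-metric intersected with a hyperplane) has no boundary points. This holds because $|\rho_{iyz}|<\rho(y_{i,MSE},z_i,\delta)$: the hyperplane cuts the ellipsoid in its interior, since the maximal attainable correlation is achieved only by the MSE predictor. Hence the Lagrange conditions are necessary.
\item $\mu_2\neq0$. If $\mu_2=0$, then $\mathbf{b}_i$ would be an eigenvector as in Proposition \eqref{stationary_eigenvec}. At the maximum that eigenvector is a combination of the $\tilde{\mathbf{v}}_{1j}$, whose best correlation is $\rho(y^{\rho_{max}}_{i,MSE},z_i,\delta)$. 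This contradicts $\rho_{iyz}>\rho(y^{\rho_{max}}_{i,MSE},z_i,\delta)$. For other eigenvalues I would argue similarly that the eigenvector fails to be a maximizer. This step is the delicate one, and I would handle it by the projection argument preceding the theorem.
\item $\tilde{\nu}_i/2\neq\lambda_k$. This follows from completeness exactly as in the proof of Theorem \eqref{lambda_mult}, so $\tilde{\mathbf{N}}_i$ is invertible.
\end{enumerate}

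\emph{Second assertion.} Take $\nu_{0i}>2\rho_{max}(L)$ and $\rho_{iyz}=\rho(y_i(\nu_{0i}),z_i,\delta)$. Every dual candidate in the restricted domain has the form $\mathbf{b}_i(\tilde{\nu}_i)$, with sign and scale fixed by positivity and length. By Assertion \eqref{ass5}, the target correlation $\rho(y_i(\nu),z_i,\delta)$ is strictly monotone on each branch $\nu>2\rho_{max}(L)$ and $\nu<-2\rho_{max}(L)$: it is decreasing on the positive branch and increasing on the negative branch, since $-\mathrm{sign}(\nu)\,d\rho/d\nu<0$. Hence the constraint admits at most one $\tilde{\nu}$ on each branch, and $\nu_{0i}$ is the one on the positive branch. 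On the negative branch, Assertion \eqref{ass4_mult} gives $\rho(\nu)<\rho_{i,MSE}<\rho(\nu_{0i})$. Therefore the maximizing dual candidate is $\tilde{\nu}_i=\nu_{0i}$, and the dual solution coincides with $y_{it}(\nu_{0i})$.

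\emph{Third assertion.} This is the mirror image. For $\nu_{0i}<-2\rho_{max}(L)$, the candidate on the negative branch attains the smaller ACF, so it solves the minimization version of \eqref{crit2}.

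The main obstacle is ruling out $\mu_2=0$ (degenerate eigenvector solutions) and confirming that the restricted domain actually contains a feasible point. Existence on the correct branch follows from the continuity of the target correlation: it tends to the MSE correlation as $|\nu|\to\infty$, and it tends to $\rho(y^{\rho_{max}}_{i,MSE},z_i,\delta)$ as $\nu\downarrow2\lambda_1$. The intermediate value theorem then gives the required point.
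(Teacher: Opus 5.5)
Your proposal is correct and follows essentially the same route as the paper. You derive the form \eqref{doc} from the Lagrangian of the dual problem, with the side conditions on the multipliers spelled out more explicitly than the paper does. You then use strict monotonicity of the target correlation on each branch (Assertion \eqref{ass5}) for uniqueness, and Assertion \eqref{ass4_mult} to discard the candidate on the opposite branch.

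One small slip: $-\textrm{sign}(\nu_i)\,d\rho(y_i(\nu_i),z_i,\delta)/d\nu_i<0$ makes the target correlation \emph{increasing} on $\nu_i>2\rho_{max}(L)$ and decreasing on $\nu_i<-2\rho_{max}(L)$, not the reverse. This is harmless, because you only use injectivity on each branch.
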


\textbf{Proof}: 
Given the similarity in the problem structure, we may leverage the arguments employed in the proof of Theorem \eqref{lambda_mult} to verify the above claims. Accordingly, we here focus the analysis on the relevant deviations. The Lagrangian $\tilde{\mathcal{L}}_i$ of the dual problem is expressed as:
\begin{eqnarray*}
\tilde{\mathcal{L}}_i:=\mathbf{b}_{i}'\mathbf{\tilde{M}}\mathbf{b}_{i}-
\tilde{\lambda}_{i1}(\mathbf{b}_{i}'\mathbf{\tilde{I}}\mathbf{b}_{i}-l)-\tilde{\lambda}_{i2}\left(\frac{\boldsymbol{\gamma}_{i\cdot\delta}'\tilde{\mathbf{I}}\mathbf{b}_{i}}{\sqrt{l\boldsymbol{\gamma}_{i\cdot\delta}'\mathbf{\tilde{I}}\boldsymbol{\gamma}_{i\cdot\delta}}}-\rho_{iyz}\right).
\end{eqnarray*}
where we inserted $\rho(y_i,z_i,\delta)=\frac{\boldsymbol{\gamma}_{i\cdot\delta}'\tilde{\mathbf{I}}\mathbf{b}_{i}}{\sqrt{l\boldsymbol{\gamma}_{i\cdot\delta}'\mathbf{\tilde{I}}\boldsymbol{\gamma}_{i\cdot\delta}}}$ in the target correlation constraint. 
The two new regularity conditions ensure that the solution to the dual criterion is neither an unconditional maximum of the new objective function (non-degeneration) nor the maximum in the target correlation of the new constraint (interior point), see the reasoning following Eq.\eqref{def_mse_rhomax}. As a result, Lagrange multipliers are finite and non-vanishing. After differentiation of the Lagrangian $\tilde{\mathcal{L}}_i$,  Equation \eqref{doc} is obtained, after re-arranging terms\footnote{The dependence of $\tilde{D}_i,\tilde{\nu}_i$ on the Lagrangian multipliers $\tilde{\lambda}_{i1},\tilde{\lambda}_{i2}$ differs slightly from that in the primal problem, a discrepancy that can be rectified through straightforward re-scaling, given that the multipliers are finite and non-vanishing.}. The feasible solution space is now defined by the intersection of the length-ellipse—corresponding to the original length constraint—and the target correlation plane—representing the new target correlation constraint (in contrast to the intersection of the ellipse and hyperbola in the primal problem).\\
For the proof of the second assertion, we first note that the main distinction between original M-SSA and dual criteria concerns  the selection of $\tilde{\nu}_i$, such that it satisfies the new target correlation constraint $\rho(y_i(\tilde{\nu_i}),z_i,\delta)=\rho(y_i(\nu_{0i}),z_i,\delta)$ (instead of the original HT constraint). This distinction is significant insofar as the optimal $\tilde{\nu}_{i0}$ of the dual problem is generally not uniquely specified by the hyperparameter $\rho_{iyz}$ associated with the target correlation constraint\footnote{Unlike the first-order ACF, the target correlation is not a strictly monotonic function for  $\nu_i\in\{\nu||\nu|>2\rho_{max}(L)\}$, since the sign of its derivative depends on the sign of $\nu_i$, see Equation \eqref{ficcc}.}, in contrast to the primal (original M-SSA) problem, where the parameter $\nu_{0i}$ is uniquely determined by $\rho_i$ in the HT constraint (at least under the posited assumptions, see Corollary \eqref{lambda_num_gen_mult}). However, we infer from the similar formal structure of the problem, as expressed by Equation \eqref{doc}, that the strict monotonicity of the 
target correlation $\rho(y_i(\tilde{\nu_i}),z_i,\delta)$, when  $\tilde{\nu}_i\in\{\nu|\nu>2\rho_{max}(L)\}$ or when $\tilde{\nu}_i\in\{\nu|\nu<-2\rho_{max}(L)\}$ (Assertion \eqref{ass5} of Theorem \eqref{lambda_mult}), extends to the dual problem. To proceed with the proof of the second claim, we then initially assume that the search space for  $\tilde{\nu}_i$ in the dual problem can be restricted to the set $\{\nu|\nu>2\rho_{max}(L)\}$. In this case, the solutions to the primal and dual problems must coincide due to strict monotonicity of $\rho(y_i(\tilde{\nu_i}),z_i,\delta)$, which guarantees the uniqueness of the solution. 
The requested extension of this result to the set $\{\nu||\nu|>2\rho_{max}(L)\}$ is further supported by Assertion \eqref{ass4_mult} of Theorem \eqref{lambda_mult}, which applies in analogous manner to the dual problem and which establishes that 
$\rho(\tilde{\nu}_i)<\rho(\nu_{0i})$, when $\tilde{\nu}_i<-2\rho_{max}(L)$ (since $\nu_{0i}>2\rho_{max}(L)$, by assumption). Consequently, the maximum of the objective function of the dual criterion can be confined to the region $\{\nu|\nu>2\rho_{max}(L)\}$. \\
A similar line of reasoning applies to the proof of the last claim, where $\nu_{0i}<-2\rho_{max}(L)$ in the primal problem, with the necessary adjustment that the maximization in the dual Criterion \eqref{crit2} must be replaced with minimization, as now $\rho(\tilde{\nu}_i)>\rho(\nu_{i0})$ if $\tilde{\nu}_i>2\rho_{max}(L)$\footnote{If the maximization were retained in the objective function of the dual criterion, it would generally result in swapping the solution $\nu_{0i}<-2\rho_{max}(L)$ of the primal problem from the left branch $\{\nu|\nu<-2\rho_{max}(L)\}$ to the right branch $\{\nu|\nu>2\rho_{max}(L)\}$, due to maximization of the first-order ACF. Such a swap would lead to a different solution to the dual problem in this case.}.\hfill \qedsymbol{}\\

\textbf{Remark:} Wildi (2026) demonstrates that the condition  $|\nu_i|>2\rho_{max}(L)$, ensuring strict monotonicity and uniqueness in the previous results, is not a limitation in practical applications.  Consequently, we may assume this condition holds except in highly unusual problem specifications.\\

Under the posited assumptions, the above theorem characterizes the M-SSA predictor as the smoothest predictor—exhibiting the lowest zero-crossing rate—for a given level of tracking accuracy (target correlation). This property distinguishes the M-SSA approach as a compelling alternative to traditional smoothing methods, as demonstrated in the next section.

\section{Applications}\label{examples}

Our examples highlight three key aspects: smoothing, forecasting, and nowcasting (real-time signal extraction). Smoothing is primarily concerned with causal targets, whereas traditional forecasting typically focuses on acausal, forward-looking allpass filters. Signal extraction encompasses various acausal filtering techniques, including lowpass filters for trend extraction, bandpass filters for cycle analysis, highpass filters for separating cycles from noise, and seasonal adjustment.

\subsection{Forecasting}\label{var1}

We apply M-SSA to a stationary VAR(1) process represented by the equation:
\[
\left(\begin{array}{c}z_{1t}\\z_{2t}\end{array}\right)=\mathbf{A}_1\left(\begin{array}{c}z_{1t-1}\\z_{2t-1}\end{array}\right)+\left(\begin{array}{c}\epsilon_{1t}\\\epsilon_{2t}\end{array}\right)~~~,~~~\mathbf{A}_1=\left(\begin{array}{cc}0.7&0.4\\-0.6&0.9\end{array}\right)~~,~~\boldsymbol{\Sigma}=\left(\begin{array}{cc}1.09&-1.45\\-1.45&2.58\end{array}\right).
\]
We compute one-step ahead  predictors  for $z_{1t+1}$ and $z_{2t+1}$ with $\delta=1$ and a predictor length of $L=100$. In the classic forecast application, $\mathbf{z}_t=\mathbf{x}_t$, $\boldsymbol{\Gamma}_k=\left\{\begin{array}{cc}\mathbf{I}&k=0\\\mathbf{0}&\textrm{otherwise}\end{array}\right.$ and the target $\mathbf{z}_{t+\delta}$ relies on an anticipative all-pass filter, for $\delta>0$. 
The HTs within the constraint are defined as   $ht_1=3$, $ht_2=8$. 
The VAR process allows for a convergent MA-inversion with weights $\boldsymbol{\Xi}_k=\mathbf{A}_1^k$. Accordingly, the HTs of the MSE predictor can be computed by substituting the first-order ACFs $(\boldsymbol{\gamma}\cdot\boldsymbol{\xi})_{i\delta}'\mathbf{\tilde{M}}(\boldsymbol{\gamma}\cdot\boldsymbol{\xi})_{i\delta}/(\boldsymbol{\gamma}\cdot\boldsymbol{\xi})_{i\delta}'\mathbf{\tilde{I}}(\boldsymbol{\gamma}\cdot\boldsymbol{\xi})_{i\delta}$, $i=1,2$, $\delta=1$, into Equation \eqref{ht} to yield $ht_{1MSE}=5.6$ and $ht_{2MSE}=4.6$. Here, $(\boldsymbol{\gamma}\cdot\boldsymbol{\xi})_{i\delta}$ represents the convolution of the target and the MA-inversion, as discussed in Section \eqref{ext_stat}. Given that the target is a forward-shifting allpass filter, we conclude that  $(\boldsymbol{\gamma}\cdot\boldsymbol{\xi})_{i\delta}$ corresponds to the shifted original MA-inversion, with the leading term at $-\delta=-1$ omitted (the forecast of the future noise term $\epsilon_{t+1}$ vanishes). \\

With our specifications we observe $ht_1<ht_{1MSE}$ , indicating that the M-SSA predictor must introduce additional zero crossings to `unsmooth' the MSE benchmark for the first series. Conversely, for the second series,  $ht_2>ht_{2MSE}$ suggests that the M-SSA predictor must `smooth' the MSE benchmark by producing fewer zero crossings. 
M-SSA and MSE predictors are displayed in Fig.\eqref{filt_coef_var1}, with the M-SSA predictor presented in the top panel and the MSE predictor in the bottom panel. The weights associated with the MSE predictor are defined as $\mathbf{A}_1^{\delta}=\mathbf{A}_1$. The unsmoothing effect   is characterized by an alternating high frequency pattern exhibited by the M-SSA filter in the first series (top left panel). In contrast, the smoothing effect results in slowly monotonically decaying patterns that are observed in the second series (top right panel).       
\begin{figure}[H]\begin{center}\includegraphics[height=3in, width=6in]{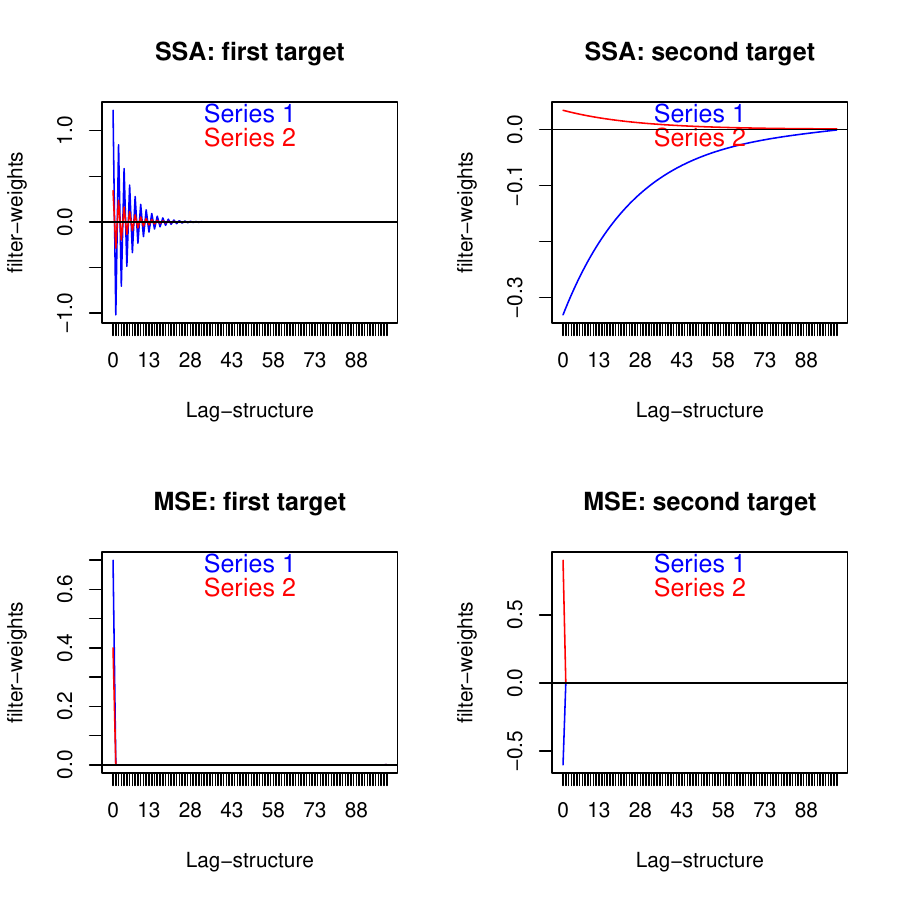}\caption{M-SSA  (top) and MSE (bottom) predictors for the first target series $z_{1t+\delta}$ (left) and the second target series $z_{2t+\delta}$ (right) with $\delta=1$ (one-step ahead). Predictor weights applied to first series $z_{1t}$ (blue) and second series $z_{2t}$ (red). The weights of the MSE predictor correspond to the first row of $A_1$ (bottom left) and the second row of  $A_1$ (bottom right), respectively.\label{filt_coef_var1}}\end{center}\end{figure}For confirmation, we proceed to compute sample estimates of the relevant performances measures for a  lengthy sample, $N=100000$, which ensures a near convergence of empirical to expected measures, as presented in Table \eqref{perf_var1}. The final column of the table corroborates the HTs of the MSE predictor, as derived previously. The optimal values of $\nu_i$ specified in Corollary \eqref{lambda_num_gen_mult} are $\nu_1$=-2.034$\in]-\infty,-2\rho_{max}(L)]$ (indicating unsmoothing) and $\nu_2$=2.001$ \in [2\rho_{max}(L),\infty[$ (indicating smoothing), as alluded to by the remark after the proof of Theorem \eqref{lambda_mult}.  
\begin{table}[ht]
\centering
\begin{tabular}{rrrrrr}
  \hline
 & Sample crit. & True crit. & Sample ht SSA & True ht SSA & Sample ht MSE \\ 
  \hline
Series 1 & 0.91 & 0.91 & 3.02 & 3.00 & 5.61 \\ 
  Series 2 & 0.66 & 0.67 & 8.04 & 8.00 & 4.65 \\ 
   \hline
\end{tabular}
\caption{Performances of SSA- and MSE-predictors: empirical and true holding-times as well as criterion values (correlations of SSA with MSE-predictors).  } 
\label{perf_var1}
\end{table}A comparative analysis of the sample predictors and targets is presented in Figure \eqref{output_var1}, which demonstrates the effects of `unsmoothing' in the first series and smoothing in the second series, as dictated by the HT constraints.

\begin{figure}[H]\begin{center}\includegraphics[height=3in, width=6in]{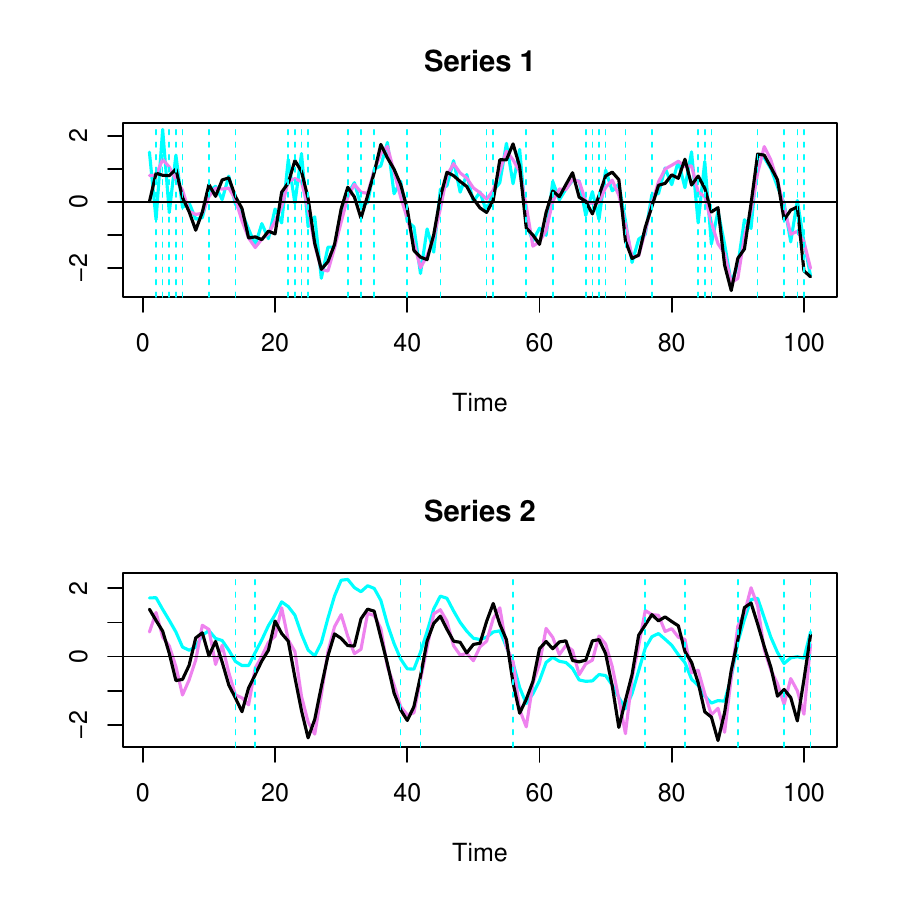}\caption{A comparison of filter outputs: M-SSA (cyan), MSE (violet) and target (black) for $\delta=1$. Zero-crossings by M-SSA  are marked by vertical lines.\label{output_var1}}\end{center}\end{figure}

\subsection{Smoothing}\label{smoo_app}

If the target correlation in the previous example is computed against the \emph{acausal} target $\mathbf{z}_{t+1}$, then M-SSA functions as a forecasting method. However, as discussed in Section \eqref{equi_s}, the MSE predictor can be utilized in place of the effective target without impacting the optimization results. In this context, M-SSA is then considered a smoother for the \emph{causal} MSE predictor, with smoothing referring specifically to the application of M-SSA to causal targets. In this context, we can examine the well-established Whittaker-Henderson (WH) graduation, expressed as follows:
\begin{eqnarray}\label{whithend}
\min_{\mathbf{b}_i}\sum_{t=L}^T(x_t-y_t)^2+\lambda\sum_{t=L+D}^T (\Delta^D y_t)^2,
\end{eqnarray}
where $\Delta^D$ represents the common difference operator  applied $D$-times. Notably, for $D=2$, this formulation yields the Hoddrick-Prescott (HP) filter, as detailed by Hodrick and Prescott (1997). The equation delineates a tradeoff between conflicting objectives of data-fitting and data-smoothing terms, with the regularization parameter $\lambda$ serving to balance this trade-off. 
An alternative formulation (dilemma) can be formalized within the (univariate) SSA framework as follows:
\begin{eqnarray}\label{crit_smooth}
\left.\begin{array}{cc}
&\max_{\mathbf{b}}\rho\Big(y,x,\delta\leq 0\Big)\\
&\rho(y)=\rho_1
\end{array}\right\},
\end{eqnarray}
where, in contrast to the previous section, $\delta\leq 0$. In this setting, the target $z_{t+\delta}=x_{t+\delta}$
corresponds to the output of a \emph{causal} back-shifting all-pass filter. 
If $L$ is defined as an odd integer and we select $\delta=-(L-1)/2$, this ensures symmetry in the backcast; the coefficients of the causal filter are centered at $x_{t-(L-1)/2}$, with the tails reflecting symmetrically around this central point. For the range $-(L-1)/2\leq \delta\leq 0$, the design transitions from symmetry to asymmetry as it approaches the nowcast at $\delta=0$. To illustrate, Fig.\eqref{wh_ssa_smoothing_all} presents the (univariate) SSA smoothers $\mathbf{b}_{\delta}$ based on Criterion \eqref{crit_smooth}, scaled arbitrarily to unit length and plotted as a function of $\delta$. 
This analysis assumes WN data, with $L=201$ and $\rho_1=0.9986$, which is inserted  into the HT constraint of Criterion \eqref{crit_smooth}. Here, $\rho_1$ represents the first-order ACF of the two-sided HP filter with parameter $\lambda=14400$, as derived from the WH criterion \eqref{whithend} and utilized in the subsequent Section \eqref{se_app}. This selection of $\lambda$ aligns with applications to monthly time series, as discussed by Ravn and Uhlig (2002). \\

\begin{figure}[H]\begin{center}\includegraphics[height=3in, width=4in]{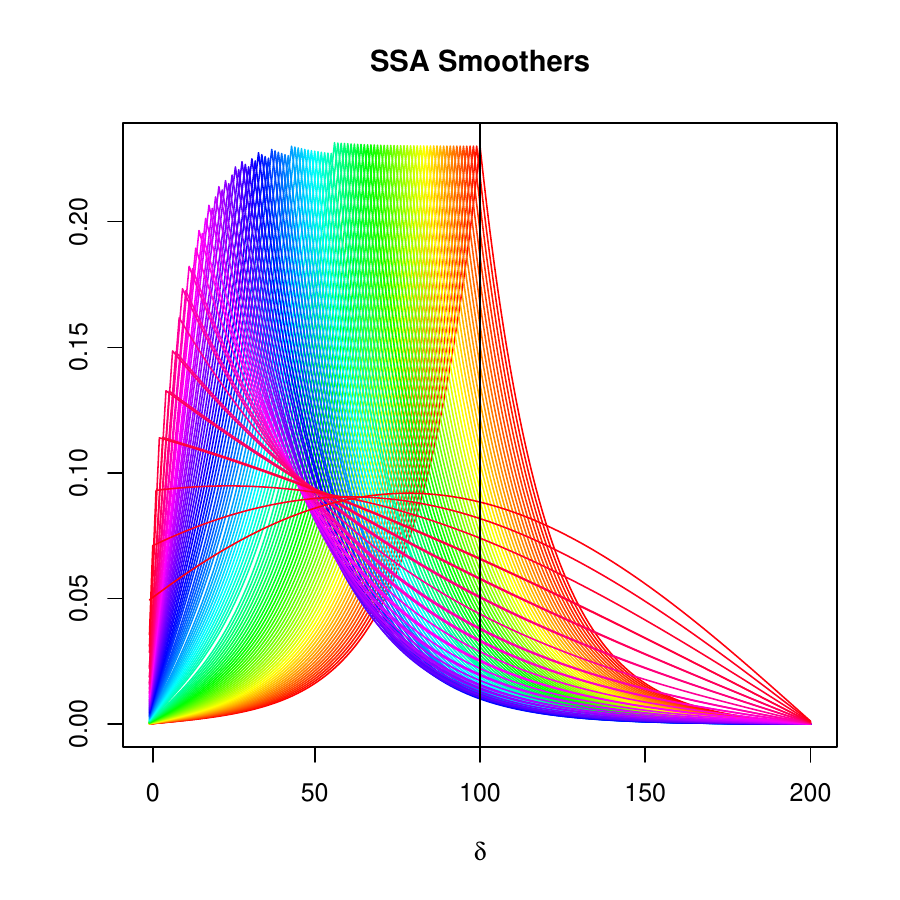}\caption{SSA smoothers scaled to unit length as a function of $\delta$ with values ranging from $\delta=-(L-1)/2=-100$ (symmetric filter) to $\delta=0$ (asymmetric nowcast), assuming the data to be WN and imposing the HT (or first-order ACF) of the two-sided HP(14400) filter. \label{wh_ssa_smoothing_all}}\end{center}\end{figure}The SSA smoothers in this example replicate exactly the HT of HP, regardless of $\delta$. Consequently, by Theorem \eqref{lambda_mult}, they are expected to produce superior target correlations or achieve lower MSEs compared to HP (assuming white noise data). Conversely, for similar target correlation, we expect SSA to yield superior HT when compared to HP, owing to Theorem \eqref{cor3} (duality). For illustration, Figure \eqref{wh_ssa_smoothing} presents a truncated symmetric causal HP(14400) filter alongside two distinct symmetric SSA smoothers, with the assumption that $\delta = -\frac{(L-1)}{2}$. Additionally, Table \eqref{smooth_comp} provides a comparative analysis of the performance of all three designs, assuming the data follows a WN sequence.
\begin{table}[ht]
\centering
\begin{tabular}{rrrr}
  \hline
 & HP & SSA1 & SSA2 \\ 
  \hline
Holding times & 59.548 & 59.548 & 75.000 \\ 
  Target correlations & 0.205 & 0.228 & 0.205 \\ 
  RMS second-order differences & 0.005 & 0.024 & 0.017 \\ 
   \hline
\end{tabular}
\caption{HP vs. two different SSA smoothers. SSA1 replicates the holding time of HP and SSA2 replicates its target correlation. Root mean-squared second-order differences in the last row refer to standardized white noise data.} 
\label{smooth_comp}
\end{table}For identical HT, the SSA1 design (represented by the blue line in the figure) demonstrates superior performance in target correlation compared to HP, as claimed. Similarly, the SSA2 design (depicted by the violet line in the figure) exhibits a maximal HT, surpassing the HP filter, given an equivalent target correlation. However, when evaluating smoothness based on the curvature (mean-squared second order differences), HP outperforms both SSA smoothers. 
The observed variations across the reported performance metrics are sufficiently substantial to warrant a carefully informed decision between minimizing the rate of zero-crossings with SSA or reducing curvature via WH. In summary, the simplified SSA optimization problem \eqref{crit_smooth} can be regarded as a novel smoothing algorithm that sets itself apart from traditional methods such as the WH or HP approaches derived from Criterion \eqref{whithend}. This distinction stems from the fact that the regularization term is explicitly formulated as a HT constraint, providing an alternative perspective on controlling smoothness in time series analysis. We argue that regulating the rate of sign changes offers a viable alternative to classical filtering and smoothing techniques, particularly within economic contexts characterized by alternating growth phases, where it aligns more closely with the underlying decision-making and control logic, as discussed in Wildi (2024).\\

\begin{figure}[H]\begin{center}\includegraphics[height=3in, width=4in]{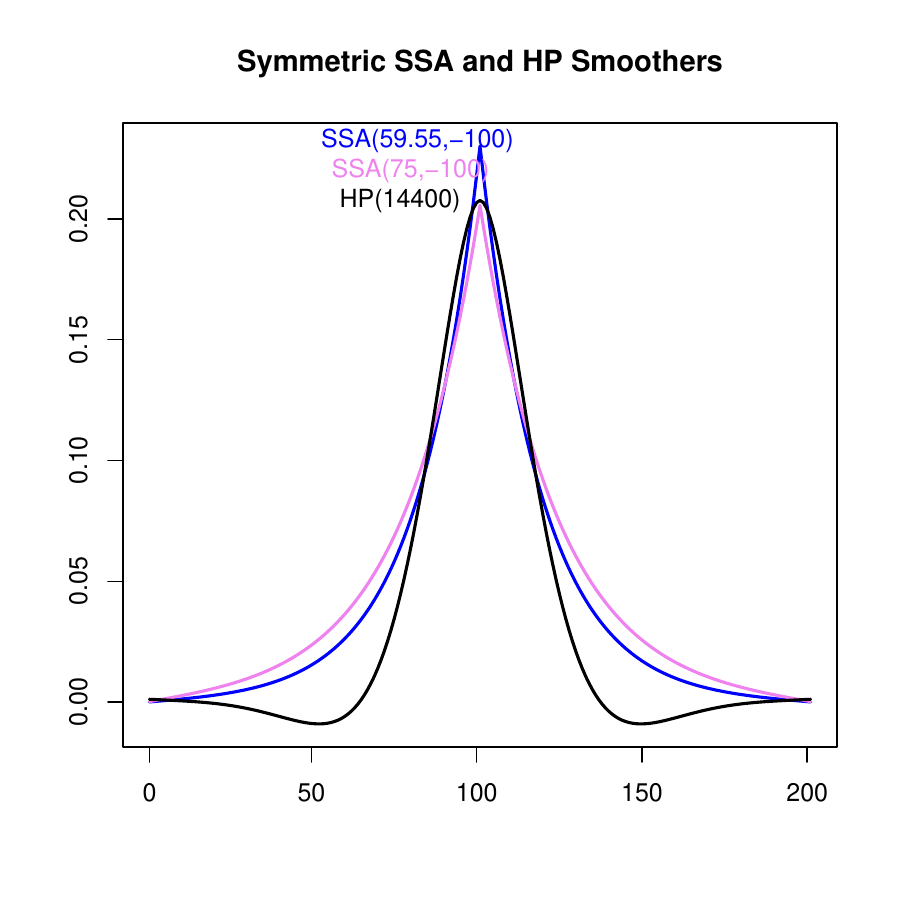}\caption{Coefficients of symmetric causal SSA and HP smoothers of length 201, arbitrarily scaled to unit length: the first SSA design (blue line) replicates the holding time of HP, the second SSA design (violet line) replicates the tracking-ability or target correlation of HP.\label{wh_ssa_smoothing}}\end{center}\end{figure}

The univariate smoothers in the above example are determined by the assumption of WN. We now extend the framework to multivariate autocorrelated data, as discussed in Section \eqref{ext_stat}.  
For illustration, 
consider the following three-dimensional VAR(1),
\[
\mathbf{x}_t=\left(\begin{array}{ccc}
0.7&0.4&-0.2\\
-0.6&0.9&0.3\\
0.5&0.2&-0.3
\end{array}\right)\mathbf{x}_{t-1}+\boldsymbol{\epsilon}_t~,~\boldsymbol{\Sigma}=\left(\begin{array}{ccc}
3.17&0.77&-0.5\\
0.77&0.69&0\\
-0.5&0&1.7
\end{array}\right)
\]
with Wold-decomposition $\boldsymbol{\Xi}_k=\mathbf{A}^k$, assuming a causal allpass target $z_t=x_t$ with $\delta=0$, which corresponds to the proper identity. 
We select $L=51$ and set the HTs $ht_1=8$, $ht_2=6$ and $ht_3=10$ within the HT constraints for the three M-SSA smoothers $y_{1t},y_{2t}$ and $y_{3t}$, as illustrated in Fig.\eqref{filt_coef_alternative_signal_extraction}. 
The plots indicate that the second series $x_{2t}$ (in red) serves as a significant explanatory variable for all three target series (after suitable adjustment of scales). Fig.\eqref{peak_corr} provides insight into this phenomenon, as the peaks of the CCF shown in the left panel, along with the original time series in the right panel, indicate that $x_{2t}$ is left-shifted or leading. This finding can be interpreted as supporting the importance of a leading indicator within a multivariate nowcasting framework. Table \eqref{tab_1} presents expected and sample performance measures, the latter in parentheses, based on an extensive sample size $N=100000$, which corroborates the posited constraints. Additionally, the HTs derived from the original (unfiltered) data, detailed in the last column, identify the second series as  the smoothest  among the three. 
The double strike of  a lead as well as an increased smoothness of $x_{2t}$ further reinforce its importance as an explanatory variable in the context of endpoint smoothing (or forecasting). \\

\begin{figure}[H]\begin{center}\includegraphics[height=3in, width=6in]{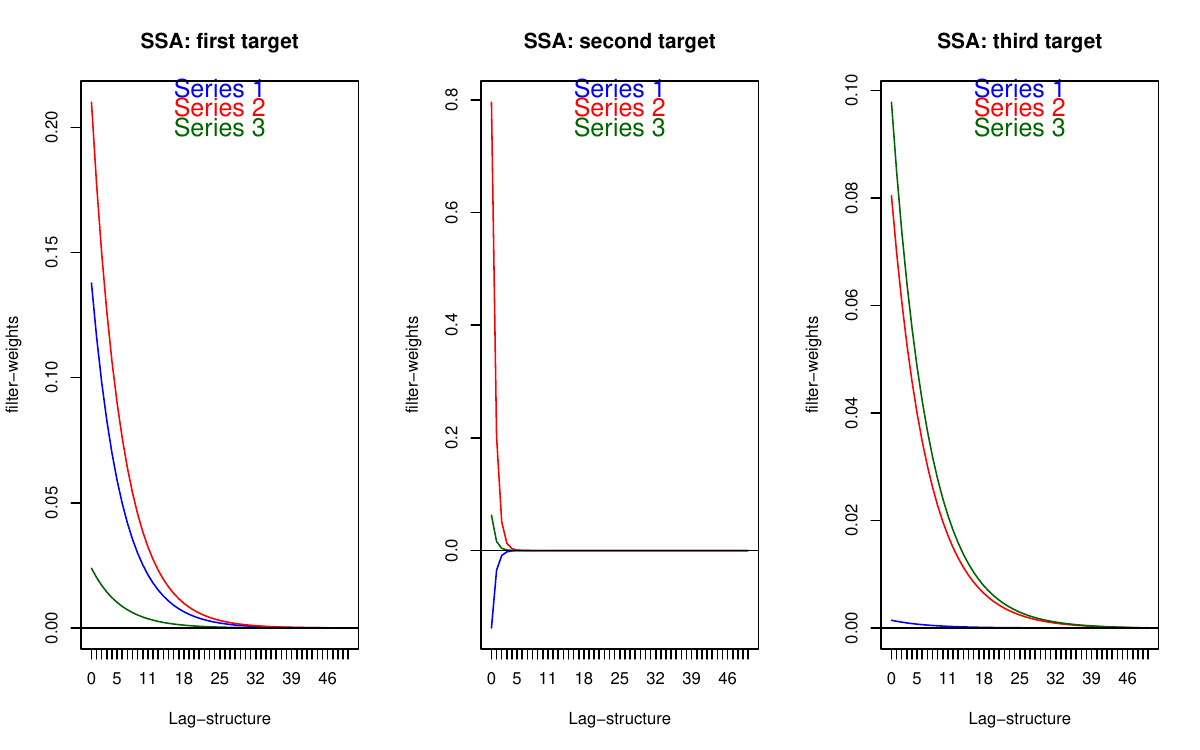}\caption{M-SSA smoother for $\delta=0$: first target series (left), second target  (middle) and third target (right). Filters applied to first series (blue), second series (red) and third series (green). \label{filt_coef_alternative_signal_extraction}}\end{center}\end{figure}\begin{figure}[H]\begin{center}\includegraphics[height=3in, width=6in]{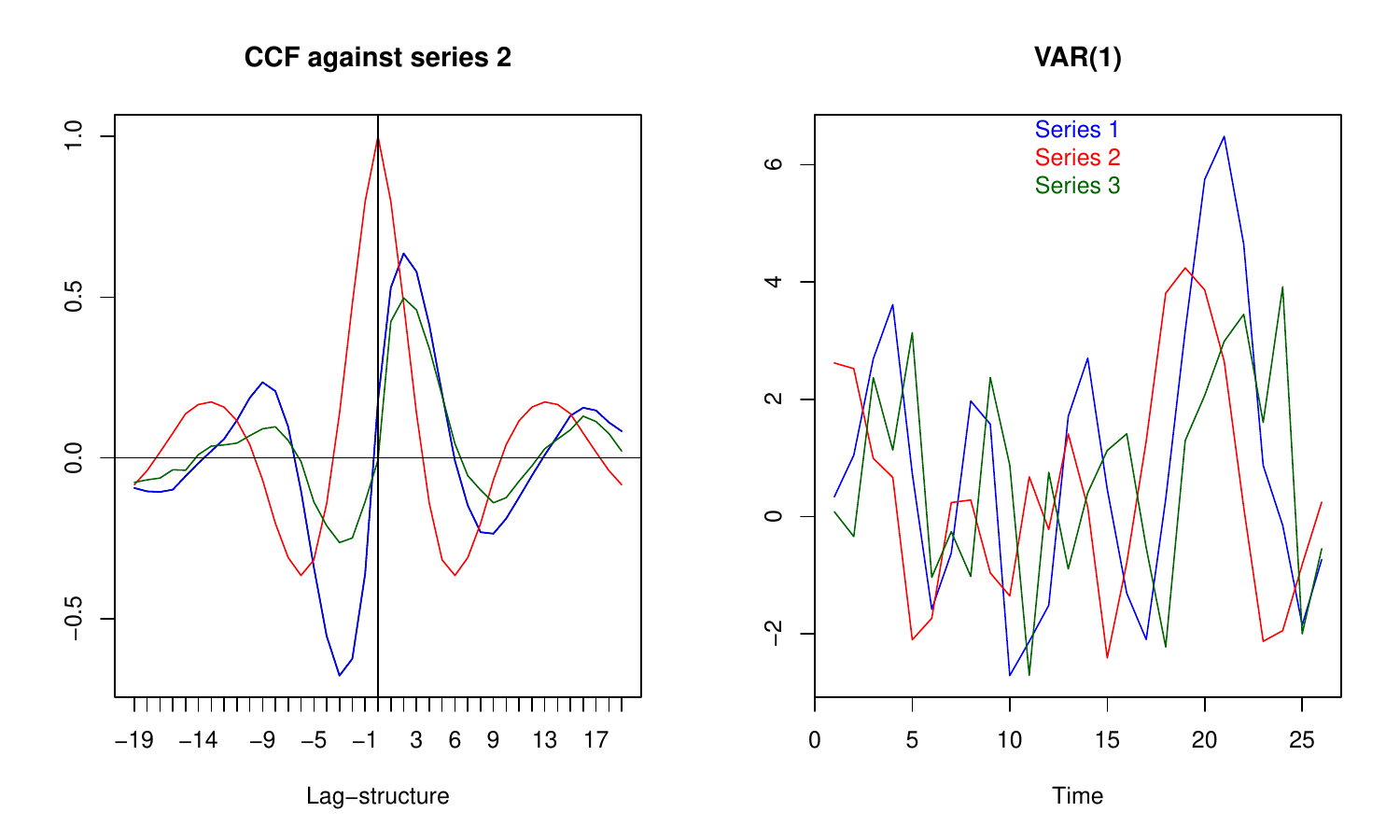}\caption{Left panel: CCF of second series and first series (blue), ACF of second series (red) and CCF of second series and  third series (green): the right-shifted peaks of the CCFs (blue and green) suggest that the second series $x_{2t}$ is a leading time series. Right panel: a short subsample of $x_{it}$, $i=1,2,3$: the second series (red) appears left-shifted (leading).\label{peak_corr}}\end{center}\end{figure}
\begin{table}[ht]
\centering
\begin{tabular}{rllll}
  \hline
 & Sign accuracy & Cor. with data & HT M-SSA & HT of data \\ 
  \hline
1 & 0.74 (0.74) & 0.69 (0.69) & 8 (8.01) & 3.91 (3.89) \\ 
  2 & 0.96 (0.95) & 0.99 (0.99) & 6 (6) & 4.9 (4.88) \\ 
  3 & 0.66 (0.66) & 0.48 (0.48) & 10 (10) & 2.12 (2.11) \\ 
   \hline
\end{tabular}
\caption{Expected performances of M-SSA (first three columns) and expected HTs of original data $\mathbf{x}_t$ (last column). Sample estimates in parentheses are based on an extensive sample of size $N=100000$. } 
\label{tab_1}
\end{table}Fig.\eqref{output_alternative_signal_extraction} presents a comparison between  $x_{it}$ (black) and M-SSA endpoint smoothers $y_{it}$ (cyan), for $i=1,2,3$, thereby confirming the previously discussed findings.
Specifically, the smoothing task proves to be significantly less challenging for $x_{2t}$ relative to $x_{3t}$, as evidenced by the more substantial difference between the HTs of the original data and the smoothers in the latter case. This observation is further supported by the last two columns of Table \eqref{tab_1}. It is worth noting that if we selected $\delta=-(L-1)/2$ in the above example, instead of $\delta=0$, the resulting smoothers $y_{it}$ would simplify to (nearly) symmetric and univariate designs (not shown).

\begin{figure}[H]\begin{center}\includegraphics[height=3in, width=6in]{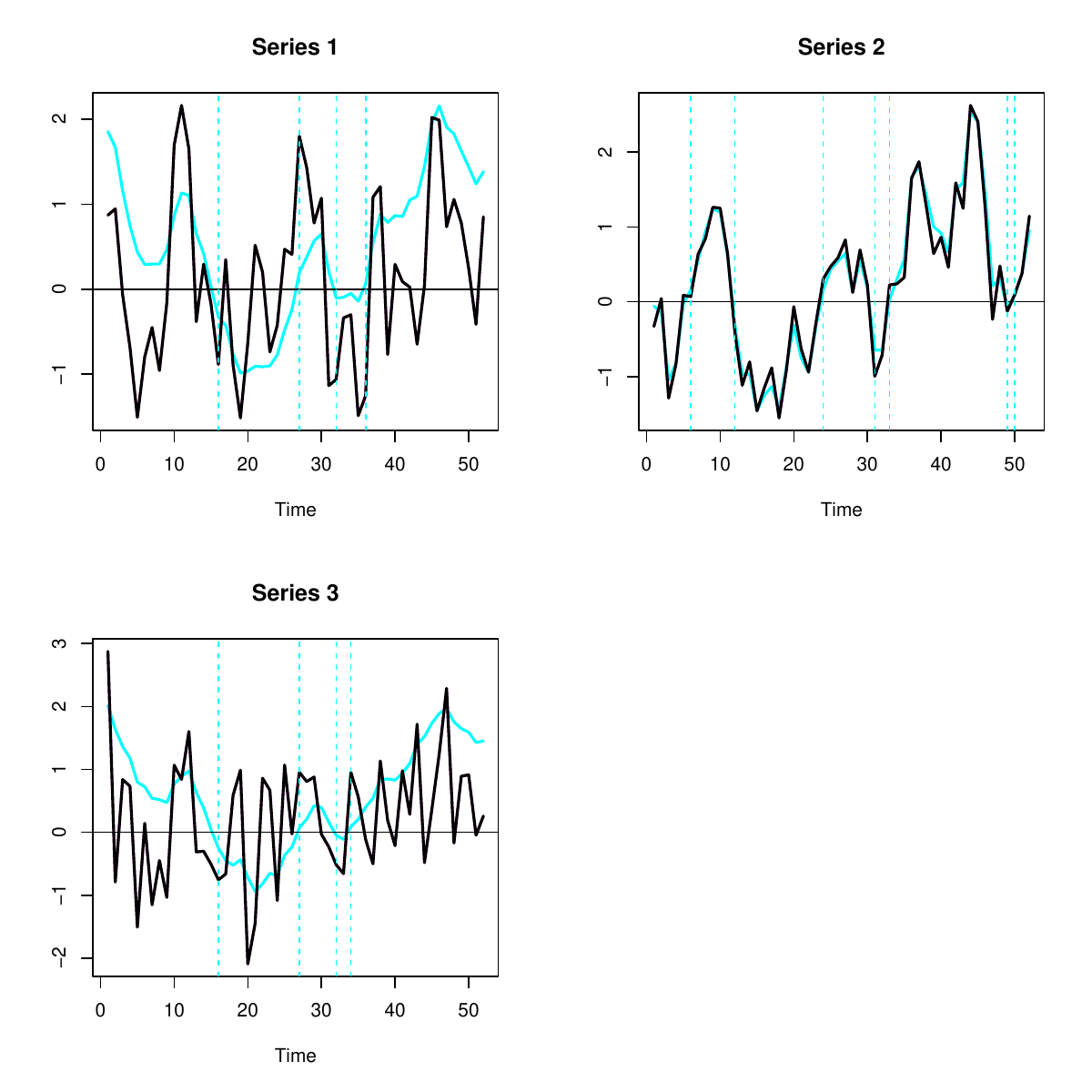}\caption{A comparison of original data (black) and M-SSA smoothers (cyan). Zero-crossings of the smoothers are marked by vertical lines.\label{output_alternative_signal_extraction}}\end{center}\end{figure}

\subsection{Signal Extraction}\label{se_app}

The M-SSA criterion is capable of addressing both uni- and multivariate forecasting and smoothing tasks.  Here, we consider signal extraction, allowing for both the target and the MA inversion to deviate from allpass (identity) designs, representing the highest level of complexity for M-SSA. To illustrate this, we follow the business-cycle application outlined in Wildi (2024), which applies the HP(14400) filter to the monthly U.S. industrial production index\footnote{Board of Governors of the Federal Reserve System (US), Industrial Production: Total Index [INDPRO], retrieved from FRED, Federal Reserve Bank of St. Louis; https://fred.stlouisfed.org/series/INDPRO, October 23, 2024.}. We extend the scope of this analysis by incorporating an additional leading indicator, the Composite Leading Indicator (CLI) provided by the OECD\footnote{Organization for Economic Co-operation and Development, Leading Indicators OECD: Leading Indicators: Composite Leading Indicator: Amplitude Adjusted for United States [USALOLITOAASTSAM], retrieved from FRED, Federal Reserve Bank of St. Louis; https://fred.stlouisfed.org/series/USALOLITOAASTSAM, October 23, 2024.}, as depicted in the top left panel of Fig.\eqref{data}.
\begin{figure}[H]\begin{center}\includegraphics[height=3in, width=6in]{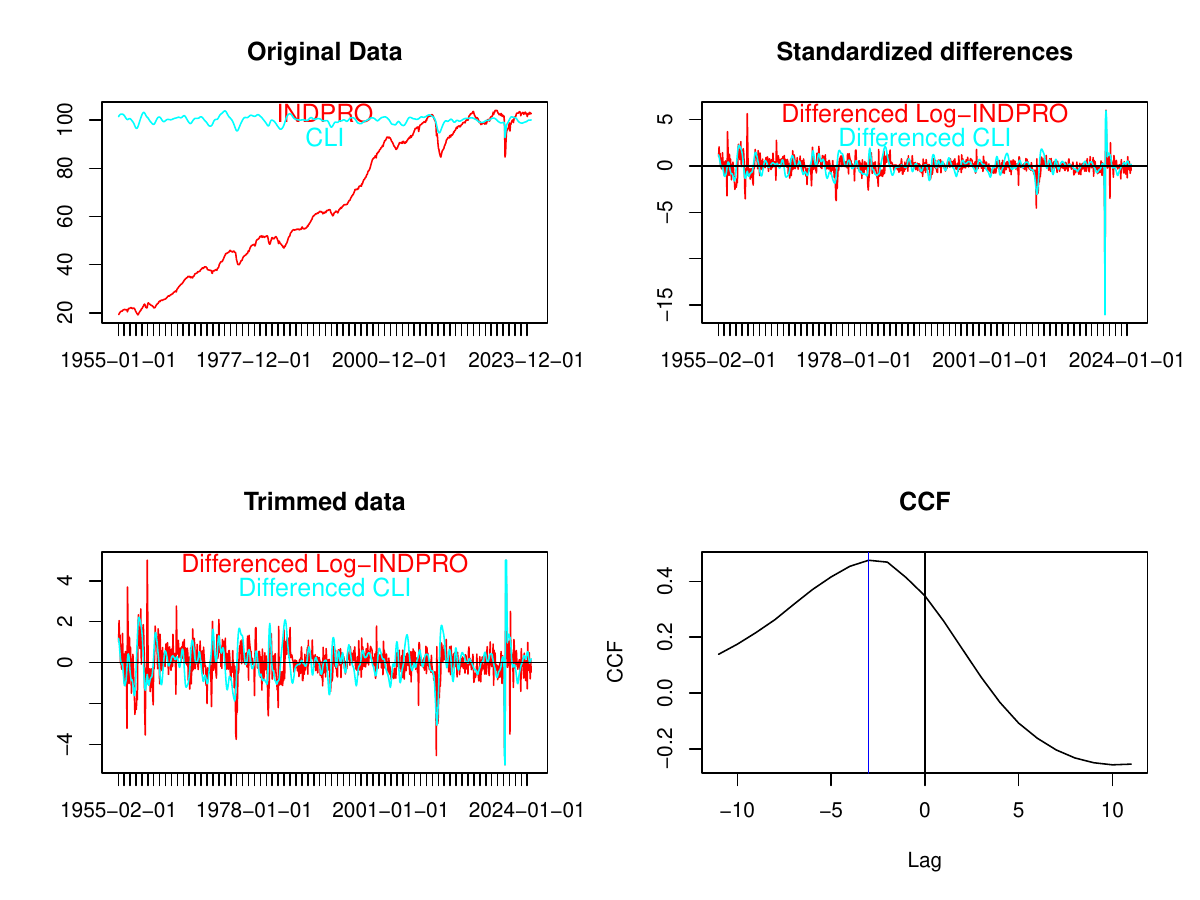}\caption{The monthly Industrial Production Index (INDPRO) and Composite Leading Indicator (CLI) are presented in the top left panel, with standardized log-differences displayed in the top right panel. In the bottom left panel, singular pandemic outliers have been removed to ensure that absolute values do not exceed $5\sigma$. The CCF of log-differences is shown in the bottom right panel. The  peak of the CCF at lag -3, suggests an advancement or lead of the CLI by one quarter.\label{data}}\end{center}\end{figure}Given that the CLI fluctuates around a fixed level, we now consider the first differences of the log-transformed data.  The top right panel of the figure presents the corresponding standardized time series, while the bottom left panel displays the actual data utilized for fitting. In this latter panel, singular pandemic outliers have been removed (trimmed) to ensure that absolute values do not exceed $5\sigma$, a threshold that aligns with extreme observations from earlier historical data. The peak of the CCF in the lower right panel indicates that the CLI leads the INDPRO by one quarter. 
We examine the application of the HP trend filter to differenced series, with a focus on the growth cycle to track U.S. recession episodes as identified by the National Bureau of Economic Research (NBER)\footnote{In contrast, the conventional output gap is defined as the difference between the actual (original non-stationary) time series and its trend, which addresses deviations from `potential output'. A comparative analysis of both designs is presented in Wildi (2024). However, to conserve space, this discussion will utilize the single growth cycle concept to illustrate the M-SSA.}. We then compare univariate and bivariate designs, highlighting target correlations, holding times (HTs) and leads (left shifts/advancement) or lags (right-shifts/retardation) of the corresponding nowcasts. Specifically, besides the bivariate MSE nowcast we consider the univariate SSA and the classic concurrent HP, as discussed in McElroy (2006), denoted as HP-C, as additional benchmarks. \\

To ensure clarity and parsimony,  we employ a simple bivariate model, excluding data revisions by relying on the final releases of both indicators. Additionally, the model incorporates all available observations.\footnote{Sample truncation may affect model identification and parameter estimation, although the impact on the target correlation and the HT of the resulting filters remains limited unless truncation is severe (e.g., fewer than half of the available observations).} While a more comprehensive analysis would likely yield improved forecast performance and potentially insightful out-of-sample conclusions, our primary objective is to illustrate the accuracy-smoothness trade-off based on the M-SSA baseline extension discussed in Section \eqref{op_mod}, avoiding unnecessary complications from modeling issues pertinent to the baseline (pseudo-)MSE specification. Heinisch et al. (2026) present a more complex five-dimensional M-SSA predictor for German GDP, emphasizing explicit modeling considerations; the corresponding empirical framework is available in our open source M-SSA package for further investigation.

\subsubsection{Univariate: HP-C and  SSA}

We begin by deriving a univariate SSA framework, with the filter length set to  $L=201$, see Wildi (2024). 
This analysis yields the following ARMA(2,1) model for the transformed INDPRO series illustrated in the bottom left panel of Fig.\eqref{data}:
\begin{eqnarray}\label{arma_mod}
\mathbf{x}_t&=&0.01+0.96x_{t-1}-0.16x_{t-2}+\epsilon_t-0.64\epsilon_{t-1}.
\end{eqnarray}
According to standard diagnostic tests, the residiuals are white noise. This model is employed to derive the MA inversion $\xi_k$ for $k=0,...,L-1=200$, as illustrated in the left panel of Fig. \eqref{filt_coef_INDPRO_uni_x}. 
Our acausal objective $z_{t+\delta}$ utilizes the (truncated) two-sided Hodrick-Prescott filter with  $\lambda = 14400$, as illustrated in the middle panel of the figure. The (univariate) MSE nowcast of $z_{t+\delta}$, where  $\delta=0$, is derived following  McElroy and Wildi (2016)\footnote{\label{footmc}An open source R-package is provided by  McElroy and Livsey (2022) with further discussion available in  McElroy, T. (2022).}
\begin{eqnarray}\label{sem_inf_mse}
\hat{\gamma}_{x\delta}(B)=\sum_{k\geq 0}\gamma_{k+\delta}B^k+\sum_{k<0}\gamma_{k+\delta}\left[\boldsymbol{\xi}(B)\right]_{|k|}^{\infty}B^k\boldsymbol{\xi}^{-1}(B),
\end{eqnarray}
where $B$ denotes the backshift operator, $\boldsymbol{\xi}(B)=\sum_{k\geq 0}\xi_k B^k$, $\boldsymbol{\xi}^{-1}(B)$ represents the AR-inversion of the data generating process and $\gamma_{k}$, $|k|<\infty$, are the weights of the two-sided HP filter. The notation $[\cdot]_{|k|}^{\infty}$ signifies the omission of the first $|k|-1$ lags\footnote{The derivation of the MSE benchmark follows three distinct steps: first, compute the convolution of the (finite length) two-sided target filter $\gamma_k$
and the MA inversion $\xi_k$ 
of the ARMA model; second, truncate this expression by omitting the acausal terms associated with all future innovations $\epsilon_{t-k}$ for $k < 0$; lastly, deconvolute the MSE filter from the truncated convolution, as specified in Equation \eqref{con_inv}. The M-SSA package provides comprehensive guidance on the sequence of transformations and operations.}. 
The resulting filter is depicted in the right panel of Fig. \eqref{filt_coef_INDPRO_uni_x} (green line). \\

The MSE nowcast filter exhibits a first-order ACF of 0.963, which corresponds to a HT of 11.508. Similarly, the classic HP-C filter has first-order ACF 0.967, which corresponds to a HT of 12.267. The (univariate) SSA, as illustrated in Fig.\eqref{filt_coef_INDPRO_uni_x} (blue line, right panel), is derived from the MSE nowcast,  as outlined in  Wildi (2024) (the proceeding corresponds to Section \eqref{ext_stat} for the  univariate case). Within the HT constraint, we set $ht_1=17.26$, which surpasses the MSE benchmark (and incidentally HP-C) by 50$\%$.  As a result, we anticipate that the SSA nowcast will yield a reduction of 33$\%$ in zero-crossings over the long term. While this choice of the HT constraint is somewhat arbitrary, our selection is mainly intended to illustrate the accuracy–smoothness trade-off. 
\begin{figure}[H]\begin{center}\includegraphics[height=3in, width=6in]{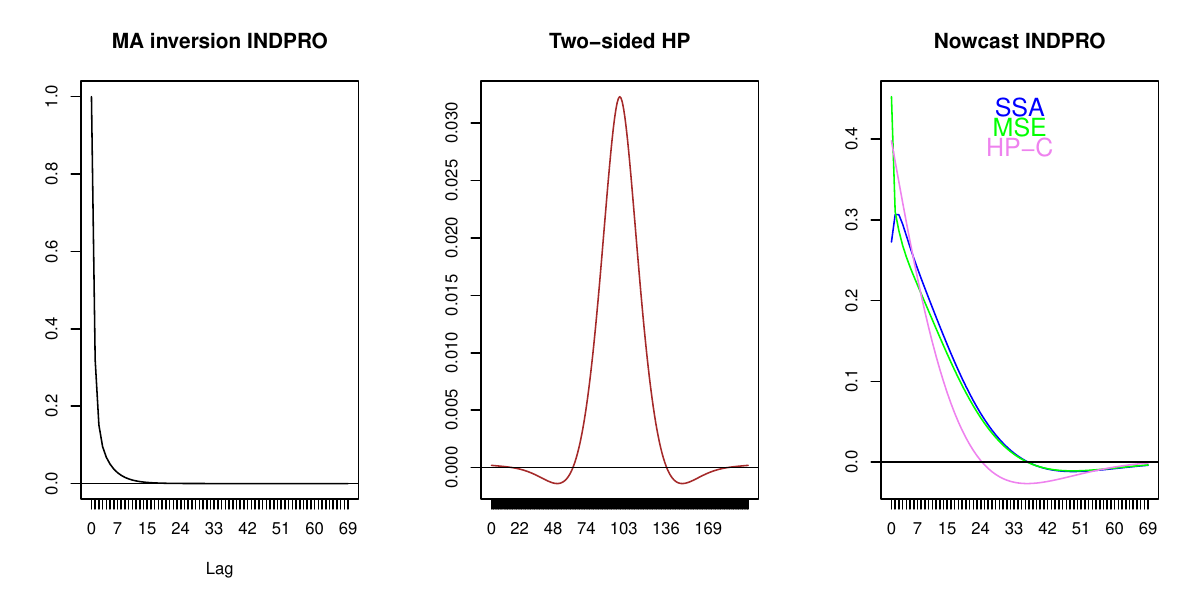}\caption{MA inversion of ARMA-model fitted to INDPRO (left panel). Truncated two-sided HP target filter (middle panel). Univariate MSE (green), HP-C (violet) and SSA nowcast filters (blue) (right panel). Only the first 70 lags are shown for clarity, and all filters are scaled to unit length to facilitate visual comparison.\label{filt_coef_INDPRO_uni_x}}\end{center}\end{figure}A comparison of SSA and HP-C nowcasts in Fig.\eqref{INDPRO_uni_HP_out} reveals that the SSA  exhibits increased smoothness, albeit at the cost of a one-month rightward shift or retardation relative to HP-C (sample performances are reported in Tables  \eqref{tab_bi_1} and  \eqref{tab_bi_2}). Thus , we anticipate that the bivariate leading indicator design, which incorporates the additional CLI, will mitigate this lag while maintaining comparable smoothness.
\begin{figure}[H]\begin{center}\includegraphics[height=3in, width=6in]{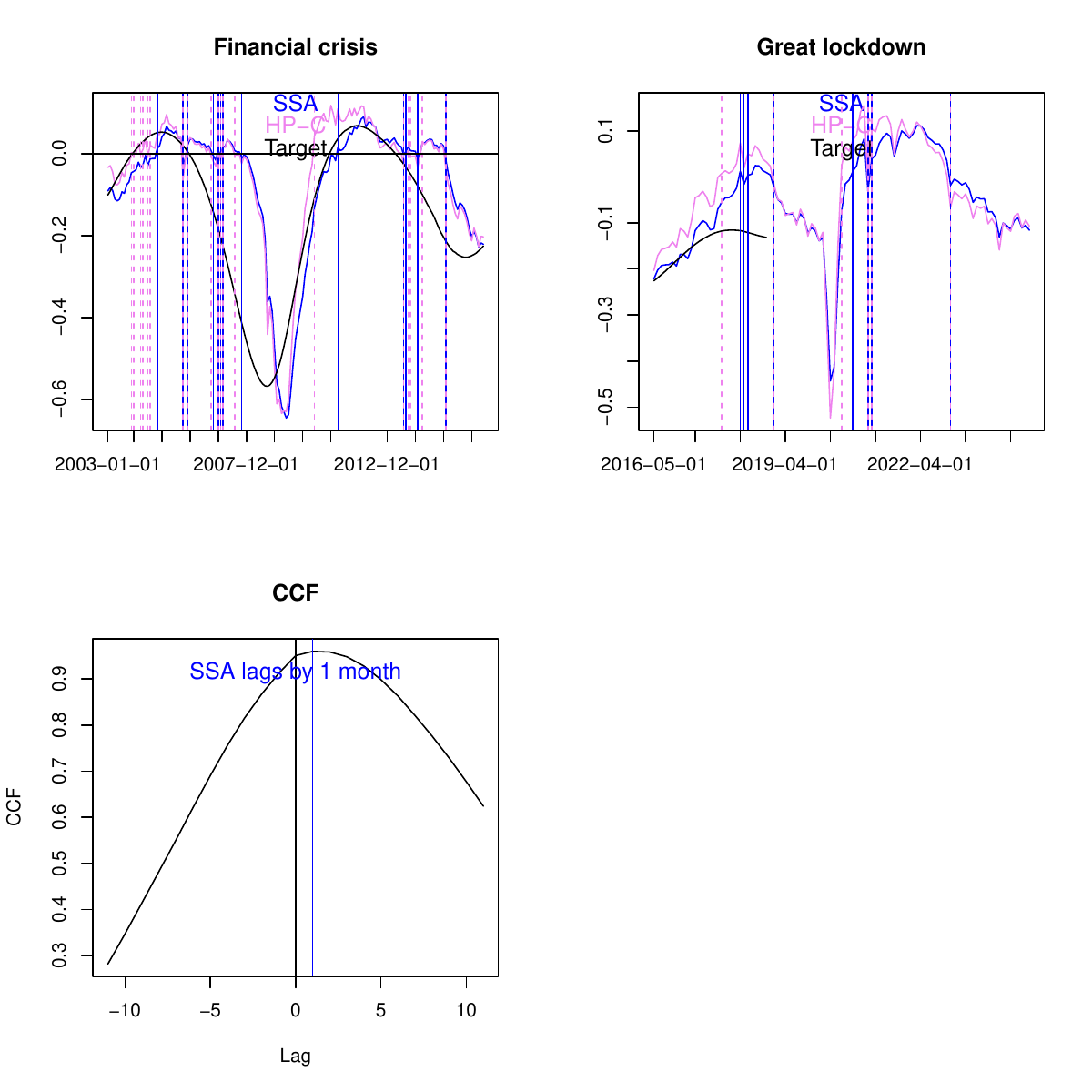}\caption{SSA (blue) vs. HP-C (violet), illustrating their performance during the financial crisis (top left) and the pandemic (top right). Zero-crossings are indicated by vertical lines with corresponding colour-coding: the SSA exhibits less crossings (smoother). The cross-correlation function presented in the bottom panel suggests that SSA exhibits a one-month lag relative to HP-C.\label{INDPRO_uni_HP_out}}\end{center}\end{figure}

\subsubsection{M-SSA: Bivariate Leading Indicator Design}

We examine a bivariate design that incorporates the CLI as an additional explanatory variable for the industrial production index. The following VARMA(3,1) model is specified as:
\begin{eqnarray*}
\mathbf{x}_t&=&\left(\begin{array}{c}0\\0\end{array}\right)+\left
(\begin{array}{cc}0.63&0.32\\
-0.28&1.28\end{array}\right)\mathbf{x}_{t-1}+
\left(\begin{array}{cc}-0.07&-0.44\\
-0.05&-0.36\end{array}\right)\mathbf{x}_{t-2}+
\left(\begin{array}{cc}0.02&0.3\\
0&0.09\end{array}\right)\mathbf{x}_{t-3}\\
&&+
\boldsymbol{\epsilon}_t+\left(\begin{array}{cc}0.5&-0.43\\
-0.19&0.2\end{array}\right)\boldsymbol{\epsilon}_{t-1}
\end{eqnarray*}
with the covariance matrix given by $\boldsymbol{\Sigma} = \left(\begin{array}{cc}0.562&0.05414\\ 0.05414&0.1494\end{array}\right)$, where the first series corresponds to INDPRO. 
The bivariate residuals are confirmed to exhibit white noise characteristics according to standard diagnostic tests. Subsequently, we focus attention on a INDPRO nowcast, discarding a corresponding analysis for the CLI. For consistency, we apply the same HT constraint as for the SSA in the previous section. 
\begin{table}[ht]
\centering
\begin{tabular}{rrrrrrr}
  \hline
 & Cor. HP-C & Cor. M-SSA & Cor. MSE & HT HP-C & HT M-SSA & HT MSE \\ 
  \hline
Expected & 0.650 & 0.736 & 0.744 & 11.132 & 17.263 & 11.011 \\ 
  Sample & 0.649 & 0.734 & 0.743 & 11.120 & 17.180 & 10.947 \\ 
   \hline
\end{tabular}
\caption{Simulation experiment: convergence of sample performances to expected numbers based on a single long sample of length one Million of the bivariate VARMA(3,2)-model. The first three columns correspond to target correlations (correlations of nowcasts and acausal HP trend) and the last three columns are the holding times of the nowcasts.} 
\label{tab_bi_0}
\end{table}A simulation experiment utilizing an extensive sample of one million observations from the aforementioned VARMA process corroborates the convergence of sample performance metrics to their expected values, as detailed in Table \eqref{tab_bi_0}. The M-SSA exhibits the largest HT, which is $50\%$ greater than that of the MSE predictor (as constrained), and the second-largest target correlation, being only marginally outperformed by the MSE predictor in this dimension. It also outperforms the univariate benchmark HP-C on both accounts. 
It is important to emphasize that the results presented in the table are contingent upon the bivariate VARMA model. Specifically, for the classic HP-C the values would differ if the univariate ARMA model from the previous section were utilized instead.  The derivation of these `true' performance metrics, presented in the first row of the table, is discussed in the appedix. In any case, the M-SSA package provides comprehensive guidance on the sequence of operations and transformations necessary to replicate all results. 
Note also that we exclude the univariate MSE from the subsequent benchmark comparisons since it is outperformed by the bi-variate (MSE) design.\\

The MA-inversion (impulse response function) of the VARMA-model, along with the nowcast weights, is presented in Fig.\eqref{MA_inv_VARMA}. They indicate that the CLI serves as an important explanatory variable for INDPRO. The rightward shift of the majority of the CLI MA-weights (top left) further substantiates the leading characteristic of the indicator. 
\begin{figure}[H]\begin{center}\includegraphics[height=3in, width=6in]{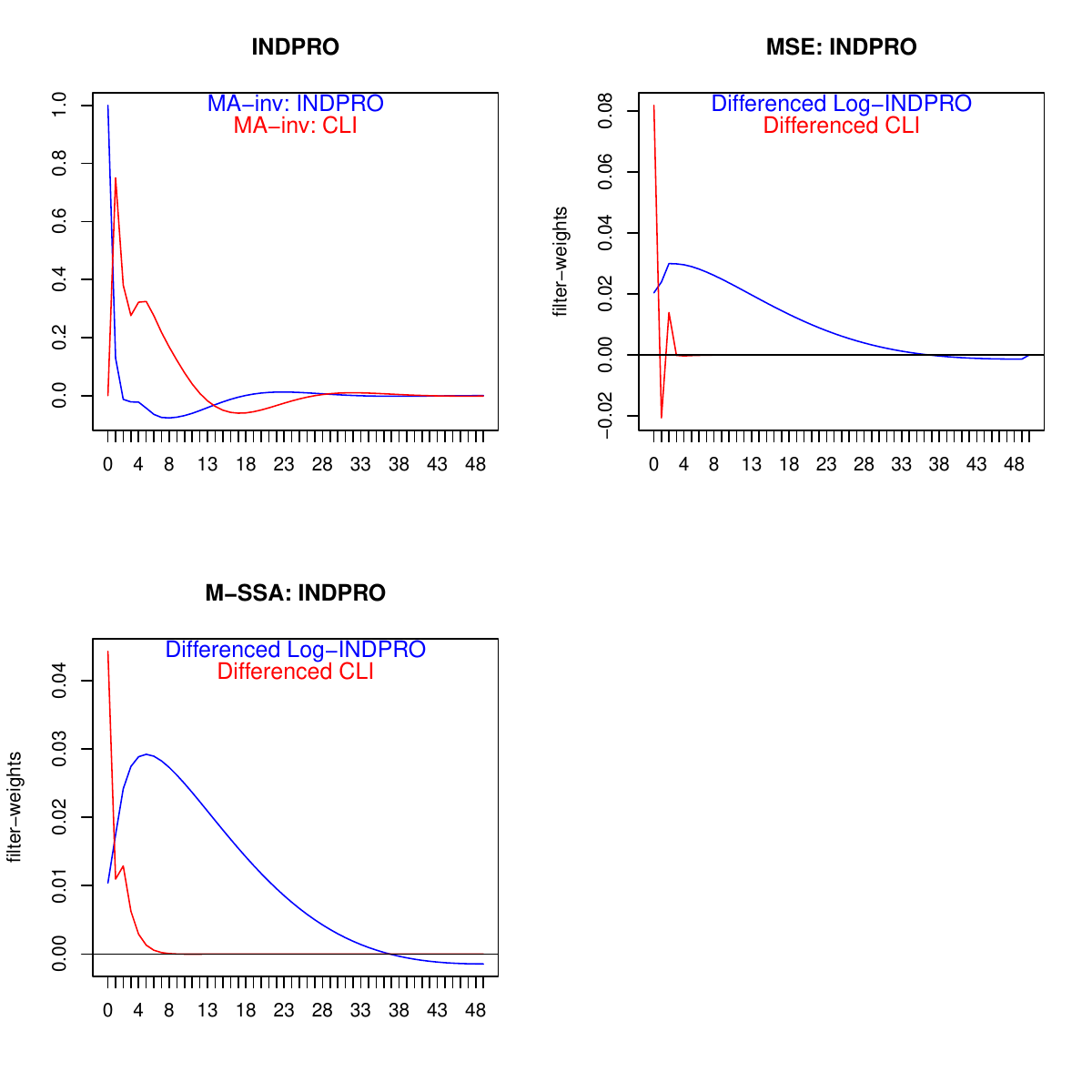}\caption{MA-inversion of VARMA(3,1) model (top left). MSE (top right) and M-SSA nowcasts (bottom) for INDPRO: all series are truncated to length 50 to facilitate visual comparison.\label{MA_inv_VARMA}}\end{center}\end{figure}The sample and expected performances of the classic univariate HP-C filter and the univariate SSA, alongside the bivariate MSE and M-SSA nowcasts are compared in Tables \eqref{tab_bi_1} 
(target correlations) and \eqref{tab_bi_2} (HTs), with standard errors indicated in parentheses. All expected values are based on the bivariate VARMA model, except for the SSA (in the second columns of the tables), which are derived from the univariate ARMA model (therefore, direct comparisons of expected numbers with the SSA should be approached with caution). The sample estimates are derived from data spanning  from 1955-02-01 to  2024-09-01, resulting in a net number of observations of $N=605$\footnote{We utilize the full available common sample of INDPRO and CLI to mitigate the impact of excessive variance in the sample HT. The net sample length is calculated as $T-L-L^{left-tail}=836-201-30=605$, where $L$ represents the filter length (initialization for all designs) and $L^{left-tail}$ denotes the length of the left `acausal' side of the target, thereby restricting comparisons to time points $t \leq T-L^{left-tail}$ near the end of the sample.}. A comparison of expected and sample HTs  indicates that the latter are consistently larger, with a statistically significant finding for the MSE nowcast, partially due to the occurrence of protracted recession episodes\footnote{Alternative models employing shorter subsamples partially alleviate this issue without impacting our primary conclusions; thus, they are not included in this discussion.}. 
The SSA and M-SSA designs exhibit identical expected HTs by design (noting, however, that the underlying models differ). Nonetheless, sample HTs highlight the M-SSA as the smoothest nowcast, characterized by the fewest zero-crossings. This finding is further supported by the corresponding filter outputs displayed in Fig. \eqref{INDPRO_CLI_multi_out}. 
The M-SSA demonstrates superior sample performances compared to the univariate designs, both in terms of target correlation and HT. Additionally, the CCF depicted in Fig. \eqref{comparison_bivariatecli_univariate} (bottom right panel) shows that M-SSA and HP-C are now coincident, unlike the SSA discussed in the previous section, which displayed a lagging behavior. The figure further highlights a two-month lead of the M-SSA design over the univariate SSA (top right). Finally, the bivariate MSE nowcast outperforms in terms of expected and sample target correlations, noting that the expected value of the SSA in Table \eqref{tab_bi_1} cannot be directly compared, due to differing underlying models. \\

\begin{table}[ht]
\centering
\begin{tabular}{rllll}
  \hline
 & Cor. HP-C & Cor. SSA & Cor. M-SSA & Cor. MSE \\ 
  \hline
Expected & 0.65 & 0.755 & 0.736 & 0.744 \\ 
  Sample & 0.735(0.074) & 0.717(0.085) & 0.77(0.077) & 0.791(0.073) \\ 
   \hline
\end{tabular}
\caption{Expected and sample target correlations of univariate HP-C and SSA (first two columns) and bivariate M-SSA and MSE nowcasts, with standard deviations in parentheses.} 
\label{tab_bi_1}
\end{table}
\begin{table}[ht]
\centering
\begin{tabular}{rllll}
  \hline
 & HT HP-C & HT SSA & HT M-SSA & HT MSE \\ 
  \hline
Expected & 11.132 & 17.263 & 17.263 & 11.011 \\ 
  Sample & 15.512(2.183) & 18.508(6.629) & 24.462(4.922) & 19.875(3.081) \\ 
   \hline
\end{tabular}
\caption{Expected and sample HTs of univariate HP-C and SSA (first two columns) and bivariate M-SSA and MSE nowcasts, with standard deviations in parentheses.} 
\label{tab_bi_2}
\end{table}\begin{figure}[H]\begin{center}\includegraphics[height=3in, width=6in]{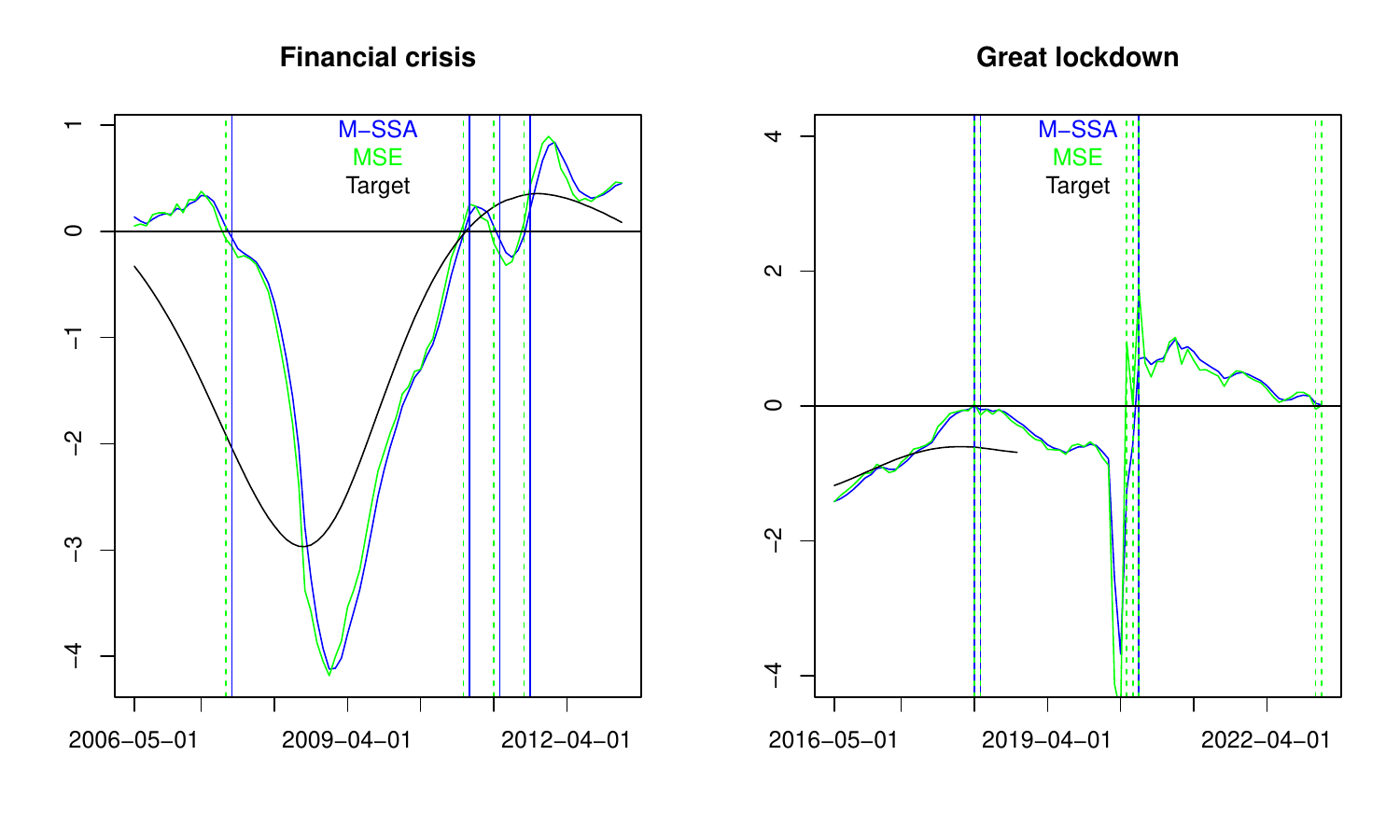}\caption{Bivariate M-SSA (blue), MSE (green) and acausal target (black) with zero-crossings of M-SSA (vertical blue lines) and MSE (vertical green lines).\label{INDPRO_CLI_multi_out}}\end{center}\end{figure}\begin{figure}[H]\begin{center}\includegraphics[height=3in, width=6in]{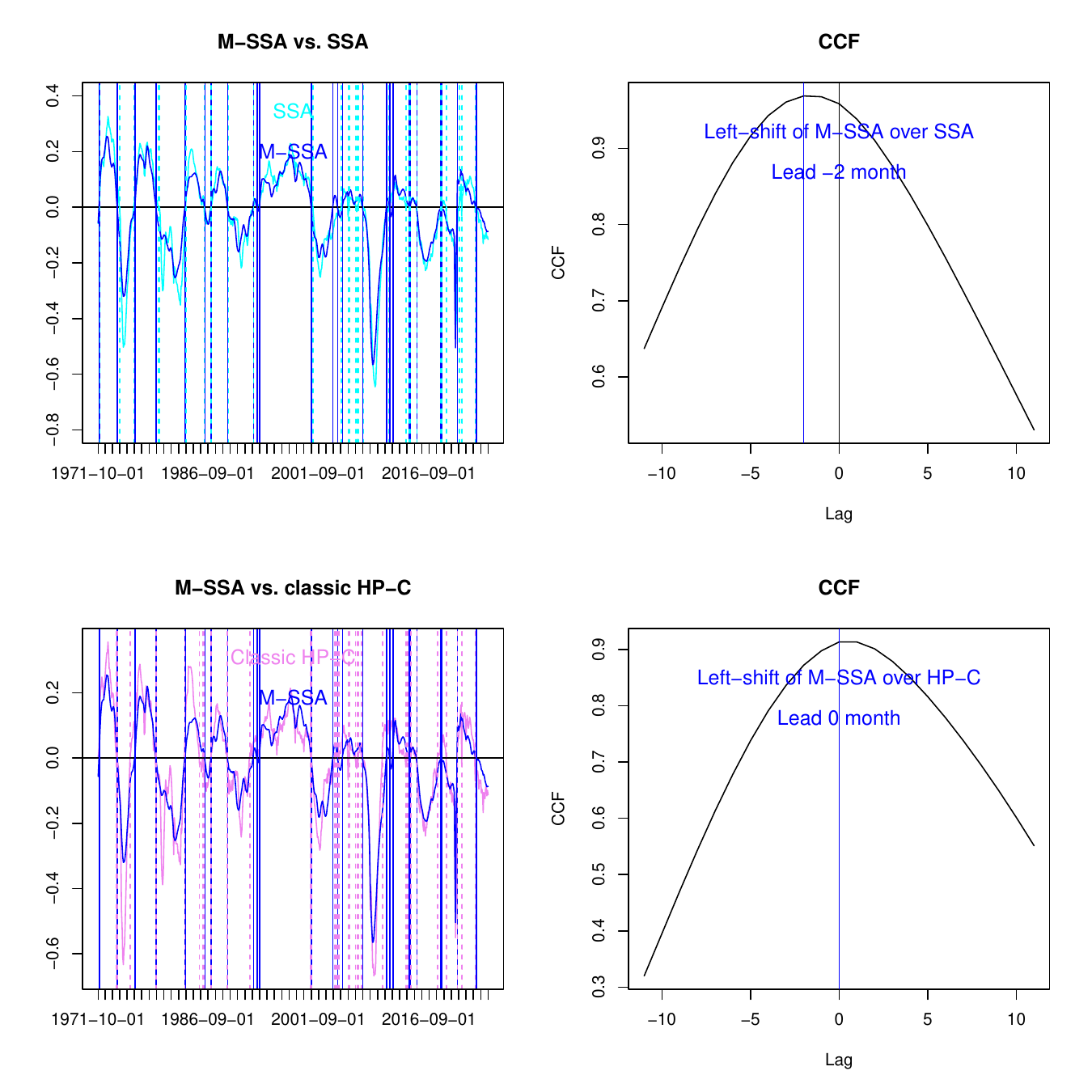}\caption{INDPRO nowcasts: indicators (left) and CCF (right); M-SSA (blue) vs. SSA (cyan) (top left) and M-SSA (blue) vs. classic HP concurrent (violet) (bottom left).  Vertical lines indicating zero-crossings are color-coded accordingly. The CCFs on the right demonstrate that M-SSA leads  the (univariate) SSA and is coincident with the classic HP concurrent filter.\label{comparison_bivariatecli_univariate}}\end{center}\end{figure}To conclude, it is interesting to examine the filtered components of the INDPRO M-SSA nowcast, which are derived from the sub-filters applied to both INDPRO and CLI. Figure \eqref{components_m_ssa} illustrates these components alongside the aggregate for both the financial crisis and the `great lockdown'. Prior to and following the crises, the dynamics of the CLI component are relatively weak, and the nowcast (in blue) aligns with the INDPRO component (in green). However, during the crises, the CLI (in red) exhibits increased dynamics, leading to a divergence of the aggregate M-SSA from the INDPRO component. Notably, the lead introduced by the CLI results in a leftward shift of the aggregate nowcast in relation to the INDPRO component.
\begin{figure}[H]\begin{center}\includegraphics[height=3in, width=6in]{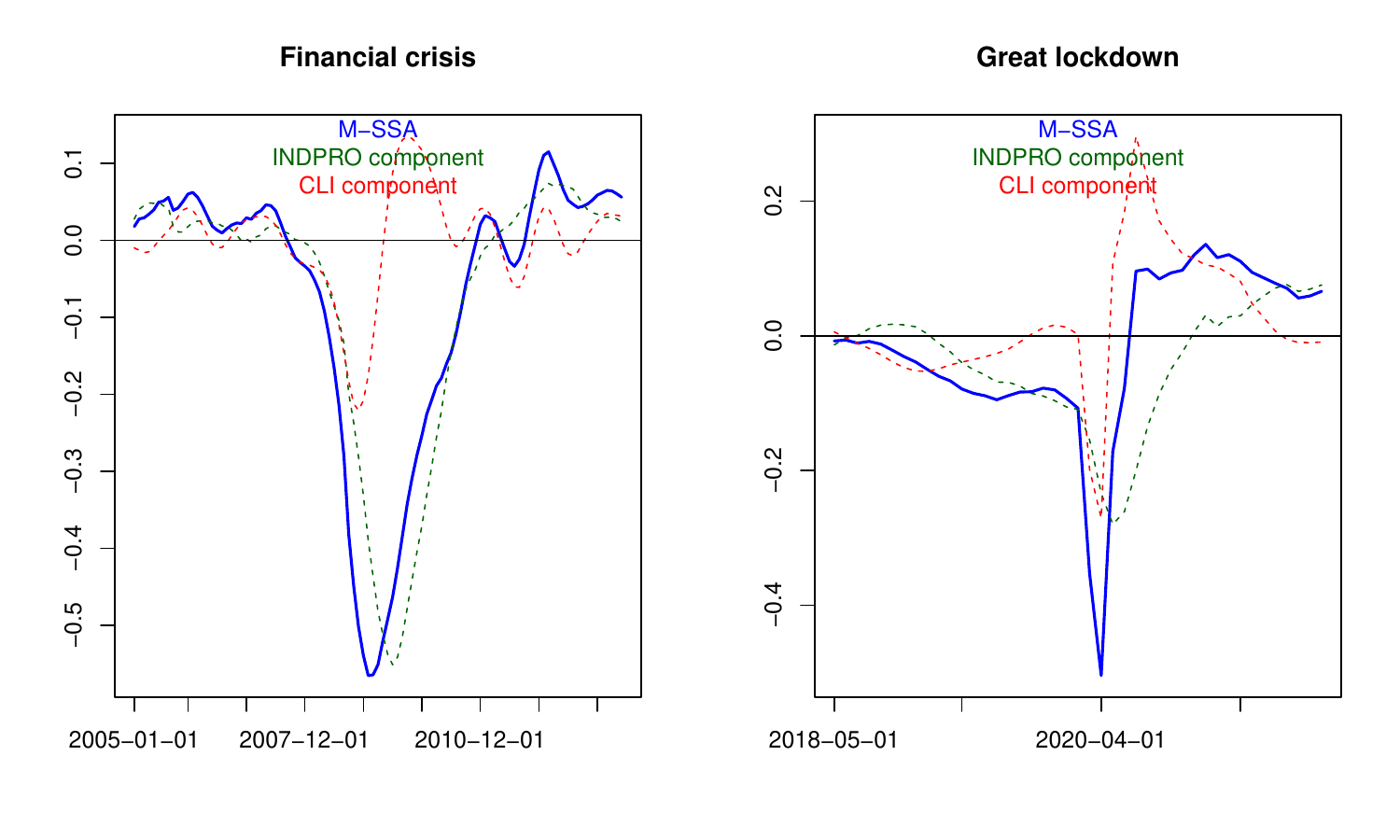}\caption{The INDPRO M-SSA nowcast is represented by the blue solid line, along with its components derived from the sub-filters applied to INDPRO (green dashed line) and the filtered CLI (red dashed line): financial crisis (left) and `great lockdown' (right). The M-SSA nowcast is the aggregate or sum of the two components. \label{components_m_ssa}}\end{center}\end{figure}

\section{Conclusion}\label{conclusion}

We propose a multivariate extension of the SSA, termed M-SSA, as introduced in Wildi (2024). M-SSA represents an innovative approach to prediction and smoothing, highlighting target correlation and the sign accuracy of the predictor while adhering to a novel HT constraint. The conventional MSE approach corresponds to unconstrained M-SSA optimization, subject to a specific scaling factor. However, our proposed criterion captures more nuanced performance metrics related to accuracy and smoothness (AS) characteristics.\\

In its primal formulation, M-SSA seeks to optimally track the target while enforcing noise suppression; conversely, in its dual formulation, the predictor minimizes zero-crossings for a predetermined level of tracking accuracy. A single hyperparameter controls the AS trade-off, tracing a new AS efficient frontier. Whereas the MSE predictor occupies a single spot on this curve, our approach spans the entire frontier.\\

The M-SSA predictor is both interpretable and attractive due to its inherent simplicity, as it integrates fundamental concepts of prediction, such as sign accuracy, mean squared error, and smoothing requirements. The predictor operates under a simple time-reversible unstable difference equation. Importantly, the stability of the predictor is maintained through the presence of implicit `zero' boundary conditions.\\

The M-SSA framework can effectively address forecasting, smoothing, and signal extraction applications, depending on whether its target specification is causal or non-causal, as well as whether it is all-pass or not. This methodology enables the customization of benchmark predictors in terms of smoothness and accuracy performance. Our applications in forecasting, smoothing, and signal extraction demonstrate the customization of MSE and HP benchmarks. Furthermore, our M-SSA package expands this analysis to incorporate additional contemporary business-cycle tools. Notably, the bivariate signal extraction application highlights the significance of leading indicators within a nowcasting framework.

\section{Appendix}

The derivation of the `true' performance metrics, presented in the first row of the table, is briefly elucidated herein. The expected values are derived from the MA inversions of the filters, as outlined in Section \eqref{ext_stat}. Specifically, the target correlation and first-order ACF of the M-SSA are derived from the expressions in Criterion \eqref{gen_stat_x}, relying on the MA-inversion $\boldsymbol{\Xi}_k$, for $k = 0, \ldots, L-1$, associated with the VARMA process. The corresponding MA weights for the first target, INDPRO, are depicted in the top left panel of Fig. \eqref{MA_inv_VARMA}. The target correlations and first-order ACFs for the bivariate MSE and the univariate HP-C benchmarks are obtained using analogous proceedings. In the case of HP-C, the original M-SSA coefficients $\mathbf{B}_k$ are substituted with $\mathbf{HP}_k^{\text{concurrent}} = \begin{pmatrix} hp_k^{\text{concurrent}} & 0 \\ 0 & hp_k^{\text{concurrent}} \end{pmatrix} $, for $k = 0, \ldots, L-1 $ in the expressions for the target correlation and the first-order ACF, where $ hp_k^{\text{concurrent}}$ represents the weights of the classic one-sided HP-C. The zero off-diagonal elements of the one-sided $\mathbf{HP}_k^{\text{concurrent}}$, $k\geq 0$, indicate that each target is influenced solely by its own time series (univariate design). The bivariate MSE predictor can be formulated as described by McElroy and Wildi (2020). Specifically, the expected values presented in the table are contingent upon the MA inversion of the bivariate MSE nowcast, which is conducted in two distinct stages. First, the convolution of $\boldsymbol{\Xi}_k$ with the (truncated) two-sided HP weights $\mathbf{HP}_k^{\text{two-sided}} = \begin{pmatrix} hp_k^{\text{two-sided}} & 0 \\ 0 & hp_k^{\text{two-sided}} \end{pmatrix}$, for $k = -(L-1), \ldots, L-1$, is computed. Second, all weights associated with future innovations $\boldsymbol{\epsilon}_{t-k}$, for $k < 0$, are excluded from this convolution, thereby retaining  only the weights relevant to the causal component of the inverted acausal target.
The resultant MA-inverted MSE predictor is then incorporated into the expressions of Criterion  \eqref{gen_stat_x} (in place of $(\mathbf{b} \cdot \boldsymbol{\xi})_i$) to derive the corresponding first-order ACF and target correlation. To finish, the HTs reported in Table \eqref{tab_bi_0} are derived from Equation \eqref{ht}, utilizing the previously calculated ACFs. The effective predictors, such as those illustrated in the top right and bottom panels of Fig. \eqref{MA_inv_VARMA}, are obtained through the deconvolution of the MA-inverted designs, as described in Equation \eqref{con_inv}. The M-SSA package provides comprehensive guidance on the sequence of operations and transformations necessary to replicate all results, including those presented in the aforementioned table. 
Note also that we exclude the univariate MSE from the benchmark comparisons since it is outperformed by the bi-variate (MSE) design.

\end{document}